\DeclareMathOperator{\rank}{rank}
\newcommand\scalemath[2]{\scalebox{#1}{\mbox{\ensuremath{\displaystyle #2}}}}
\newcommand{\kk}{\Bbbk}
\newcommand{\TT}{\mathbb{T}}
\newcommand{\ZZ}{\mathbb{Z}}
\title{Nondimensionalization is more science than art%
\thanks{Submitted to the editors November 25, 2025.
\funding{E.D. and H.A.H are grateful to the John Fell Fund that initiated this research. H.A.H. gratefully acknowledges funding from the
Royal Society RGF$\backslash$EA$\backslash$201074 and UF150238 and the EPSRC supporting the center to center collaboration grant EP$\backslash$Z531224$\backslash$1 and Erlangen Programme for AI EP$\backslash$Y028872$\backslash$1.}}
}
\author{
Richard Tanburn%
    \thanks{Mathematical Institute, University of Oxford, Oxford, UK}  %1
\and Danny Hendron%
      \thanks{Department of Mathematics, University of York, York, UK  (\email{danny.hendron@york.ac.uk}, \email{emilie.dufresne@york.ac.uk}}) %3
\and Philip Maini%
      \footnotemark[2] %4
\and Silviana Amethyst%
    \thanks{Max Planck Institute of Molecular Cell Biology and Genetics, Center for Systems Biology Dresden, and Dresden University of Technology, Dresden, DE
  (\email{amethyst@mpi-cbg.de}, \email{harrington@mpi-cbg.de})} % 5
\and Emilie Dufresne%
    \footnotemark[3]  \thanks{Equal contribution.}% these [#] refer to the author number, not the thanks number.
\and Heather A. Harrington%
    \footnotemark[2] \footnotemark[4] \footnotemark[5]
  }
\date{\today}
\begin{document}

\maketitle

\begin{abstract}

When faced with a mathematical model, often the first step is to reduce the complexity of the model by turning variables and parameters into dimensionless quantities.  This process is often performed by hand, relying on a skill practiced over many years, and attempted for small models. Nondimensionalization is often considered an art, as there is no formal method accessible to applied scientists. Here we show how to systematically perform nondimensionalization for arbitrarily sized models described by rational first order ordinary differential equations. We translate and extend an existing approach for computing rational invariants of the maximal scaling symmetry, which combines ideas from differential algebra, invariant theory and linear algebra, to the setting arising in biological models. The modeler inputs the system of equations and our implemented algorithm outputs the nondimensional quantities for the corresponding nondimensionalized model. We extend the algorithm to include initial conditions, and the modeler's choice of invariants, thereby including a larger class of nondimensionalizations. 
We further prove that 
any dimensionally consistent change of variables preserves the dimension of the maximal scaling symmetry. 
We showcase the framework on various models, including the classical Michaelis-Menten equations, which serves as a benchmark for asking and answering specific modeling questions.  
% \sidecomment{This version: \today}
\end{abstract}

% REQUIRED
\begin{keywords}
Dimensional analysis, Model reduction, Ordinary differential equations, Nondimensionalization, Rational invariant, Scaling symmetry
\end{keywords}

% REQUIRED
\begin{MSCcodes}
%\href{https://cran.r-project.org/web/classifications/MSC.html}
13A50, 34C14, 34C20, 37N25, 00A71
% 13A50: Actions of groups on commutative rings; invariant theory
% 34C14: Symmetries, invariants
% 34C20: Transformation and reduction of equations and systems, normal forms
% 37N25 — Dynamical systems in biology, chemistry, medicine
% 00A71 — Theory of mathematical modeling
\end{MSCcodes}

\section{Introduction} \label{sec:introduction}

An introductory course in mathematical modeling of physical and biological systems starts with first-order ordinary differential equations (ODE's), which involve variables that vary over time and parameters that represent rates constants. The starting point for analyzing such models is nondimensionalization, which is the process of removing an arbitrary choice of units. Nondimensionalization may reduce the number of parameters in the system 
e.g., reducing the original parameter space $\left(k_1, k_2\right) \in \mathbb{R}^2$ by a dimension by considering a rational function of the parameters $\frac{k_1}{k_2} \in \mathbb{R}$.  
Indeed, the Buckingham-$\pi$ theorem states that if we have $n$ physical variables and parameters, each expressed in some combination of $m$ fundamental independent units (e.g., meters, seconds, moles, liter), then it is possible to express the system in terms of $n-m$ nondimensional quantities \cite{buckingham1914physically, White2008}.  The classical nondimensionalization approach is limited to the reduction of physical units \cite{howison2005practical}. As Segel and Slemrod highlighted in their seminal SIAM Review article \cite{Segel1989}, finding the fundamental dimensions and ``right'' choice of dimensionless quantities is often computed by hand and has been described as an art.

A simple and classic motivating example is enzyme substrate reactions.  Two chemical species $E$ and $S$ reversibly combine to form $ES$, which decomposes irreversibly into $E$ and $P$ \cite{murray}.
\[
E+S\leftrightarrow ES \rightarrow E+P
\]
Incorporating conservation relations, the reactions translate into a parameterized ODE model: 
\begin{align} 
\frac{ds}{dt} = -k_1(e_0 - c)s + k_{-1}c  \qquad \qquad
\frac{dc}{dt} = k_1(e_0 - c)s - k_{-1} c - k_2 c, \label{eq:motivational_example_start_ODE}
\end{align}
where $e_0$ is the initial free enzyme concentration.  The two fundamental physical units in the system are concentration and time:  mol/L and seconds.  The model has more parameters than necessary, and one would like to do analysis on a simpler, nondimensionalized model.

Nondimensionalization is an algebraic process.  For example, one can scale time so that it is measured in seconds or years while not fundamentally changing the system; therefore scaling translates to algebraic torus actions ~\cite{Wynn2014}.
Expressing the units of variables and parameters as combinations of the fundamental independent units (e.g., concentration in moles per liter) induces a scaling symmetry on the variables and parameters. 
Each physical fundamental unit corresponds to an independent one-dimensional scaling symmetry.  The goal of both classical dimensional analysis and the Buckingham-$\pi$ theorem is to remove these symmetries.
Scaling symmetries algebraically translate to rational invariants, and are a special case of a Lie symmetry, which have been used to help solve (higher order) ODE's and PDEs \cite{Hydon2000}. The Lie symmetry approach can in principle be used to find all scaling symmetries and perform nondimensionalization \cite{HubertSedoglavic,Lemaire2012,Sedoglavic2006}; however, these methods are neither easily accessible to applied scientists nor algorithmic \cite{Gilmore2008}.

Hubert and Labahn \cite{Hubert2013c} proposed a purely algebraic and computationally efficient nondimensionalization algorithm relying on linear algebra over the integers.  This powerful and scalable scheme computes the maximal number of scaling symmetries that a system of ODE's can admit.  Na\"ively applying their method to \eqref{eq:motivational_example_start_ODE} yields
\begin{align*}
\frac{du}{d\tau} &= c_{0} v + u v - u  & 
\frac{dv}{d\tau} &= - c_{0} v - c_{1} v - u v + u.
\end{align*}
That is, we do \emph{not} recover the well-known system from \cite{murray}. For that, we must take into account initial conditions and desired invariants or known model constraints.  In this paper, we extend their method, and we can indeed reduce  \eqref{eq:motivational_example_start_ODE} to the familiar dimensionless system from \cite[(6.13)]{murray}:
\begin{align*}
\frac{du}{d\tau} & = -u + (u+K-\lambda)v & u(0)&=1 \\
\epsilon \frac{dv}{d\tau} &= u - (u+K)v    &  v(0)&=0.
\end{align*}
Taking into account scaling on the time variable and several additional constraints, one can obtain in the same framework the familiar reduced system for the slow timescale\footnote{
The details of these examples appear in Section~\ref{sec:applications}.
}:
\begin{align*}
\frac{du}{dT} & = -(1+\sigma)u + \sigma uv + \frac{\rho}{1+\rho}v 
&u(0) &= 1 \\
\epsilon \frac{dv}{dT} & = (1+\sigma)u - \sigma uv - v 
&v(0) &= 0. 
\end{align*}

Independently, Meinsma \cite{meinsma2019dimensional}  notes that 
we might find more scaling symmetries than there are physical fundamental units, but does not offer a practical algorithm. 
The additional scaling symmetries could be understood as structural fundamental units, a refinement of physical fundamental units. Thus, a unified theory of scaling symmetries that respects physical fundamental units yet permits the greatest possible symmetry reduction is needed.

Here we unify classical nondimensionalization and the algebraic algorithmic approach, considering both physical and structural units, to create a theoretical yet practical model reduction procedure that remains faithful to applications. We prove that when we require changes of variables to be dimensionally consistent (i.e., only quantities with the same units can be added together), then the maximal number of scaling symmetries is the same whether the method is applied to the reduced system or the original.
Further, we broaden the reduction algorithm to incorporate practical extensions, including initial conditions, conserved quantities, known parameter combinations, and nondimensional quantities. 
We showcase this framework using the \emph{Michaelis-Menten kinetics} \cite{Gunawardena2014,MichaelisMenten} chemical reaction network, recovering the standard nondimensionalization as well as the alternative Segel-Slemrod formulation.  We also include an example with a reduction beyond the number of fundamental units in Section~\ref{sec:cell_cycle_control}.
This algorithmic approach is implemented in the Python software library \verb|desr| (Differential Equation Symmetry Reduction) \cite{desr}.

Section~\ref{sec:preliminaries} lays out the mathematical preliminaries of the tools and theory. 
We start with a description of dimensional analysis and scaling symmetries (Section~\ref{sec:dimensional_analysis}). Next we define a scaling symmetry associated to an integer matrix (Section~\ref{sec:matrix_scaling_actions}), and describe the main computational tools: Hermite normal form (\ref{sec:hermite_normal_form}), Hermite multipliers (\ref{sec:hermite_multiplier}) and how to construct the exponent matrix (Section~\ref{sec:exponent_matrix}). All of these preliminaries are tied together in Section~\ref{sec:maximal_scaling_symmetry} to describe the method for computing the maximal scaling action of a system of ODE's.

Section~\ref{sec:eliminating_scaling_symmetries} contains the mathematical pipeline, theoretical and algorithmic contributions.   We begin by covering some known results.  Section~\ref{sec:reduced_systems} describes the algorithmic reduction of a system, including the independent variable, using Theorem~\ref{thm:hubert_labahn_6.5} via Hermite multipliers on the exponent matrix.  Section~\ref{sec:nondimensionalization} describes how to use explicitly find the invariants of a system under the maximal scaling action, and Theorem~\ref{thm:hubert_labahn_7.1} describes the substitutions for reduction. Further, the method generalizes to broader forms of Hermite multipliers. In Section~\ref{sec:practical_considerations}, we improve upon the previous reduction method to include initial conditions or known conserved quantities or constants that the modeler may wish to incorporate.  Prescribed parameter combinations may not be dimensionally compatible with a model, so Theorem~\ref{thm:when_can_rewrite_using_parameter_combinations} in Section~\ref{sec:extending_choice_invariants} supports extending a compatible choice of invariant parameters to a complete set of invariants.  Finally, in Section~\ref{sec:Change_of_variables} we prove Theorem~\ref{thm:ChangeOfVariables}, establishing that the dimension of the model reduction is the same regardless of the change of variables, as long as that change is dimensionally consistent.  

Section~\ref{sec:applications} showcases the nondimensionalisation pipeline on mathematical models. We take as a case study, Michaelis-Menten kinetics:  Section~\ref{sec:example_michaelis_menten} illustrates the basic method; Section~\ref{sec:michaelis_menten_comparison_with_classical} covers the incorporation of initial conditions and the classical Michaelis constant $K_m$; Section~\ref{sec:application_segel_slemrod} uses the reduction method in conjunction with multiple time scales; Sections~\ref{sec:application_new_nondimen} and \ref{sec:michaelis_menten_four_variables} show that one can obtain new nondimensionalizations of familiar systems.  Section~\ref{sec:cell_cycle_control} presents this approach works on larger system of cell cycle control. Finally, Section~\ref{sec:compartment model} illustrates the impact of Theorem~\ref{thm:ChangeOfVariables} by way of a linear compartment vaccine injection model.

We conclude in Section~\ref{sec:conclusion}.

\section{Preliminaries} \label{sec:preliminaries}

We begin by bringing together relevant background material in order to provide the tools for algorithmically nondimensionalizing a dynamical system model. This is done by seeing how to compute the \emph{scaling symmetries} of a system of differential equations, and using this information to then produce a reduced system. The algorithm itself relies on linear algebra over the integers, specifically, the \emph{Hermite normal form} of an integer matrix.

\subsection{Dimensional analysis  and scaling symmetries} \label{sec:dimensional_analysis}

The quantities in a model have units, and all of these units can be written as a product of powers of independent fundamental units. For example, the speed of an object in meters per second has units \(m/s=ms^{-1}\) and the area in square meters has units \(m^2\), both expressed in terms of the fundamental units of time \(s\) and distance \(m\). There usually is a ``natural'' choice of fundamental units depending on the situation described, but from a mathematical point of view they are all equivalent (like all the choices of basis of a vector space). For example, it is natural to describe the speed of an object in terms of units of distance and time, but from a mathematical point of view, it would be equally valid to take the units of speed and time as fundamental, and describe distance in terms of speed and time. Notably, in the International System of Units (SI) all units are expressed in terms of some universal constants which have composite units.  The speed of light in a vacuum is such a constant, and a meter (the base unit of length in SI) is defined to be the distance traveled by light in a vacuum during a time interval of \(1/299 792 458\) of a second \cite{newell2019international}.

One key observation (already present in \cite{Wynn2014}) is that we can ``change the units'' of fundamental units independently. For example, in a system involving time and distance we can change from seconds to hours as our unit of time {\em without} having to change the unit of distance. This corresponds to scaling: $1$ second is $1/60$ minutes, one day is $24$ hours, one second is $1000$ milliseconds, and so on. Under certain assumptions (the universe is infinite, matter is continuous, \ldots), we can scale by any positive real number to obtain new units. Moreover, when we rescale a fundamental unit, composite (or derived) units involving this fundamental unit are also scaled according to the relationship. 
For example, scaling from seconds to minutes and then hours, a speed of \(1\) meter per second corresponds to \(60\) meters per minute and \(3600\) meters per hour. 

Thus, if there are $k$ fundamental units, then we have $k$ independent actions of the multiplicative group of real positive numbers on the quantities involved in the model, or equivalently, an action of the $k$-dimensional real positive multiplicative group. 

Dimensionless quantities are ``composite quantities'' which are not affected by rescaling the fundamental units. By composite quantities, we mean rational functions of the quantities we started with. In other words, dimensionless quantities are the composite quantities invariant under the action of the $k$-dimensional real positive multiplicative group induced by the fundamental units. In particular, since all units can be described in terms of $7$ constants \cite{newell2019international}, this means that in the International System of Units (SI) there is a $7$-dimensional real positive multiplicative group acting on our physical universe.

The Buckingham-\(\pi\) Theorem famously says that if there are $n$ quantities and $k$ fundamental units, then all dimensionless quantities can be expressed in terms of $n-k$ independent dimensionless quantities \cite{buckingham1914physically}. When we think of fundamental units as corresponding to independent scaling actions, this result is a consequence of the basic result about the rational invariants of a linear action by an algebraic torus.%
\footnote{In Algebraic Geometry, the multiplicative group of a field is known at the \emph{algebraic torus}. This name is used because within the theory of algebraic groups, these groups play a similar role as the topological torus (a product of circle groups) plays in the theory of compact Lie Groups (see paragraph 8.5 in \cite{Borel1991}).}
Indeed, it is well known that the rational invariants of a linear action of an $k$-dimensional algebraic torus on a $n$-dimensional space are rational of transcendence degree $n-k$, meaning that all rational invariants can be written in terms of $n-k$ field generators (see for example \cite{VinbergPopov1989}).  As field generators are far from unique, so are ``basic dimensionless'' quantities.

In the following sections we introduce precise notation to describe and justify the method of Hubert and Labahn. Instead of using the scaling symmetries corresponding to the fundamental units present in the physical system, which we call \emph{physical fundamental units}, we start with the assumption that each quantity that appears in the equation has its own independent unit, and then use the equations to deduce relations between units in the model. 

Equations must be dimensionally consistent; that is, quantities that are added together to have the same units. The first step of the method is to find the maximal scaling symmetries consistent with the ODE's. These scaling symmetries always include the scaling symmetries corresponding to the physical fundamental units, but may also include more; Following Meinsma, we refer to these maximal scaling symmetries as \emph{structural fundamental units} \cite{meinsma2019dimensional}.
Once we have structural fundamental units, Hubert-Labahn's method produces a reduced system which only consists of dimensionless quantities. In particular, if we find more structural fundamental units than there are physical fundamental units, then we achieve more reduction than we would via traditional dimensional analysis.

\subsection{Matrix scaling actions} \label{sec:matrix_scaling_actions}

Let $z = (z_1(t),z_2(t),\ldots,z_n(t))$ be a vector of real-valued variables. Define
\begin{equation*}
    \frac{dz}{dt} \coloneqq \left(\frac{dz_1}{dt},\frac{dz_2}{dt},\ldots,\frac{dz_n}{dt} \right), \quad z^{-1} \coloneqq (z_1^{-1},z_2^{-1},\ldots,z_n^{-1}).
\end{equation*}

Let $*$ be element-wise multiplication with
\begin{equation*}
    (x_1,\ldots,x_n) * (y_1,\ldots,y_n) \coloneqq (x_1 y_1,\ldots,x_n y_n).\label{eqn:multiply_two_vectors}
\end{equation*}

For a vector of non-zero real numbers $x$ and matrix of integers $A \in M_{m\times n}(\mathbb{Z})$, define
\[
    x^A := (x_1^{A_{1,1}}x_2^{A_{2,1}}\cdots x_m^{A_{m,1}},\ldots,x_1^{A_{1,n}}x_2^{A_{2,n}} \cdots x_m^{A_{m,n}}).\]

The \emph{m-dimensional real algebraic torus} is the set \(\mathbb{T}^m\) of all vectors of non-zero real numbers of the form \(\lambda = (\lambda_1,\ldots,\lambda_m)\). It is a group under element-wise multiplication as defined above.

Consider a system of differential equations
\begin{equation}
    \frac{dz}{dt} = f(t,z) = (f_1(t,z),\ldots,f_n(t,z)).  \label{eqn:system_diff_eqns}
\end{equation}

\begin{definition}  \label{defn:scaling_symmetry}
    We say that $A \in M_{m \times n}(\mathbb{Z})$ induces a \emph{scaling symmetry} of a system \eqref{eqn:system_diff_eqns} if for any solution $z$ and any $\lambda \in \TT^m$ 
    \begin{equation*}
    \lambda^A * z= (\lambda_1^{A_{1,1}}\lambda_2^{A_{2,1}}\cdots \lambda_r^{A_{m,1}}z_1,\ldots,\lambda_1^{A_{1,n}}\lambda_2^{A_{2,n}} \cdots \lambda_r^{A_{m,n}}z_n)
    \end{equation*}
    is also a solution.
    
\end{definition}

\begin{example}
    Consider the system of differential equations
\begin{equation} \label{eqn:matrix_scaling_ex1}
    \frac{dz_1}{dt} = tz_1\left(1 - \frac{z_1^2}{z_2} \right), \quad \frac{dz_2}{dt} = tz_2\left(1 - \frac{z_2}{z_1^2} \right).
\end{equation}
   The matrix $A = (1 \, 2) \in M_{1 \times 2}(\mathbb{Z})$ induces a scaling symmetry since for all \(\lambda_1\in\TT\), $(\nu_1, \nu_2) = (\lambda_1)^A * (z_1,z_2) = (\lambda_1,\lambda_1^2) * (z_1,z_2) = (\lambda_1 z_1,\lambda_1^2 z_2)$ also satisfies \eqref{eqn:matrix_scaling_ex1}.
\end{example}

\subsection{Hermite normal form} \label{sec:hermite_normal_form}

In linear algebra, square diagonal matrices are particularly nice to work with: entries are precisely their eigenvalues.  One cannot always express a matrix in diagonal form; the `next best thing' is the \emph{Jordan normal form}, which transforms a square matrix into one which is `almost' diagonal.  
However, integer matrices cannot always be put into Jordan normal form using row (or column) operations.  This paper extensively uses integer linear algebra, so instead we use \emph{Hermite normal form}, which fortunately can be computed in polynomial time \cite{HMM}.

\begin{definition} \label{defn:row_hermite_normal_form} 
\cite[Section 2.4.2]{cohen}.
Let $B \in M_{m \times n}(\mathbb{Z})$. We say that $B$ is in \emph{row Hermite normal form} if it equals some matrix $H \in M_{m \times n}(\mathbb{Z})$ such that:
\begin{enumerate}
   \item $H$ is upper-triangular ($h_{ij} = 0$ for $i > j$);
   \item All zero-rows are located beneath the non-zero rows of $H$;
   \item The first non-zero entry of each row (called the \emph{pivot}) is positive, and is strictly to the right of the pivot of the row above;
   \item All elements below a pivot are zero, and all elements above a pivot are strictly smaller than it.
\end{enumerate}
\end{definition}

\begin{example} 
The following matrix is in row Hermite normal form: 
$$
    \begin{pmatrix}
    1 & 0 & 50 & -8 & 4 \\
    0 & 3 & 12 & -2 & 4 \\
    0 & 0 & 54 & -11 & 0 \\
    0 & 0 & 0 & 0 & 0
    \end{pmatrix}.
$$
\end{example}

There is an analogous definition of the \emph{column Hermite normal form}, which requires $H$ to be lower-triangular with zero-columns on the right. A pivot element is then the first non-zero entry of a column, which must lie beneath the previous pivot element, with zero-entries to the right and strictly smaller entries to the left. For example, $H = \bigl( \begin{smallmatrix}2 & 0 & 0\\ 3 & 5 & 0\\\end{smallmatrix}\bigr)$ is in column Hermite normal form. 

\subsection{Hermite multipliers} \label{sec:hermite_multiplier}

Given an arbitrary matrix $B \in M_{m \times n}(\mathbb{Z})$, we can transform $B$ into column Hermite normal form using Gaussian elimination as in standard linear algebra. Namely, we may reorder columns, multiply a column by $-1$, and/or add an integer multiple of one column to another.
This process turns out to be equivalent to finding a unimodular matrix%
\footnote{A \emph{unimodular} matrix is a square integer matrix with determinant $\pm1$. Unimodular is the integer analogue of invertible in classical linear algebra.} 
$V \in M_n(\mathbb{Z})$ such that $BV$ is in column Hermite normal form \cite[Theorem 2.4.3]{cohen}. $V$ is called the \emph{Hermite multiplier} of $B$. 
  
  If $B$ has rank\footnotemark $r$ and $m=r<n$, then we can partition $V$ further into
\begin{equation} \label{eqn:hermite_multiplier_decomp}
    V = \begin{bmatrix} V_{\mathfrak{a}} & V_{\mathfrak{b}} \end{bmatrix}, \quad \quad V^{-1} = W = \begin{bmatrix} W_{\mathfrak{a}} \\ W_{\mathfrak{b}} \end{bmatrix},
\end{equation}
\footnotetext{
As in classical linear algebra, for matrices of integers the column and row rank coincide, so the term \emph{rank} refers to either of these notions. Further, the rank of $B \in M_{m \times n}(\mathbb{Z})$ is the dimension of the vector space over the real (or complex) numbers spanned by its rows/columns.}
where $V_{\mathfrak{a}}$ is the first $r$ columns of $V$, and $W_{\mathfrak{a}}$ the first $r$ rows of $W$. 

There is an analogous definition of a Hermite multiplier $U \in M_r(\mathbb{Z})$ of $B$ such that $UB$ is in \emph{row} Hermite normal form.  Similarly to \eqref{eqn:hermite_multiplier_decomp}, when $B$ has full rank we can partition $U$ into
\begin{equation} \label{eqn:hermite_multiplier_decomp_square}
    U = \begin{bmatrix} U_{\mathfrak{a}} \\ U_{\mathfrak{b}} \end{bmatrix}, \quad \quad U^{-1} = X = \begin{bmatrix} X_{\mathfrak{a}} & X_{\mathfrak{b}} \end{bmatrix}.
\end{equation}

Since we use both kinds of Hermite multiplier, we write $V$ and $W$ when referring to column Hermite normal form, and $U$ and $X$ for row Hermite formal form. 

\begin{example} 
Take $B = \bigl( \begin{smallmatrix}3 & -1 & 0\\ 4 & 0 & 2\\\end{smallmatrix}\bigr)$. Using Gaussian elimination (not necessarily in a unique way) to transform $B$ into \emph{column} Hermite normal form:
\begin{equation*}
    B = \left( \begin{smallmatrix}3 & -1 & 0\\ 4 & 0 & 2\\\end{smallmatrix}\right) \rightsquigarrow \left( \begin{smallmatrix}0 & -1 & 0\\ 4 & 0 & 2\\\end{smallmatrix}\right) \rightsquigarrow \left( \begin{smallmatrix}0 & -1 & 0\\ 0 & 0 & 2\\\end{smallmatrix}\right) \rightsquigarrow \left( \begin{smallmatrix}-1 & 0 & 0\\ 0 & 0 & 2\\\end{smallmatrix}\right) \rightsquigarrow \left( \begin{smallmatrix}-1 & 0 & 0\\ 0 & 2 & 0\\\end{smallmatrix}\right) \rightsquigarrow \left( \begin{smallmatrix}1 & 0 & 0\\ 0 & 2 & 0\\\end{smallmatrix}\right) = H
\end{equation*}

Representing each transformation as a matrix, multiplied together they give a column Hermite multiplier $V$:
\begin{equation*}
    V = \left( \begin{smallmatrix}1 & 0 & 0\\ 0 & 1 & 0 \\ -2 & 0 & 1\\\end{smallmatrix}\right) \cdot \left( \begin{smallmatrix}1 & 0 & 0\\ 3 & 1 & 0 \\ 0 & 0 & 1\\\end{smallmatrix}\right) \cdot \left( \begin{smallmatrix}0 & 1 & 0\\ 1 & 0 & 0 \\ 0 & 0 & 1\\\end{smallmatrix}\right) \cdot \left( \begin{smallmatrix}-1 & 0 & 0\\ 0 & 1 & 0 \\ 0 & 0 & 1\\\end{smallmatrix}\right) \cdot \left( \begin{smallmatrix}1 & 0 & 0\\ 0 & 0 & 1 \\ 0 & 1 & 0\\\end{smallmatrix}\right) = \left( \begin{smallmatrix}0 & 0 & 1\\ -1 & 0 & 3 \\ 0 & 1 & -2\\\end{smallmatrix}\right).
\end{equation*}

\noindent
One can readily verify that $BV = H$.
\end{example}

The Hermite multiplier need not be unique. To remove this ambiguity, we introduce the \emph{normal Hermite multiplier}:
\begin{definition} \cite[Section 2.2]{Hubert2013c}. 
For $B \in M_{r \times n}(\mathbb{Z})$ of full row rank, let $V \in M_n(\mathbb{Z})$ be a Hermite multiplier of $B$, so that $H = BV$ is in column Hermite normal form. Then $V$ is said to be the \emph{normal Hermite multiplier} of $B$ if:
\begin{enumerate}
    \item $V = \begin{bmatrix} V_{\mathfrak{a}} & V_{\mathfrak{b}} \end{bmatrix}$ is such that $V_{\mathfrak{b}}$ is in column Hermite normal form.
    \item $V_{\mathfrak{a}}$ is reduced with respect to the pivot rows\footnotemark of $V_{\mathfrak{b}}$.
\end{enumerate}
\end{definition}

\footnotetext{This means that if $i_1<i_2<\ldots <i_{n-r}$ are the pivot rows for $V_{\mathfrak{b}}$ then for each $1 \leqslant j \leqslant n-r$, we have that $0 \leqslant (V_{\mathfrak{a}})_{i_j,k} < (V_{\mathfrak{b}})_{i_j,j}$ for each $1 \leqslant k \leqslant r$.}

\begin{theorem} \label{thm:normal_hermite_mult}
Two important properties of Hermite multipliers:
\begin{enumerate}
\item \emph{The normal Hermite multiplier exists and is unique} \cite[Proposition 2.3]{Hubert2013c}.

\item\emph{If $V$ is a Hermite multiplier of $B$, then the last $n-r$ columns of $V$ form a basis for the kernel of $B$} \cite[Proposition 2.4.9]{cohen}.
\end{enumerate}
\end{theorem}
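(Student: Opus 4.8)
The statement bundles two facts, and I would treat them separately, beginning with the second since it feeds into the first. For the kernel claim, the plan is to exploit that $V$ is unimodular together with the predictable block shape of the column Hermite normal form of a full row rank matrix. Write $H = BV$ and partition $V = [V_{\mathfrak{a}} \mid V_{\mathfrak{b}}]$ as in \eqref{eqn:hermite_multiplier_decomp}. Since $V$ is invertible over $\mathbb{Q}$ and $B$ has rank $r$, the product $H$ also has rank $r$; because $H$ is in column Hermite normal form, its $n-r$ rightmost columns are zero, so $H = [H_{\mathfrak{a}} \mid 0]$ with $H_{\mathfrak{a}} \in M_{r\times r}(\mathbb{Z})$ of rank $r$. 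Reading off the last $n-r$ columns of $BV = H$ gives $B V_{\mathfrak{b}} = 0$, so every column of $V_{\mathfrak{b}}$ lies in $\ker B$. These columns are linearly independent, as columns of the invertible matrix $V$, and number $n-r = \dim \ker B$ by rank-nullity, hence form a basis.

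For existence and uniqueness of the normal Hermite multiplier, the key preliminary step is to describe precisely how two Hermite multipliers of the same $B$ can differ. Since the column Hermite normal form $H$ is unique, any two Hermite multipliers $V, V'$ satisfy $BV = BV' = H$, so writing $Q = V^{-1}V'$ (a unimodular matrix) we get $HQ = H$. Partitioning $Q$ into blocks of sizes $r$ and $n-r$ and using that $H = [H_{\mathfrak{a}} \mid 0]$ with $H_{\mathfrak{a}}$ invertible over $\mathbb{Q}$, the equation $HQ = H$ forces $Q = \begin{bmatrix} I_r & 0 \\ Q_{21} & Q_{22} \end{bmatrix}$ with $Q_{22}\in M_{n-r}(\mathbb{Z})$ unimodular and $Q_{21} \in M_{(n-r)\times r}(\mathbb{Z})$ arbitrary. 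Hence $V' = [\,V_{\mathfrak{a}} + V_{\mathfrak{b}} Q_{21} \mid V_{\mathfrak{b}} Q_{22}\,]$, exhibiting the whole family of Hermite multipliers as a two-parameter gauge orbit of any fixed one.

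With this description in hand, existence becomes constructive. The kernel claim above shows $V_{\mathfrak{b}}$ is a basis of $\ker B$; choosing $Q_{22}$ to be the column Hermite multiplier of $V_{\mathfrak{b}}$ puts $V_{\mathfrak{b}} Q_{22}$ into column Hermite normal form, securing condition~(1). Then, since $V_{\mathfrak{b}}Q_{22}$ is in column Hermite normal form with pivots in rows $i_1 < \cdots < i_{n-r}$, I would process columns in increasing order, subtracting integer combinations of the columns of $V_{\mathfrak{b}}Q_{22}$ (that is, choosing $Q_{21}$) to drive each entry $(V_{\mathfrak{a}})_{i_j,k}$ into the range $[0,(V_{\mathfrak{b}})_{i_j,j})$, securing condition~(2). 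Uniqueness follows because the two conditions strip off the two pieces of gauge freedom independently: the kernel $\ker B$ is intrinsic to $B$, so its column Hermite normal form is unique, pinning down $V_{\mathfrak{b}}$; and with $V_{\mathfrak{b}}$ fixed, the reduction of the pivot-row entries modulo the pivots admits a unique representative, pinning down $Q_{21}$ and hence $V_{\mathfrak{a}}$.

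The delicate point that I expect to need the most care is verifying that the reduction in condition~(2) is both achievable and unique, which rests entirely on the triangular pivot structure of column Hermite normal form rather than on the block algebra of the gauge group, which is routine. Because the pivot of column $j$ of $V_{\mathfrak{b}}Q_{22}$ sits in row $i_j$ and that column vanishes in every row above $i_j$, a reduction at row $i_j$ leaves the already-reduced entries at $i_1,\ldots,i_{j-1}$ untouched; this non-interference is exactly what makes the column-by-column sweep terminate at a single reduced representative. Establishing this triangular cascade cleanly is where the argument must be written with the greatest care.
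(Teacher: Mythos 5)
The paper does not prove either claim itself; it cites \cite[Proposition 2.3]{Hubert2013c} and \cite[Proposition 2.4.9]{cohen}, so there is no in-paper argument to compare against. Your proof of part~1 is sound and is essentially the standard argument: the gauge computation $HQ=H$ with $H=[H_{\mathfrak{a}}\mid 0]$ and $H_{\mathfrak{a}}$ nonsingular correctly forces $Q=\bigl[\begin{smallmatrix} I_r & 0\\ Q_{21} & Q_{22}\end{smallmatrix}\bigr]$ with $Q_{22}$ unimodular, and the column-by-column reduction does cascade without interference because column $j$ of $V_{\mathfrak{b}}Q_{22}$ vanishes in rows $i_1,\ldots,i_{j-1}$. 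The uniqueness argument (column HNF of a basis of $\ker B$ is unique under right unimodular multiplication; then the pivot-row residues pin down $Q_{21}=0$) is also correct.

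There is, however, a genuine gap in your proof of part~2. You establish that the columns of $V_{\mathfrak{b}}$ are linearly independent and that their number equals $\dim\ker B$ by rank--nullity; this only shows they are a basis of the kernel as a $\mathbb{Q}$- (or $\mathbb{R}$-) vector space. The statement being proved, and the way the paper later uses it, requires a $\mathbb{Z}$-basis of the integer kernel: in the proof of Theorem~\ref{thm:when_can_rewrite_using_parameter_combinations} one needs every integer vector in $\ker A$ to be an \emph{integer} combination of the columns of $V_{\mathfrak{b}}$, so that the matrices $C$ and $E$ there are integral. A full-rank sublattice of the right dimension need not be all of $\ker B\cap\mathbb{Z}^n$ (compare $2\mathbb{Z}\subsetneq\mathbb{Z}$), so the dimension count does not close this. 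The fix is short and uses the unimodularity you already have in hand: given $x\in\ker B\cap\mathbb{Z}^n$, set $w=V^{-1}x\in\mathbb{Z}^n$; then $0=Bx=BVw=Hw=H_{\mathfrak{a}}w_{(1:r)}$, and since $H_{\mathfrak{a}}$ is nonsingular, $w_{(1:r)}=0$, whence $x=V_{\mathfrak{b}}w_{(r+1:n)}$ is an integer combination of the columns of $V_{\mathfrak{b}}$. With that addition your argument for part~2 is complete, and it also feeds cleanly into your existence proof for part~1, which implicitly uses that $V_{\mathfrak{b}}$ is a lattice basis of $\ker B$ when identifying all Hermite multipliers' $V_{\mathfrak{b}}$-blocks up to a unimodular right factor.
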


\subsection{Exponent matrix} \label{sec:exponent_matrix}

\begin{definition} \label{def:exponent_vector_matrix}
Let $f(z_1,\ldots,z_n)$ be a rational function, so $f = \frac{p}{q}$ for coprime\footnotemark \footnotetext{Two polynomials $p,q$ are \emph{coprime} if they have no common factors.} polynomials $p,q$. For each monomial $z_1^{a_1} z_2^{a_2} \ldots z_n^{a_n}$ appearing in $p$ and $q$, we define its \emph{monomial exponent vector} to be the vector $(a_1,\ldots,a_n)$ of powers of the variables in the monomial. The \emph{exponent matrix} $K_f$ of $f$ is computed in these two steps:
\begin{enumerate}
    \item Form a matrix whose columns are the monomial exponent vectors of each term in $p$ and $q$ (the order of the columns is arbitrary).
    \item Pick and remove a column of the matrix, and then subtract it from the remaining columns. The choice of column does not matter, but we usually choose the simplest monomial exponent vector of $q$.
\end{enumerate}
\end{definition}

\begin{example}

With $f = \frac{z_1 z_2}{1}$, so $p = z_1 z_2$ and $q = 1$, the monomial exponent vectors of $p$ and $q$ are $(1,1)$ and $(0,0)$ respectively. We pick the latter for subtraction and so we obtain $K_f = \bigl( \begin{smallmatrix}1\\ 1\\\end{smallmatrix}\bigr)$.
    
On the other hand, if $g = \frac{z_1 z_2 + z_3}{z_2 z_3}$, then $p = z_1 z_2 + z_3$ and $q = z_2 z_3$. This yields monomial exponent vectors $(1,1,0)$, $(0,0,1)$, and $(0,1,1)$. Subtracting and removing the last vector gives $K_g = \left( \begin{smallmatrix}1 & 0\\ 0 & -1 \\ -1 & 0\\\end{smallmatrix}\right)$. The columns of $K_g$  correspond to $\frac{z_1}{z_3}$ and $\frac{1}{z_2}$,  so removing from $K_g$ the column which corresponds to the monomial exponent vector $z_2 z_3$ corresponds to dividing both the numerator and denominator of $g = \frac{p}{q}$ by $z_2 z_3$.
\end{example}

\subsection{Maximal scaling action} \label{sec:maximal_scaling_symmetry}

Given the rational functions $f_1,\ldots,f_m$ in variables $z_1,\ldots,z_n$, we seek the maximal scaling action fixing each \(f_i\).

\begin{definition}
        The \emph{dimension} $r$ of a scaling action is the rank of the scaling matrix $A$.
    A scaling action $A$ is \emph{maximal} among scaling actions fixing a given list of rational functions \(\{f_1,\ldots,f_m\}\) if any other scaling action $B$ fixing \(\{f_1,\ldots,f_m\}\) satisfies $\rank(B) \leqslant \rank(A)$. 
\end{definition}

We now have the ingredients to compute the maximal scaling matrix $A$.
\begin{procedure} \label{proc:max_scaling}
For the rational functions $f_1,\ldots,f_m$, compute the dimension $r$ and scaling matrix $A$:
\begin{enumerate}
    \item Concatenate exponent matrices to form $K = [K_{f_1} \mid \ldots \mid K_{f_n}]$. 
    \item Compute the row Hermite normal form of $K$, so that $H = UK$. 
    \item Let $r$ be the number of zero-rows at the bottom of $H$, and $A \in M_{r \times n}(\mathbb{Z})$ be the bottom $r$ rows of the Hermite multiplier $U$. 
\end{enumerate} 
\end{procedure}

It is important to note that $A$ need not be unique; it can be acted upon by elementary row operations to give other equivalent scaling actions.

\begin{example}
Let $F = (f_1,f_2) = \big(\frac{z_3z_4 + z_1z_2}{z_1z_3}, \frac{z_1z_4 + z_3}{z_2 + z_5} \big)$. Form the exponent matrices for both functions, subtracting and removing the monomial exponent vectors for $z_1z_3$ and $z_5$ respectively from $f_1$ and $f_2$, giving
\begin{equation*}
K = [K_{f_1} \mid K_{f_2}] = 
    \left[\begin{array}{cc|ccc}
    -1 & 0 & 0 & 1 & 0 \\
    0 & 1 & 0 & 0 & 1 \\
    0 & -1 & 1 & 0 & 0 \\
    1 & 0 & 0 & 1 & 0 \\
    0 & 0 & -1 & -1 & -1
    \end{array}\right].     
\end{equation*}

The row Hermite normal form decomposition is
\begin{equation*}
UK =
    \left[\begin{array}{cccccc}
    0 & 1 & 1 & 1 & 1 \\
    0 & 1 & 0 & 0 & 0 \\
    0 & 1 & 1 & 0 & 0 \\
    0 & -1 & -1 & 0 & -1 \\
    \hline
    -1 & -2 & -2 & -1 & -2
    \end{array}\right] \cdot
    K 
    = \left[\begin{array}{cccccc}
    1 & 0 & 0 & 0 & 0 \\
    0 & 1 & 0 & 0 & 1 \\
    0 & 0 & 1 & 0 & 1 \\
    0 & 0 & 0 & 1 & 0 \\
    \hline
    0 & 0 & 0 & 0 & 0
    \end{array}\right] = H.
\end{equation*}
$H$ has one row of zeroes, so $r=1$. The bottom row of $U$ gives the scaling matrix $A = [1 \ 2 \ 2 \ 1 \ 2]$, which we have multiplied by $-1$ for simplicity since elementary row operations give equivalent scaling actions. Thus $A$ is the maximal scaling action, and indeed for any $\lambda \in \mathbb{T}^1 = \mathbb{R}^*$,
\begin{align*}
    f_1(\lambda^A * z) &= \frac{(\lambda z_1)(\lambda^2 z_2) + (\lambda^2 z_3)(\lambda z_4)}{(\lambda z_1)(\lambda^2 z_3)} = \frac{\lambda^3(z_1z_2 + z_3z_4)}{\lambda^3 z_1z_3} = \frac{z_1z_2 + z_3z_4}{z_1z_3} = f_1(z) \\
    f_2(\lambda^A * z) &= \frac{(\lambda z_1)(\lambda z_4) + \lambda^2 z_3}{\lambda^2z_2 + \lambda^2 z_5} = \frac{\lambda^2 (z_1 z_4 + z_3)}{\lambda^2 (z_2 + z_5)} = \frac{z_1 z_4 + z_3}{z_2 + z_5} = f_2(z).
\end{align*}
\end{example}

\begin{remark}  
    In Definition~\ref{def:exponent_vector_matrix}, we made arbitrary choices for both the ordering of the columns in $K_f$ and also which column we subtract and remove. The beginning of \cite[Section 5]{Hubert2013c} explains why different choices do not affect the maximal scaling action of $A$ on the rational function $f$.
\end{remark}

Let us use the results from \cite[Section 5]{Hubert2013c} to prove that Procedure~\ref{proc:max_scaling} computes a maximal scaling action. 
Suppose we are given rational functions $f_1,\ldots,f_m \in k(z_1,\ldots,z_n)$. Each $f_i$ is $\mathbb{T}_A$-invariant if and only if $AK_{f_i} = 0$. Thus $f_1,\ldots,f_m$ are $\mathbb{T}_A$-invariant if and only if $AK = 0$ where $K = [K_{f_1} \mid \ldots \mid K_{f_m}]$. So we need to check that our matrix $A$ satisfies this condition, and then prove the maximality of the scaling action.

\begin{prop} \label{prop:hubert5_1}
\cite[Proposition 5.1]{Hubert2013c}.
\emph{Let $Z \in M_{n \times k}(\mathbb{Z})$. Let $U \in M_n(\mathbb{Z})$ be unimodular such that $UZ=H$ is in row Hermite normal form, with precisely $r$ zero-rows. Let $A$ be the bottom $r$ rows of $U$, so that $A \in M_{r \times n}(\mathbb{Z})$. Then:}

\begin{enumerate}
    \item[(1)] $AZ=0$.
    \item[(2)] $A$ \emph{has column Hermite normal form} $A = [I_r \ \ 0]$.
    \item[(3)] \emph{An integer matrix $B$ satisfies $BZ=0$ if and only if there exists an integer matrix $M$ such that $B=MA$.}
\end{enumerate}
\end{prop}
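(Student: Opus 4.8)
The plan is to exploit the structure that the row Hermite normal form $H = UZ$ imposes, namely that the bottom $r$ rows of $H$ are zero. Since $A$ consists precisely of these bottom $r$ rows of $U$, statement (1) is almost immediate: the bottom $r$ rows of $UZ = H$ are zero by the definition of $r$ as the number of zero-rows, and these bottom rows of $UZ$ are exactly $AZ$. So $AZ = 0$ falls out directly.

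For statement (2), I would argue that $A$ has full row rank $r$. Because $U$ is unimodular, its rows are linearly independent over $\mathbb{Q}$, so in particular the bottom $r$ rows forming $A$ are independent and $\rank(A) = r$. An integer matrix of full row rank $r$ has a column Hermite normal form that is lower-triangular with $r$ positive pivots and $n-r$ trailing zero-columns. The claim is that this normal form is the especially clean $[I_r \ \ 0]$. To pin down that the pivots are all $1$ and the off-diagonal entries vanish, I would use the fact that $U$ is unimodular: since $U = \left[\begin{smallmatrix} U_{\mathfrak{a}} \\ A \end{smallmatrix}\right]$ (splitting off the top $n-r$ rows) has determinant $\pm 1$, the $r$ rows of $A$ must be part of an integer basis of $\mathbb{Z}^n$, i.e. they are \emph{unimodular rows} — extendable to a unimodular matrix. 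For such a matrix the Hermite normal form necessarily has all pivots equal to $1$, which combined with the HNF reduction conditions (entries left of a pivot strictly smaller, hence zero) forces exactly $[I_r \ \ 0]$.

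Statement (3) is the crux and the step I expect to demand the most care. The backward direction is trivial: if $B = MA$ then $BZ = M(AZ) = M\cdot 0 = 0$ by (1). The forward direction requires showing that the rows of $A$ span, over $\mathbb{Z}$, the entire integer kernel lattice $\{v \in \mathbb{Z}^{1\times n} : vZ = 0\}$ on the left. The key observation is that $U$ being unimodular means its rows form a $\mathbb{Z}$-basis of $\mathbb{Z}^{1 \times n}$; writing any $B$ with $BZ = 0$ in this basis as $B = N_{\mathfrak{a}} U_{\mathfrak{a}} + MA$ for integer matrices $N_{\mathfrak{a}}, M$, I would show $N_{\mathfrak{a}} = 0$. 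Multiplying on the right by $Z$ gives $0 = BZ = N_{\mathfrak{a}} U_{\mathfrak{a}} Z + M A Z = N_{\mathfrak{a}} H_{\mathfrak{a}}$, where $H_{\mathfrak{a}}$ is the top $n-r$ rows of $H$ — the nonzero rows of the row Hermite normal form.

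The main obstacle is then concluding $N_{\mathfrak{a}} = 0$ from $N_{\mathfrak{a}} H_{\mathfrak{a}} = 0$. This rests on $H_{\mathfrak{a}}$ having full row rank $n - r$, which holds because a row Hermite normal form has exactly as many nonzero (pivot-bearing) rows as its rank, and $\rank(H) = \rank(UZ) = \rank(Z) = n - r$ since $U$ is invertible. A matrix of full row rank over $\mathbb{Z}$ (hence over $\mathbb{Q}$) is right-cancellable on the left, so $N_{\mathfrak{a}} H_{\mathfrak{a}} = 0$ forces $N_{\mathfrak{a}} = 0$, giving $B = MA$ as required. I would be careful to track the dimensions throughout — $Z$ is $n \times k$, so the relevant kernel is the left kernel of $Z$, and all of $A$, $U_{\mathfrak{a}}$, $B$ act by left multiplication — and to invoke the decomposition \eqref{eqn:hermite_multiplier_decomp_square} only as bookkeeping for the block structure of $U$.
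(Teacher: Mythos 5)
The paper does not actually prove this proposition --- it is quoted from \cite[Proposition 5.1]{Hubert2013c} and used as a black box --- so there is no in-paper argument to compare yours against. Your proof is correct and is the standard one: (1) is immediate from the block structure of $UZ=H$; for (2), the generalized Laplace expansion of $\det U=\pm 1$ along its bottom $r$ rows shows the $r\times r$ minors of $A$ have gcd $1$, which forces every pivot of the column Hermite normal form to equal $1$ and hence, by the reduction condition, the form $[I_r \ \ 0]$; and for (3), writing $B=N_{\mathfrak{a}}U_{\mathfrak{a}}+MA$ in the row basis supplied by the unimodular $U$ and cancelling $N_{\mathfrak{a}}$ against the full-row-rank block $H_{\mathfrak{a}}$ (whose $n-r$ nonzero echelon rows are linearly independent) is exactly the right move.
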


Using Proposition~\ref{prop:hubert5_1}, setting $Z=K$ gives all information required to show that the matrix $A$ gives a maximal scaling action.   \emph{(1)}~shows that $A$ ensures that each $f_i$ is $\mathbb{T}_A$ invariant. \emph{(2)}~shows that $A$ has rank $r$.  \emph{(3)}~shows that if $B$ is another scaling symmetry then $B=MA$ (so $B$ has rank at most that of $A$), and thus $A$ is maximal.

%------------------------------------------------------%

\section{Theory for eliminating scaling symmetries in dynamical systems} \label{sec:eliminating_scaling_symmetries}
This section uses the tools developed in Section~\ref{sec:preliminaries} to provide a theoretical foundation for the method of finding a nondimensionalization and subsequently producing a dimensionless system of differential equations.
Finding the maximal number of independent scalings of a dynamical system which map any solution to another solution allows us to construct an equivalent reduced system by eliminating these symmetries.  We show how to recover the solutions of the original system from a reduction. In other words, by finding the maximal scaling symmetries of a dynamical system, we produce a new system with fewer symbols, whose solutions correspond to those of the original. Sections \ref{sec:practical_considerations} and \ref{sec:completing_partial_sets_of_invariants} show how we can extend the methods described before to incorporate initial conditions and known invariants. 
In Section \ref{sec:Change_of_variables}, we show that a sensible change of variables (i.e., those where we assume that two quantities added together have the same units) will never give rise to more scaling symmetries than were present in the original system.

\subsection{Construction of reduced systems} \label{sec:reduced_systems}

Definition~\ref{defn:scaling_symmetry} requires that both sides of the differential equation scale in the same way.  For the reduction algorithm, it will be simpler to require instead that one expression is invariant under a scaling action.  Thus, rewrite the differential equations as
\begin{equation} \label{eqn:our_ODE's}
    \frac{dz}{dt} = f(t,z) = \frac{z * F(t,z)}{t},
\end{equation}
\noindent 
with $F_i(t,z) \coloneqq \frac{t f_i(t,z)}{z_i}$.  We will focus on these rational functions instead of the $f_i$, in light of the following lemma. For $\bar{A} = [A_0 \ \ A] \in M_{r \times (n+1)}(\mathbb{Z})$, we say that $\mathbb{T}_{\bar{A}}$ defines a scaling symmetry for \eqref{eqn:our_ODE's} if for any solution $(t,z)$ and $\lambda \in \mathbb{T}^r$, we have that $\lambda^{\bar{A}} * (t,z)$ is also a solution.

\begin{lemma}
\emph{$\mathbb{T}_{\bar{A}}$ defines a scaling symmetry for \eqref{eqn:our_ODE's} if and only if $F_i$ is a rational invariant for each $1 \leqslant i \leqslant n$. In other words, $F_i(\lambda^{\bar{A}} * (t,z)) = F_i(t,z)$.}
\end{lemma}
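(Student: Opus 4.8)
The plan is to reduce the solution-based definition of the scaling symmetry to a purely algebraic homogeneity condition on the $f_i$, and then rewrite that condition in terms of $F_i$ using the defining relation $f_i = z_i F_i / t$. Write $\bar{A} = [A_0 \ \ A]$ so that, for $\lambda \in \mathbb{T}^r$, the scaled point is $\lambda^{\bar{A}} * (t,z) = (\sigma_0 t, \sigma_1 z_1,\ldots,\sigma_n z_n)$, where $\sigma_0 = \lambda^{A_0}$ and $\sigma_i = \lambda_1^{A_{1,i}}\cdots\lambda_r^{A_{r,i}}$ are the scalars applied to $t$ and to $z_i$ respectively.

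First I would take a solution $z(t)$ of \eqref{eqn:our_ODE's} and form the scaled curve $\tilde{z}_i(\tilde{t}) \coloneqq \sigma_i z_i(\tilde{t}/\sigma_0)$, obtained by substituting $t = \tilde{t}/\sigma_0$. A chain-rule computation gives $\frac{d\tilde{z}_i}{d\tilde{t}} = \frac{\sigma_i}{\sigma_0} z_i'(t) = \frac{\sigma_i}{\sigma_0} f_i(t,z)$, using that $z$ solves the original system. Hence $\lambda^{\bar{A}} * (t,z)$ is again a solution if and only if this derivative equals $f_i$ evaluated at the scaled point, i.e. if and only if $f_i(\sigma_0 t, \lambda^{A} * z) = \frac{\sigma_i}{\sigma_0} f_i(t,z)$ for each $i$. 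This is a homogeneity condition on each $f_i$ under the torus action.

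Next I would translate this condition into one on $F_i$. Using $F_i = t f_i/z_i$ and evaluating at the scaled point, $F_i(\lambda^{\bar{A}} * (t,z)) = \frac{\sigma_0 t \, f_i(\sigma_0 t, \lambda^{A} * z)}{\sigma_i z_i}$. Substituting the homogeneity relation for $f_i$ makes the factors $\sigma_0$ and $\sigma_i$ cancel, leaving exactly $\frac{t f_i(t,z)}{z_i} = F_i(t,z)$; running the same algebra in reverse via $f_i = z_i F_i/t$ recovers the homogeneity condition from the invariance of $F_i$. This establishes the biconditional at the level of rational-function identities, and the equivalence holds for each $i$ independently, matching the ``for each $1 \leqslant i \leqslant n$'' in the statement.

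The step requiring the most care is the passage between ``the scaled curve is a solution for every solution'' and ``the homogeneity identity holds as an identity of rational functions.'' I would argue that, near generic points, there is a solution trajectory through each point of the domain, so an identity valid along all solutions holds on a Zariski-dense set and therefore identically for the rational functions $f_i$. This is precisely what permits moving freely between the dynamical condition of Definition~\ref{defn:scaling_symmetry} and the algebraic invariance $F_i(\lambda^{\bar{A}} * (t,z)) = F_i(t,z)$; everything else is the bookkeeping of the exponents $\sigma_0,\sigma_i$ and the cancellation described above.
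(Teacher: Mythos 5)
Your proof is correct and follows essentially the same route as the paper's: apply the chain rule to the scaled curve, compare with the ODE evaluated at the scaled point, and observe that the factors $\sigma_0,\sigma_i$ cancel to leave exactly $F_i(\lambda^{\bar{A}} * (t,z)) = F_i(t,z)$. Your final paragraph, addressing the passage from an identity along all solution trajectories to an identity of rational functions, is a point the paper's proof silently elides, so that added care is a small improvement rather than a divergence.
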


\begin{proof}
Suppose that $(t,z)$ is a solution to \eqref{eqn:our_ODE's}. Then for each $1 \leqslant i \leqslant n$,
\begin{align*}
    \frac{d(\lambda^{A_i} z_i)}{d(\lambda^{A_0} t)} &= \frac{\lambda^{A_i} z_i}{\lambda^{A_0} t} F_i(\lambda^{\bar{A}} * (t,z)), \quad \textrm{but also} \\
    \frac{d(\lambda^{A_i} z_i)}{d(\lambda^{A_0} t)} &= \lambda^{A_i} \frac{dz_i}{dt} \frac{dt}{d(\lambda^{A_0} t)} = \frac{\lambda^{A_i}}{\lambda^{A_0}} \frac{dz_i}{dt} = \frac{\lambda^{A_i}}{\lambda^{A_0}} \frac{z_i F_i(t,z)}{t}.
\end{align*}
Thus $\lambda^{\bar{A}} * (t,z)$ is a solution to \eqref{eqn:our_ODE's} if and only if every $F_i$ is $\mathbb{T}_{\bar{A}}$-invariant.
\end{proof}

So that the algorithm can also scale time, we introduce a new dependent variable $z_0(t)$, which acts like a scaled version of the independent variable $t$. Consider the following dynamical system:
\begin{equation} \label{eqn:ODE_time_scaled}
    \frac{d \bar{z}}{dt} = \frac{\bar{z}}{t} * \bar{F}(\bar{z}), \quad \textrm{where} \quad \bar{z} = (z_0,z_1,\ldots,z_n) \quad \textrm{and} \quad \bar{F} = (1,F_1,\ldots,F_n).
\end{equation}
Note that the first equation of the dynamical system in \eqref{eqn:ODE_time_scaled} is $\frac{dz_0}{dt} = \frac{z_0}{t}$, which has solution $z_0(t) = ct$ for some constant $c \in \mathbb{R}^*$.
\begin{lemma}
\emph{If $z(t) = (z_1(t),\ldots,z_n(t))$ is a solution to a dynamical system as in \eqref{eqn:our_ODE's}, then $\bar{z}(t) = (t,z_1(t),\ldots,z_n(t))$ is a solution to \eqref{eqn:ODE_time_scaled}. Conversely, if $\bar{z}(t) = (z_0(t),z_1(t),\ldots,z_n(t))$ is a solution of \eqref{eqn:ODE_time_scaled} with $z_0(t) = ct$ for some $c \in \mathbb{R}^*$, then $z(t) = (z_1(\frac{t}{c}),\ldots,z_n(\frac{t}{c}))$ is a solution to \eqref{eqn:our_ODE's}.}
\end{lemma}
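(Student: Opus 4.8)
The plan is to prove the two directions separately, each by direct verification, once the relationship between the new variable $z_0$ and the true independent variable is made explicit. The key observation is that $\bar{F}$ is obtained from $F$ by replacing the explicit time dependence $F_i(t,z)$ with dependence on the new coordinate $z_0$, i.e. $F_i(\bar{z}) = F_i(z_0, z_1, \ldots, z_n)$. Thus \eqref{eqn:ODE_time_scaled} is nothing but \eqref{eqn:our_ODE's} with the argument $t$ of each $F_i$ promoted to a dynamical variable $z_0$, and the whole content of the lemma is that the coupling is controlled entirely by how $z_0$ is tied to time. For the forward direction I would set $z_0(t) = t$ and check the $n+1$ component equations of \eqref{eqn:ODE_time_scaled} one at a time: the zeroth equation reads $\frac{dz_0}{dt} = \frac{z_0}{t}$, and with $z_0(t) = t$ both sides equal $1$; for $1 \leqslant i \leqslant n$, since $z_0 = t$ we have $F_i(\bar{z}) = F_i(t, z_1, \ldots, z_n) = F_i(t,z)$, so the $i$-th equation of \eqref{eqn:ODE_time_scaled} is literally the $i$-th equation of \eqref{eqn:our_ODE's}, which holds by hypothesis. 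This direction needs no computation beyond the substitution.

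For the converse, write the candidate solution as $t \mapsto (z_1(t/c), \ldots, z_n(t/c))$ and show it solves \eqref{eqn:our_ODE's}. Setting $\tau = t/c$ and differentiating by the chain rule gives $\frac{d}{dt}\, z_i(t/c) = \frac{1}{c}\, \frac{dz_i}{d\tau}(\tau)$. Substituting the $i$-th component equation of \eqref{eqn:ODE_time_scaled} at the point $\tau$, namely $\frac{dz_i}{d\tau}(\tau) = \frac{z_i(\tau)}{\tau}\, F_i(z_0(\tau), z_1(\tau), \ldots, z_n(\tau))$, and invoking the hypothesis $z_0(\tau) = c\tau$ so that the first argument becomes $z_0(t/c) = c \cdot (t/c) = t$, the factor $1/c$ from the chain rule cancels against the factor $c$ appearing in $\frac{z_i(\tau)}{\tau} = \frac{c\, z_i(\tau)}{t}$. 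One then lands exactly on $\frac{z_i(t/c)}{t}\, F_i(t, z_1(t/c), \ldots, z_n(t/c))$, which is precisely the $i$-th equation of \eqref{eqn:our_ODE's}. The mechanism worth emphasizing is that $z_0(\tau) = c\tau$ is exactly the condition under which $z_0$ becomes the genuine rescaled time $t$, collapsing $\bar{F}$ back onto the original time-dependent $F$.

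The only delicate point, and the step I expect to demand the most care, is the bookkeeping in the converse: correctly tracking the change of independent variable $t \mapsto t/c$, the resulting chain-rule factor $1/c$, and the compensating factor $c$ from $\frac{1}{\tau} = \frac{c}{t}$, while confirming that the first argument of each $F_i$ is indeed $t$ after substitution rather than $ct$ or $t/c$. There is no conceptual obstacle here, only a risk of misplacing a factor of $c$; once the cancellation is verified for a single index $i$, the argument is identical for every $1 \leqslant i \leqslant n$, and the forward direction confirms the consistency of the $z_0$ equation whose solution $z_0(t) = ct$ is exactly the hypothesis used in the converse.
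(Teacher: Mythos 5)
Your proposal is correct and follows essentially the same route as the paper: the forward direction by direct substitution of $z_0(t)=t$, and the converse by the chain rule with the cancellation of the factor $c$ between $\frac{d}{dt}z_i(t/c)=\frac{1}{c}z_i'(t/c)$ and $\frac{1}{\tau}=\frac{c}{t}$. If anything, you make the chain-rule bookkeeping more explicit than the paper's one-line computation does.
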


\begin{proof}
The first statement is immediate, since $\frac{dt}{dt} = 1$ indeed equals $\frac{z_0}{t}$ because $z_0(t) = t$, and the equations for $z_1,\ldots,z_n$ are unchanged. For the converse, suppose that $\bar{z}(t)$ is a solution to \eqref{eqn:ODE_time_scaled}, and let $Z_i(t) = z_i(\frac{t}{c})$. Thus, for each $1 \leqslant i \leqslant n$:
\begin{equation}
    \frac{dZ_i(t)}{t} = \frac{dz_i(\frac{t}{c})}{dt} = \frac{z_i(\frac{t}{c})}{t} F_i \left(z_0\left(\frac{t}{c}\right),z_1\left(\frac{t}{c}\right),\ldots,z_n\left(\frac{t}{c}\right) \right) = \frac{Z_i(t)}{t} F_i(t,Z(t)).
\end{equation}
So if $z_0(t) = ct$ for some $c \in \mathbb{R}^*$, then $Z(t) = z(\frac{t}{c})$ is a solution to \eqref{eqn:our_ODE's}.
\end{proof}

\begin{theorem} \label{thm:hubert_labahn_6.5}
\cite[Theorem 6.5]{Hubert2013c}.
\emph{Let $\bar{A} = [A_0 \ \ A] \in M_{r \times (n+1)}(\mathbb{Z})$. Suppose that the rational functions $\bar{F}(\bar{z}) = (1,F_1(\bar{z}),\ldots,F_n(\bar{z}))$ are invariant under the $\mathbb{T}_{\bar{A}}$-action on the variables $z_0,z_1,\ldots,z_n$. Let $V = [V_{\mathfrak{a}} \ \ V_{\mathfrak{b}}]$ be a Hermite multiplier of $\bar{A}$, with inverse $W = \left[ \begin{smallmatrix} W_{\mathfrak{a}} \\ W_{\mathfrak{b}} \end{smallmatrix} \right]$.}

\begin{enumerate}
    \item \emph{If $z(t)$ is a solution of \eqref{eqn:our_ODE's} such that no components of $z(t)$ vanish
    % \footnote{Suppose we are interested in solutions in the range $t \in [T_0,T_1]$ where $0 \leqslant T_0 \leqslant T_1 < \infty$. Then each $z_i(t)$ has no roots in the interval $[T_0,T_1]$.}%
     and $\bar{z}(t) = (t,z_1(t),\ldots,z_n(t))$, then $[x(t) \ \ y(t)] = [\bar{z}^{V_{\mathfrak{a}}} \ \ \bar{z}^{V_{\mathfrak{b}}}]$ is a solution to}
    \begin{equation} \label{eqn:ODE's_after_scaling}
        \frac{dy}{dt} = \frac{y}{t} * (\bar{F}(y^{W_{\mathfrak{b}}}) \cdot V_{\mathfrak{b}}), \quad \quad \frac{dx}{dt} = \frac{x}{t} * (\bar{F}(y^{W_{\mathfrak{b}}}) \cdot V_{\mathfrak{a}}).
    \end{equation}
    
    \item \emph{Suppose $y(t),x(t)$ are respectively solutions to \eqref{eqn:ODE's_after_scaling} such that none of their components vanish. If $\bar{z}(t) = (z_0(t),\ldots,z_n(t)) = [x(t) \ \ y(t)]^W$, then $z_0(t) = ct$ for some $c \in \mathbb{R}^*$ and $z(t) = (z_1(\frac{t}{c}),\ldots,z_n(\frac{t}{c}))$ is a solution to the original dynamical system \eqref{eqn:our_ODE's}.}
\end{enumerate}

\end{theorem}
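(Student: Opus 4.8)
The plan is to treat both directions as consequences of a single mechanism: the logarithmic derivative of a monomial change of variables, combined with the $\mathbb{T}_{\bar{A}}$-invariance of $\bar{F}$. Throughout I would use the multiplicativity $(\bar{z}^P)^Q = \bar{z}^{PQ}$ of the power notation, together with the block identities coming from $WV = VW = I$, namely $W_{\mathfrak{a}} V_{\mathfrak{a}} = I_r$, $W_{\mathfrak{a}} V_{\mathfrak{b}} = 0$, $W_{\mathfrak{b}} V_{\mathfrak{a}} = 0$, $W_{\mathfrak{b}} V_{\mathfrak{b}} = I_{n+1-r}$, and $V_{\mathfrak{a}} W_{\mathfrak{a}} + V_{\mathfrak{b}} W_{\mathfrak{b}} = I_{n+1}$. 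First I would record the computational lemma that drives everything: if $w = \bar{z}^P$ for an integer matrix $P$ and $\bar{z}$ solves \eqref{eqn:ODE_time_scaled}, then taking the logarithmic derivative of each monomial and substituting $\frac{1}{\bar{z}_i}\frac{d\bar{z}_i}{dt} = \frac{1}{t}\bar{F}_i(\bar{z})$ (which is \eqref{eqn:ODE_time_scaled} read component-wise) yields $\frac{dw}{dt} = \frac{w}{t} * (\bar{F}(\bar{z}) \cdot P)$. Applying this with $P = V_{\mathfrak{a}}$ and $P = V_{\mathfrak{b}}$ to $\bar{z}(t) = (t,z_1(t),\ldots,z_n(t))$ immediately gives $\frac{dx}{dt} = \frac{x}{t} * (\bar{F}(\bar{z}) \cdot V_{\mathfrak{a}})$ and $\frac{dy}{dt} = \frac{y}{t} * (\bar{F}(\bar{z}) \cdot V_{\mathfrak{b}})$, which is \eqref{eqn:ODE's_after_scaling} except that $\bar{F}$ is evaluated at $\bar{z}$ rather than at $y^{W_{\mathfrak{b}}}$.

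The crux is therefore the invariance identity $\bar{F}(\bar{z}) = \bar{F}(y^{W_{\mathfrak{b}}})$, equivalently $\bar{F}(\bar{z}) = \bar{F}(\bar{z}^{V_{\mathfrak{b}} W_{\mathfrak{b}}})$. I would prove this by exhibiting $\bar{z}$ and $y^{W_{\mathfrak{b}}}$ as points of the same $\mathbb{T}_{\bar{A}}$-orbit. Writing $\bar{z} = \bar{z}^{V_{\mathfrak{a}} W_{\mathfrak{a}} + V_{\mathfrak{b}} W_{\mathfrak{b}}} = x^{W_{\mathfrak{a}}} * y^{W_{\mathfrak{b}}}$, it suffices to show the prefactor $x^{W_{\mathfrak{a}}}$ has the form $\lambda^{\bar{A}}$. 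The key linear-algebra step is that $W_{\mathfrak{a}}$ and $\bar{A}$ share a row space: both are $r \times (n+1)$ of full row rank $r$ and both are annihilated on the right by $V_{\mathfrak{b}}$ (for $\bar{A}$ this is Theorem~\ref{thm:normal_hermite_mult}(2), the columns of $V_{\mathfrak{b}}$ spanning $\ker \bar{A}$; for $W_{\mathfrak{a}}$ it is the block relation $W_{\mathfrak{a}} V_{\mathfrak{b}} = 0$), while the left annihilator of $V_{\mathfrak{b}}$ has dimension exactly $r$. Hence $W_{\mathfrak{a}} = S\bar{A}$ for an invertible $S$, so $x^{W_{\mathfrak{a}}} = (x^{S})^{\bar{A}} = \lambda^{\bar{A}}$, and invariance gives $\bar{F}(\bar{z}) = \bar{F}(y^{W_{\mathfrak{b}}})$, completing Part~(1).

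I expect this to be the main obstacle, chiefly because $S$ is only guaranteed rational, so $x^{S}$ and hence $\lambda$ need not be real when components of $x$ are negative. I would circumvent this by first establishing the identity $\bar{F}(\bar{z}) = \bar{F}(\bar{z}^{V_{\mathfrak{b}} W_{\mathfrak{b}}})$ on the Zariski-dense set of positive $\bar{z}$, where $\lambda = \exp(S^{\top}\log x)$ is a legitimate element of $\mathbb{T}^r$ and the orbit relation $\bar{A}^{\top}\log\lambda = W_{\mathfrak{a}}^{\top}\log x$ holds because $\bar{A}^{\top}$ has full column rank; an identity of rational functions valid on a dense set then holds identically, hence for all admissible nonvanishing arguments.

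For the converse, Part~(2), I would run the same computation backwards. Setting $\bar{z} = [x \ y]^W$, the block relations give $\bar{z}^{V_{\mathfrak{a}}} = [x \ y]^{W V_{\mathfrak{a}}} = x$ and $\bar{z}^{V_{\mathfrak{b}}} = y$, so $y^{W_{\mathfrak{b}}}$ is unchanged and the right-hand sides of \eqref{eqn:ODE's_after_scaling} read $\frac{1}{x_i}\frac{dx_i}{dt} = \frac{1}{t}(\bar{F}(y^{W_{\mathfrak{b}}}) V_{\mathfrak{a}})_i$ and likewise for $y$. Taking the logarithmic derivative of $\bar{z} = x^{W_{\mathfrak{a}}} * y^{W_{\mathfrak{b}}}$ and substituting, the identity $V_{\mathfrak{a}} W_{\mathfrak{a}} + V_{\mathfrak{b}} W_{\mathfrak{b}} = I$ collapses the expression to $\frac{1}{\bar{z}_k}\frac{d\bar{z}_k}{dt} = \frac{1}{t}\bar{F}_k(y^{W_{\mathfrak{b}}})$, and the invariance identity (now applied to the reconstructed $\bar{z}$) replaces $\bar{F}_k(y^{W_{\mathfrak{b}}})$ by $\bar{F}_k(\bar{z})$, so $\bar{z}$ solves \eqref{eqn:ODE_time_scaled}. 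Its zeroth component satisfies $\frac{dz_0}{dt} = \frac{z_0}{t}$ since $\bar{F}_0 = 1$, forcing $z_0 = ct$; the preceding lemma relating \eqref{eqn:ODE_time_scaled} to \eqref{eqn:our_ODE's} then shows that $(z_1(\tfrac{t}{c}),\ldots,z_n(\tfrac{t}{c}))$ solves \eqref{eqn:our_ODE's}, finishing the proof.
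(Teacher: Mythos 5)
Your proof is correct, and there is an important point of comparison to make: the paper offers \emph{no} proof of this theorem --- it is imported verbatim from \cite[Theorem 6.5]{Hubert2013c} --- so what you have produced is a self-contained argument where the paper simply defers to a reference. Your engine is the logarithmic-derivative lemma $\frac{d}{dt}\,\bar z^{P}=\frac{\bar z^{P}}{t}*(\bar F(\bar z)\cdot P)$ combined with the rewriting identity $\bar F(\bar z)=\bar F(\bar z^{V_{\mathfrak{b}}W_{\mathfrak{b}}})$; the latter is exactly the content of \cite[Theorem 4.2]{Hubert2013c}, which the paper itself invokes without proof in Remark~\ref{remark:hubert_labahn6.5_rational_invariants_by_substitution}, and your orbit argument re-derives it cleanly: $W_{\mathfrak{a}}$ and $\bar A$ are both $r\times(n+1)$ of full row rank with row space equal to the $r$-dimensional left annihilator of $V_{\mathfrak{b}}$, hence $W_{\mathfrak{a}}=S\bar A$ with $S$ invertible and $x^{W_{\mathfrak{a}}}=(x^{S})^{\bar A}$ is a torus prefactor, so $\bar z=x^{W_{\mathfrak{a}}}*y^{W_{\mathfrak{b}}}$ and $y^{W_{\mathfrak{b}}}$ lie on one $\mathbb{T}_{\bar A}$-orbit. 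You also correctly identified and patched the one genuine subtlety, namely that $S$ lies in $\mathrm{GL}_r(\mathbb{Q})$ rather than $\mathrm{GL}_r(\mathbb{Z})$, so $x^{S}$ need not be a real element of $\mathbb{T}^r$ when $x$ has negative components; establishing the identity of rational functions on the positive orthant and extending by Zariski density is a legitimate fix. Part (2) is likewise sound: the block relations give $\bar z^{V_{\mathfrak{a}}}=x$ and $\bar z^{V_{\mathfrak{b}}}=y$, the identity $V_{\mathfrak{a}}W_{\mathfrak{a}}+V_{\mathfrak{b}}W_{\mathfrak{b}}=I_{n+1}$ collapses the logarithmic derivative of $\bar z=[x\ y]^{W}$ to $\frac{1}{\bar z_k}\frac{d\bar z_k}{dt}=\frac{1}{t}\bar F_k(\bar z)$, and the $z_0=ct$ conclusion correctly hands off to the paper's preceding lemma relating \eqref{eqn:ODE_time_scaled} to \eqref{eqn:our_ODE's}.
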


\begin{remark} \label{remark:hubert_labahn6.5_rational_invariants_by_substitution}
From a system of $n$ differential equations, Theorem~\ref{thm:hubert_labahn_6.5} uses linear algebra over the integers to produce a system of $n+1-r$ differential equations. The solutions for the second system in \eqref{eqn:ODE's_after_scaling} can be found from the first by integration:
\begin{equation}
    x = \exp \left(\int \frac{1}{t} F(y^{W_{\mathfrak{b}}}) \cdot V_{\mathfrak{a}} \ dt \right).
\end{equation}
Furthermore, by \cite[Theorem 4.2]{Hubert2013c} the $n+1-r$ components of $y=(t,z_1,\ldots,z_n)^{V_{\mathfrak{b}}}$ form a generating set of rational invariants of the scaling action given by $\bar{A}$. Thus, \emph{any} rational invariant of $\bar{A}$ can be written in terms of the components of $y$. Specifically, if $f(z) = \frac{p(z)}{q(z)}$ is  invariant under the scaling action defined by $\bar{A}$, then we can write this as $f(y^{W_{\mathfrak{b}}}) = \frac{p(y^{W_{\mathfrak{b}}})}{q(y^{W_{\mathfrak{b}}})}$.
\end{remark}

\subsection{Nondimensionalization} \label{sec:nondimensionalization}

Theorem~\ref{thm:hubert_labahn_6.5} gives a framework for reducing a system of differential equations by eliminating scaling symmetries. In practice, many mathematical models have more parameters than are relevant for quantitative analysis.  In these cases, there is a simpler algorithm for nondimensionalization, given by Theorem~\ref{thm:hubert_labahn_7.1}, which prioritizes reducing the number of parameters in a dynamical system.  This section presents this method following \cite[Section 7]{Hubert2013c}.

Suppose we have a dynamical model of variables $z(t) = (z_1(t),\ldots,z_q(t))$ whose dynamics involve some constants $c=(c_1,\ldots,c_p)$. The parametrized system can be written as
\begin{equation} \label{eqn:reduced_system}
    \frac{dz}{dt} = G(t,z,c) = \frac{z}{t} * F(t,z,c).
\end{equation}

Note that we can extend the system to that given by \eqref{eqn:our_ODE's} simply with the equations $\frac{dc}{dt} = 0$, so a matrix $A \in M_{r \times m}$ (where $m = 1+ q + p$) defines a scaling symmetry of \eqref{eqn:reduced_system} if and only if $F(t,z,c)$ is $\mathbb{T}_A$-invariant, which means that $F(\lambda^A * (t,z,c)) = F(t,z,c)$ for all $\lambda \in \mathbb{T}^r$. To find the scaling matrix $A$, we still use the methods of Section~\ref{sec:maximal_scaling_symmetry}. 

Assume that the Hermite multiplier $V$ of $A$ has the form%
\footnote{
    For implementation, here are the sizes of the matrices.  Since $V \in M_{r \times m}(\mathbb{Z})$ where $m = 1+q+p$, we see that $V_a \in M_{p \times r}(\mathbb{Z})$, $V_{\bar{v}} = [V_t \ \ V_v] \in M_{p \times (q+1)}(\mathbb{Z})$ and $V_c \in M_{p \times (m-r)}(\mathbb{Z})$. In $W$, we make no extra assumption here on the elements of the first $r$ rows. The bottom $p-r$ rows of $W$ are given by some matrices $W_{\bar{v}} = [W_t \ \ W_v] \in M_{(p-r) \times (q+1)}(\mathbb{Z})$ and $W_c \in M_{(p-r) \times p}(\mathbb{Z})$. Here, $V_t$ and $W_t$ are respectively the first column of $V_{\bar{v}}$ and $W_{\bar{v}}$.
}
\begin{equation} \label{eqn:assumed_hermite_normal_form}
    V = \left[\begin{array}{c|c c}
        0 & I_{q+1} & 0 \\
        V_a & V_{\bar{v}} & V_c
    \end{array}\right], \quad \quad V^{-1} = W = \left[\begin{array}{c c}
        * & * \\
        \hline
        I_{q+1} & 0 \\
        W_{\bar{v}} & W_c
    \end{array}\right].
\end{equation}

There are two main reasons for partitioning $V$ in this manner. Empirically, as mentioned in \cite[Section 7.1]{Hubert2013c}, many models in mathematical biology fit this partition.  Heuristically, this representation of $V$ ensures that we reduce only the number of constant parameters $c_1,\ldots,c_p$.

To see this, we dissect the partition of $V$ and $W$ in \eqref{eqn:assumed_hermite_normal_form}. The identity submatrix $I_{q+1}$ in $V$ gives us that our new non-constant invariants $y_0,\ldots,y_q$ are precisely $y_0=c^{u_0}t$ and $y_i=c^{u_i}z_i$, where $u_0,\ldots,u_q$ are the columns of $V_{\bar{v}}$. Furthermore, the left- and right-hand zero submatrices in $V$ respectively ensure that our $r$ auxiliary variables and $p-r$ new constants are functions only of the original constants $c=(c_1,\ldots,c_p)$. Using Theorem~\ref{thm:hubert_labahn_6.5}, we can find our invariants explicitly:
\begin{gather}
    y = [t, z, c]^{V_{\mathfrak{b}}} 
      = [c^{V_t}t, c^{V_v}*z, c^{V_c}] 
      = [\tau, \nu, \kappa], 
         \quad \textrm{so that}  \nonumber \\
    \tau \coloneqq c^{V_t}t, \quad \nu = (\nu_1,\ldots,\nu_q) \coloneqq c^{V_v} * z, \quad \kappa = (\kappa_1,\ldots,\kappa_{p-r}) \coloneqq c^{V_c}. \label{eqn:rescaled_parameter_names}
\end{gather}

\begin{theorem} \label{thm:hubert_labahn_7.1}
\cite[Section 7.1]{Hubert2013c}.
Let $W$ be as in \eqref{eqn:assumed_hermite_normal_form}. The reduced system of \eqref{eqn:reduced_system} has one time variable, $q$ dependent variables, and $p-r$ parameters respectively given by $\tau,\nu$, and $\kappa$ in \eqref{eqn:rescaled_parameter_names}. The reduced system is obtained by the substitution 
\begin{equation}
    t \mapsto \kappa^{W_t} \tau, \quad z \mapsto \kappa^{W_z} * \nu, \quad c \mapsto \kappa^{W_c}, 
\end{equation}
which gives 
\begin{equation} \label{eqn:system_after_substitution}
    \frac{d\nu}{d\tau} = \frac{\nu}{\tau} * F(\kappa^{W_t} \tau,\kappa^{W_z} * \nu,\kappa^{W_c}).
\end{equation}
\end{theorem}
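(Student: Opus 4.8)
The plan is to recognize Theorem~\ref{thm:hubert_labahn_7.1} as a specialization of Theorem~\ref{thm:hubert_labahn_6.5} to the block structure \eqref{eqn:assumed_hermite_normal_form}, so that the only genuine work is unwinding the matrix products and performing an elementary change of independent variable. First I would pass from \eqref{eqn:reduced_system} to the time-augmented form \eqref{eqn:ODE_time_scaled} by appending the equations $\frac{dc}{dt}=0$. In the notation of Theorem~\ref{thm:hubert_labahn_6.5} this makes the augmented variable vector $\bar{z}=(t,z_1,\dots,z_q,c_1,\dots,c_p)$ with $n=q+p$ and $m=n+1=1+q+p$, and the augmented right-hand side $\bar{F}=(1,F_1,\dots,F_q,0,\dots,0)$, where the trailing $p$ zeros encode that the $c_k$ are constant. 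With $A\in M_{r\times m}(\mathbb{Z})$ the maximal scaling matrix (so $F$ is $\mathbb{T}_A$-invariant) and $V$ of the assumed form, all hypotheses of Theorem~\ref{thm:hubert_labahn_6.5} hold.

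Next I would read off both the invariants and the reconstruction directly from the blocks. The invariants are $y=\bar{z}^{V_{\mathfrak{b}}}$ with $V_{\mathfrak{b}}=\left[\begin{smallmatrix} I_{q+1} & 0 \\ V_{\bar v} & V_c\end{smallmatrix}\right]$: since the top-left block is $I_{q+1}$ and the bottom-left block is $V_{\bar v}=[V_t\ \ V_v]$, the first $q+1$ components of $y$ are $c^{V_t}t$ and $c^{(V_v)_i}z_i$, that is exactly $\tau$ and $\nu$ of \eqref{eqn:rescaled_parameter_names}, while the top-right zero block leaves the last $p-r$ components equal to $c^{V_c}=\kappa$, depending only on $c$. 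Dually, the reconstruction $\bar{z}=y^{W_{\mathfrak{b}}}$ supplied by Theorem~\ref{thm:hubert_labahn_6.5} unwinds, using the middle identity block $[I_{q+1}\ \ 0]$ and bottom block $[W_{\bar v}\ \ W_c]=[[W_t\ \ W_z]\ \ W_c]$ of $W$, into precisely the substitution $t\mapsto\kappa^{W_t}\tau$, $z\mapsto\kappa^{W_z}*\nu$, $c\mapsto\kappa^{W_c}$; the top-right zero block of $W$ is what forces only $\kappa$, and not $\tau$ or $\nu$, to appear as a base. This simultaneously fixes the counts claimed in the statement: one time variable $\tau$, $q$ dependent variables $\nu$, and $p-r$ parameters $\kappa$.

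It then remains to substitute into the reduced dynamics \eqref{eqn:ODE's_after_scaling}, whose $y$-equation involves only $y$, so the auxiliary variables $x$ never enter the $\nu$-equations. The crucial algebraic simplification is the product $\bar{F}(y^{W_{\mathfrak{b}}})\cdot V_{\mathfrak{b}}$: because the last $p$ entries of $\bar{F}$ vanish and the top-left and top-right blocks of $V_{\mathfrak{b}}$ are $I_{q+1}$ and $0$, this product collapses to $(1,F_1,\dots,F_q,0,\dots,0)$. Feeding this into $\frac{dy}{dt}=\frac{y}{t}*(\bar{F}(y^{W_{\mathfrak{b}}})\cdot V_{\mathfrak{b}})$ with $y=(\tau,\nu_1,\dots,\nu_q,\kappa_1,\dots,\kappa_{p-r})$ gives $\frac{d\tau}{dt}=\frac{\tau}{t}$, $\frac{d\nu_i}{dt}=\frac{\nu_i}{t}F_i(y^{W_{\mathfrak{b}}})$, and $\frac{d\kappa_j}{dt}=0$, the last of which confirms that $\kappa$ is constant. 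Changing the independent variable by $\frac{d\nu_i}{d\tau}=\bigl(\frac{d\nu_i}{dt}\bigr)\big/\bigl(\frac{d\tau}{dt}\bigr)$ cancels the common $1/t$ and replaces $t$ by $\tau$ in the denominator, and since $y^{W_{\mathfrak{b}}}=(\kappa^{W_t}\tau,\kappa^{W_z}*\nu,\kappa^{W_c})$ this yields $\frac{d\nu}{d\tau}=\frac{\nu}{\tau}*F(\kappa^{W_t}\tau,\kappa^{W_z}*\nu,\kappa^{W_c})$, which is \eqref{eqn:system_after_substitution}.

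I expect the main obstacle to be purely bookkeeping: partitioning $V_{\mathfrak{b}}$, $W_{\mathfrak{b}}$, and $\bar{F}$ consistently and checking that the several zero blocks in \eqref{eqn:assumed_hermite_normal_form} land in the right places, so that (a) no auxiliary variable $x$ appears in the $\nu$-equations and (b) the $\kappa$ genuinely remain constant. There is no deep difficulty beyond Theorem~\ref{thm:hubert_labahn_6.5} itself; the content is verifying that the assumed form of $V$ forces the reduction to act only on the parameters, together with the elementary chain-rule rescaling of the independent variable from $t$ to $\tau$. If one prefers a self-contained argument avoiding \eqref{eqn:ODE's_after_scaling}, the same conclusion follows by substituting the reconstruction formulas into \eqref{eqn:reduced_system} directly and invoking the $\mathbb{T}_A$-invariance of $F$ (equivalently, Remark~\ref{remark:hubert_labahn6.5_rational_invariants_by_substitution}) to write each $F_i(t,z,c)=F_i(y^{W_{\mathfrak{b}}})$, but this route additionally requires reading off the needed $V$--$W$ block identities from $VW=I$, which the Theorem~\ref{thm:hubert_labahn_6.5} route handles automatically.
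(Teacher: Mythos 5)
Your proposal is correct and follows essentially the same route as the paper, which states this result as a citation of Hubert--Labahn and justifies it in the surrounding discussion by specializing Theorem~\ref{thm:hubert_labahn_6.5} to the block form \eqref{eqn:assumed_hermite_normal_form} (appending $\frac{dc}{dt}=0$, reading $\tau,\nu,\kappa$ off $V_{\mathfrak{b}}$ and the substitution off $W_{\mathfrak{b}}$). Your write-up merely fills in the bookkeeping the paper leaves implicit --- the collapse of $\bar{F}(y^{W_{\mathfrak{b}}})\cdot V_{\mathfrak{b}}$, the constancy of $\kappa$, and the chain-rule passage from $t$ to $\tau$ --- all of which check out.
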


\paragraph{Generalization to broader forms of Hermite multiplier} \label{sec:generalization_hermite_mul}

We generalize the form of the column Hermite multiplier $V$ given in \eqref{eqn:assumed_hermite_normal_form}, so that we may be able to apply our algorithm to cases in which $V$ does not contain the submatrix $I_{q+1}$ but instead has the more general form
\begin{equation}
    V = \left[\begin{array}{c|cc}
    0 & D & 0 \\
    V_{\mathfrak{a}} & V_{\bar{v}} & V_c
    \end{array}\right], \quad \textrm{where} \quad
    D = \textrm{diag}(d_0, d_1, \ldots, d_q).
\end{equation}
The diagonal elements $d_0,d_1,\ldots,d_q$ are any positive integers. If $d_i > 1$ for some\footnotemark \footnotetext{The case where $d_0 > 1$ works in the exact same way, by just replacing $z_i$ with $t$ and $\nu_i$ with $\tau$.} $1 \leqslant i \leqslant q$ (so that we are no longer in the setting of \eqref{eqn:assumed_hermite_normal_form}), then by Theorem~\ref{thm:hubert_labahn_6.5} we have an invariant $\nu_i=c_1^{a_1} \ldots c_p^{a_p}z_i^{d_i}$, where $a_1,\ldots,a_p$ are the elements of the $i$-th column of $V_{\bar{v}}$. We then modify the parameters as:\footnotemark
\footnotetext{Here we need to assume that our parameters $c_1,\ldots,c_p$ are real and positive, but this is often the case for mathematical models.}
\begin{equation}
    \Tilde{c}_i \coloneqq \begin{cases} \sqrt[\leftroot{-3}\uproot{3}d_i]{c_i} \ \textrm{if} \ a_i \neq 0 \\
    \ \ c_i \quad \textrm{if} \ a_i=0
    \end{cases}
    \quad \textrm{and substitute} \quad
    c_i \mapsto \begin{cases}
    \Tilde{c}_i^{d_i} \ \textrm{if} \ a_i \neq 0 \\
    \Tilde{c}_i \ \ \ \textrm{if} \ a_i=0.
    \end{cases}
\end{equation}

Under the above substitution, our new system is equivalent. The original invariant becomes $(\Tilde{c}_1^{a_1} \ldots \Tilde{c}_p^{a_p} z_i)^{d_i}$, which means $\Tilde{c}_1^{a_1} \ldots \Tilde{c}_p^{a_p}z_i$ is also invariant and so we choose $\nu_i = \Tilde{c}_1^{a_1} \ldots \Tilde{c}_p^{a_p}z_i$. Substituting for each $d_i > 0$ transforms the Hermite multiplier $V$ into $D \mapsto I_{q+1}$, and so we have returned to the setting of Theorem~\ref{thm:hubert_labahn_7.1}.

\begin{example}
Consider the differential equation
\begin{equation} \label{eqn:example:cubic_diffeq}
    \frac{dz}{dt} = \frac{cz^3}{t}
\end{equation}
where $c \in \mathbb{R}_{>0}$. The scaling matrix is $A = [0 \ \ 1 \ -3]$, whose column Hermite multiplier $V$ has submatrix $D$ with $d_0 = 1$ and $d_1 = 3$. Thus we substitute $c \mapsto \Tilde{c}^3$ and have
\begin{equation*}
    \frac{dz}{dt} = \frac{\Tilde{c}^3 z^3}{t}
\end{equation*}
which has scaling matrix $A' = [0 \ \ 1 \ -1]$. The column Hermite multiplier $V'$ of $A'$ has form \eqref{eqn:assumed_hermite_normal_form}, so Theorem~\ref{thm:hubert_labahn_7.1} gives the reduced system:
\begin{equation*}
    \frac{d\nu}{d\tau} = \frac{\nu^3}{\tau}.
\end{equation*}

\end{example}

%--------------------------------------------------%

\subsection{Practical considerations for physical systems} \label{sec:practical_considerations}

This section formalizes techniques for imposing initial conditions and functions of parameters in our nondimensionalization. We build upon this method to develop an algorithm which provides a reduced dynamical system in terms of desired invariants.

\subsubsection{Initial conditions}  \label{sec:initial_conditions}

We can incorporate an initial condition $z_i(0)=z_{i0}$ into a nondimensionalization by requiring $\frac{z_i}{z_{i0}}$ to be a scaling invariant. 

Add the constant $z_{i0}$ to the end of the variable order. To ensure that any scaling acts on $z_{i0}$ as it does on $z_i$, require $\frac{z_i}{z_{i0}}$ to be a rational invariant of our scaling action. In other words, add a column to the exponent matrix:
\begin{equation}
    K \longmapsto \left[K \ \ K_{\frac{z_i}{z_{i0}}} \right] \label{eqn:modify_exponent_matrix}
\end{equation}

Then calculate the maximal scaling matrix $A$ and its column Hermite multiplier $V$ as before.  Any rational invariant can be written in terms of the components of $y = (y_1,\ldots,y_{n+1-r}) = (t,z,c)^{V_{\mathfrak{b}}}$ (where $V_{\mathfrak{b}}$ is of the form given in \eqref{eqn:assumed_hermite_normal_form}), via the substitution $z \mapsto y^{W_{\mathfrak{b}}}$. So, we can use Theorem~\ref{thm:hubert_labahn_7.1} to obtain the corresponding constraint on $\frac{z_i}{z_{i0}}$ via substitution, as is done for the rest of the reduced system.

For example, in Section~\ref{sec:michaelis_menten_comparison_with_classical} the invariant $u=\frac{s}{s_0}$ is obtained from the reduced system via the substitution $s \mapsto u$ as given in \eqref{eqn:michaelis_menten_invariants}.

\subsubsection{Functions of constant parameters}   \label{sec:functions_of_constant_parameters}

As with both the Michaelis constant $K_m$ in Section~\ref{sec:michaelis_menten_comparison_with_classical} and $\epsilon$ in Section~\ref{sec:application_segel_slemrod}, we can incorporate rational functions of parameters into our algorithm.

Suppose we have $\ell=f(c_1,\ldots,c_p)$, a rational function of {\em constants}, which we wish to substitute into our dynamical system. First, add $\ell$ to the variable order.  Second, any scaling symmetry should act on $\ell$ in the same way it acts on $f$, so $\frac{f}{\ell}$ must be a rational invariant. We incorporate this into our framework by modifying $K$:
\begin{equation}
    K \longmapsto \left[K \ \ K_{\frac{f}{\ell}} \right].
\end{equation}

Calculate the new maximal scaling matrix $A$ and its column Hermite multiplier $V$ as before, and $W=V^{-1}$ can be used as in Theorem~\ref{thm:hubert_labahn_7.1} to give the reduced system involving the chosen invariant $\ell$.

\subsection{Completing partial sets of invariants}  \label{sec:completing_partial_sets_of_invariants}

We have a lot of flexibility in choosing our rational invariants, since we may perform any admissible column operation on $V_{\mathfrak{b}}$ to change our invariants while retaining $V$ as a valid Hermite multiplier. Of course, these operations also affect $W$. However, some parameter combinations may be preferable for analysis of a given dynamical system, so we are led to ask the following question: can we \emph{first} choose the parameter combinations we prefer, and \emph{then} see if they are invariants in which we can rewrite our dynamical system?

\subsubsection{Extending a choice of invariants}   \label{sec:extending_choice_invariants}

Suppose we are in the setting of Section~\ref{sec:nondimensionalization}, with a chosen order on our $m=q+p+1$ variables and a scaling matrix $A$ which gives rise to a Hermite multiplier $V$ with inverse $W$. We then have the following useful result: 
\begin{lemma} \label{lemma:kerA_in_kerWb}
\begin{equation*}
    \ker A \subseteq \ker W_{\mathfrak{b}}.
\end{equation*}
\end{lemma}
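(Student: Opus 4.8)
The plan is to obtain the inclusion directly from two structural facts already available: the identification of $\ker A$ as the image of a block of $V$, and the block form of the inverse relation $W = V^{-1}$. First I would apply Theorem~\ref{thm:normal_hermite_mult}(2) with $B = A$. Since the scaling matrix $A$ has full row rank $r$, its last $m-r$ columns, namely the block $V_{\mathfrak{b}}$, form a basis of $\ker A$, so that $\ker A = \operatorname{im}(V_{\mathfrak{b}})$. Because $\ker W_{\mathfrak{b}}$ is a linear subspace, the inclusion $\ker A \subseteq \ker W_{\mathfrak{b}}$ is equivalent to the assertion that $W_{\mathfrak{b}}$ annihilates a spanning set of $\ker A$; concretely, writing an arbitrary $v \in \ker A$ as $v = V_{\mathfrak{b}} w$ reduces the whole claim to evaluating a single block product of $W$ against $V_{\mathfrak{b}}$.

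To carry this out I would expand $W V = I_m$ in the partition $V = \left[\, V_{\mathfrak{a}}\ \ V_{\mathfrak{b}}\,\right]$ and $W = \left[\begin{smallmatrix} W_{\mathfrak{a}} \\ W_{\mathfrak{b}} \end{smallmatrix}\right]$, which yields the four block identities $W_{\mathfrak{a}} V_{\mathfrak{a}} = I_r$, $W_{\mathfrak{a}} V_{\mathfrak{b}} = 0$, $W_{\mathfrak{b}} V_{\mathfrak{a}} = 0$, and $W_{\mathfrak{b}} V_{\mathfrak{b}} = I_{m-r}$. Applying the appropriate row-block of $W$ to $v = V_{\mathfrak{b}} w$ collapses the computation onto one of the two off-diagonal blocks, and the inclusion is immediate once that block is seen to vanish. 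To keep the argument faithful to the setting of the lemma, I would then substitute the explicit shapes from \eqref{eqn:assumed_hermite_normal_form}, where $V = \left[\begin{smallmatrix} 0 & I_{q+1} & 0 \\ V_{\mathfrak{a}} & V_{\bar v} & V_c \end{smallmatrix}\right]$ and the lower rows of $W$ carry the blocks $W_{\bar v}$ and $W_c$, and verify the collapse on this concrete form rather than in the abstract.

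I expect the crux --- and the main obstacle --- to be precisely the bookkeeping that pairs the kernel basis $V_{\mathfrak{b}}$ with the correct row-block of $W$. The two off-diagonal blocks of $WV = I_m$ behave in opposite ways, one being the zero block and the other the identity, so the entire lemma turns on which row-block of $W$ is matched to the columns spanning $\ker A$ under the pivot/non-pivot partition fixed in \eqref{eqn:hermite_multiplier_decomp} and specialized in \eqref{eqn:assumed_hermite_normal_form}. I would pin this down by tracking the partition indices explicitly through the special form of $V$, and I would use the dimension count $\dim \ker A = m - r$ as a consistency check on the nullity of the annihilating block before concluding.
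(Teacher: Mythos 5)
Your overall strategy matches the paper's: invoke Theorem~\ref{thm:normal_hermite_mult}(2) to identify $\ker A = \operatorname{im} V_{\mathfrak{b}}$, then exploit the block identities coming from $WV = I_m$. But you defer exactly the point on which everything turns --- which row-block of $W$ annihilates $V_{\mathfrak{b}}$ --- and as stated your argument cannot close. Writing $v = V_{\mathfrak{b}}w$, the two relevant products are $W_{\mathfrak{a}}V_{\mathfrak{b}} = 0$ and $W_{\mathfrak{b}}V_{\mathfrak{b}} = I_{m-r}$. The block named in the lemma is $W_{\mathfrak{b}}$, and $W_{\mathfrak{b}}v = W_{\mathfrak{b}}V_{\mathfrak{b}}w = w \neq 0$ for $v \neq 0$; so the inclusion $\ker A \subseteq \ker W_{\mathfrak{b}}$, read literally, is \emph{false} whenever $r < m$: $W_{\mathfrak{b}}$ restricts to an injection on $\ker A$, hence $\ker A \cap \ker W_{\mathfrak{b}} = \{0\}$. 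Your own proposed consistency check would have exposed this: $\dim\ker A = m-r$, while $W_{\mathfrak{b}}$ has full row rank $m-r$, so $\dim\ker W_{\mathfrak{b}} = r$, and an $(m-r)$-dimensional space cannot sit inside an $r$-dimensional one in general. Relatedly, your claim that ``the two off-diagonal blocks of $WV = I_m$ behave in opposite ways, one being the zero block and the other the identity'' is wrong: both off-diagonal blocks $W_{\mathfrak{a}}V_{\mathfrak{b}}$ and $W_{\mathfrak{b}}V_{\mathfrak{a}}$ vanish; the contrast you need is between the off-diagonal $W_{\mathfrak{a}}V_{\mathfrak{b}} = 0$ and the diagonal $W_{\mathfrak{b}}V_{\mathfrak{b}} = I_{m-r}$.

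What is actually going on is that the lemma's statement contains a typo, and the paper's one-line proof makes the intended content clear: it concludes ``$x \in \ker W_{\mathfrak{a}}$'' from $W_{\mathfrak{a}}V_{\mathfrak{b}} = 0$, i.e., the correct statement is $\ker A \subseteq \ker W_{\mathfrak{a}}$ (in fact equality holds, by the dimension count above). This is also the form in which the lemma is used downstream: in the proof of Theorem~\ref{thm:when_can_rewrite_using_parameter_combinations}, $AP = 0$ together with the lemma yields $W_{\mathfrak{a}}P = 0$, which is what justifies $(I_m - V_{\mathfrak{a}}W_{\mathfrak{a}})P = P$. Had you committed to a specific block rather than postponing the ``bookkeeping'' to an unexecuted final step, you would either have proved the corrected statement --- which, note, requires none of the specialization to the form \eqref{eqn:assumed_hermite_normal_form} you propose; the general partition \eqref{eqn:hermite_multiplier_decomp} suffices, exactly as in the paper --- or discovered that the statement needs repair. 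As written, the proposal hedges precisely where the mathematical content lies, so it does not constitute a proof.
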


\begin{proof}
By Theorem~\ref{thm:normal_hermite_mult}, the columns of $V_{\mathfrak{b}}$ form a basis for the kernel of $A$, so $x \in \ker A$ can be expressed as a linear combination of the columns of $V_{\mathfrak{b}}$. But since $W=V^{-1}$ we see from the decomposition \eqref{eqn:hermite_multiplier_decomp} that $W_{\mathfrak{a}}V_{\mathfrak{b}}=0$. Thus, $x \in \ker W_{\mathfrak{a}}$.
\end{proof}

If we are given $s$ total preferred parameter combinations, then we can express these as a matrix $P \in M_{m \times s}(\mathbb{Z})$. The following theorem gives us an answer to our question at the beginning of Section~\ref{sec:completing_partial_sets_of_invariants}:

\begin{theorem} \label{thm:when_can_rewrite_using_parameter_combinations}
\emph{A given dynamical system as in \eqref{eqn:reduced_system} can be rewritten in terms of the parameter combinations determined by $P$ if and only if:}

\begin{enumerate}
    \item $AP=0$
    \item \emph{We can extend }$W_{\mathfrak{b}}P$\emph{ to a unimodular integer matrix}
\end{enumerate}

\end{theorem}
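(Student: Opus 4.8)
The plan is to read ``the system can be rewritten in terms of the parameter combinations determined by $P$'' as the existence of a valid column Hermite multiplier $V' = [V'_{\mathfrak{a}} \mid V'_{\mathfrak{b}}]$ of $A$ whose invariant block $V'_{\mathfrak{b}}$ has the $s$ columns of $P$ among its columns, so that the associated generating invariants $y' = [t,z,c]^{V'_{\mathfrak{b}}}$ include the preferred combinations $[t,z,c]^{P}$. By Theorem~\ref{thm:hubert_labahn_7.1} together with the discussion on broader Hermite multipliers, any such $V'$ produces a reduced system expressed in these invariants, so the theorem reduces to showing that such a $V'$ exists if and only if (1) and (2) hold. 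First I would assemble two facts: by Theorem~\ref{thm:normal_hermite_mult}(2) the columns of $V_{\mathfrak{b}}$ form a $\mathbb{Z}$-basis of the integer kernel lattice $\ker_{\mathbb{Z}} A = \{x \in \mathbb{Z}^m : Ax = 0\}$, and from $WV = I$ one reads off $W_{\mathfrak{b}} V_{\mathfrak{b}} = I_{m-r}$. Consequently, when $AP = 0$ each column of the integer matrix $P$ lies in $\ker_{\mathbb{Z}} A$ and is a unique integer combination of the basis columns, giving the identity $P = V_{\mathfrak{b}}(W_{\mathfrak{b}} P)$; this exhibits $W_{\mathfrak{b}} P$ as the coordinate matrix of $P$ in the basis $V_{\mathfrak{b}}$ and is the bridge between the two conditions.

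For the ($\Leftarrow$) direction I would assume (1) and (2), extend $W_{\mathfrak{b}} P$ to a unimodular matrix $N = [\,W_{\mathfrak{b}} P \mid Q\,] \in M_{m-r}(\mathbb{Z})$, and set $V'_{\mathfrak{b}} = V_{\mathfrak{b}} N$, $V'_{\mathfrak{a}} = V_{\mathfrak{a}}$, so that $V' = V \left[\begin{smallmatrix} I_r & 0 \\ 0 & N \end{smallmatrix}\right]$. Then $V'$ is unimodular as a product of unimodular matrices, and $AV' = [\,AV_{\mathfrak{a}} \mid A V_{\mathfrak{b}} N\,] = [\,AV_{\mathfrak{a}} \mid 0\,] = AV$ remains in column Hermite normal form, so $V'$ is a legitimate Hermite multiplier. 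Its invariant block is $V'_{\mathfrak{b}} = [\,V_{\mathfrak{b}} W_{\mathfrak{b}} P \mid V_{\mathfrak{b}} Q\,] = [\,P \mid V_{\mathfrak{b}} Q\,]$, using $V_{\mathfrak{b}} W_{\mathfrak{b}} P = P$ from the identity above, so its first $s$ columns are exactly $P$. Feeding $V'$ into Theorem~\ref{thm:hubert_labahn_7.1} then yields a reduced system whose generating invariants include the preferred combinations $[t,z,c]^{P}$.

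For the ($\Rightarrow$) direction, suppose the system can be rewritten in terms of $P$. Every quantity appearing in the dimensionless reduced system is $\mathbb{T}_A$-invariant, so each monomial $[t,z,c]^{P_j}$ must be invariant, which forces $A P_j = 0$ and hence (1). Realizing the reduction through some invariant block $V'_{\mathfrak{b}}$, the preferred combinations form part of a generating set of the invariant field; since any generating set arising this way corresponds to a $\mathbb{Z}$-basis of $\ker_{\mathbb{Z}} A$ (namely the columns of $V'_{\mathfrak{b}}$), the columns of $P$ must be extendable to such a basis. Expressing this in the fixed basis $V_{\mathfrak{b}}$ via $P = V_{\mathfrak{b}}(W_{\mathfrak{b}} P)$ translates it exactly into the requirement that $W_{\mathfrak{b}} P$ extend to a unimodular matrix, giving (2).

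The hard part will be making rigorous the equivalence used in the last paragraph: that ``being able to rewrite the system in terms of $P$'' coincides with ``the columns of $P$ extend to a $\mathbb{Z}$-basis of the kernel lattice.'' I would isolate this as a completion lemma — a monomial substitution $y \mapsto y^{M}$ is an automorphism of the invariant field generated by the algebraically independent $y = [t,z,c]^{V_{\mathfrak{b}}}$ precisely when $M$ is unimodular (leaning on \cite[Theorem 4.2]{Hubert2013c}, which guarantees the $y_i$ generate the invariants) — together with a primitivity argument that $V_{\mathfrak{b}}$ furnishes a genuine $\mathbb{Z}$-basis of $\ker_{\mathbb{Z}} A$ rather than merely a $\mathbb{Q}$-basis, so that $\mathbb{Z}$-completion of a set of monomials is equivalent to unimodular extension of its exponent matrix. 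Once that lemma is in place, both implications follow from the bookkeeping identities $W_{\mathfrak{b}} V_{\mathfrak{b}} = I_{m-r}$ and $P = V_{\mathfrak{b}}(W_{\mathfrak{b}} P)$ established at the outset.
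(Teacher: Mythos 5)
Your proposal is correct and follows essentially the same route as the paper: invariance of the monomials $[t\ z\ c]^{P}$ forces $AP=0$; the identity $P=V_{\mathfrak{b}}(W_{\mathfrak{b}}P)$ (which the paper derives via $V_{\mathfrak{a}}W_{\mathfrak{a}}+V_{\mathfrak{b}}W_{\mathfrak{b}}=I_m$ and Lemma~\ref{lemma:kerA_in_kerWb}, and you derive from the $\mathbb{Z}$-basis property of $V_{\mathfrak{b}}$ — an equivalent bookkeeping step) converts the problem into unimodular extension of $W_{\mathfrak{b}}P$; and the converse builds the new multiplier $V'=V\bigl[\begin{smallmatrix}I_r&0\\0&C\end{smallmatrix}\bigr]$ exactly as the paper does. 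The only difference is presentational: you explicitly flag the informal step equating ``rewritable in terms of $P$'' with ``$P$ extends to a lattice basis,'' which the paper also leaves at the same level of rigor.
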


\begin{proof}
We begin with the forwards implication. If we can express the reduced dynamical system in terms of the parameter combinations given by $P$, then these combinations are necessarily invariants of the scaling action determined by $A$, which precisely means that $AP=0$.

Remark~\ref{remark:hubert_labahn6.5_rational_invariants_by_substitution} notes that the components of $y=[t \ \ z \ \ c]^{V_{\mathfrak{b}}}$ form a generating set of invariants of the $\mathbb{T}_A$-action. Thus, the new invariants $[t \ \ z \ \ c]^P$ can be written as monomials of the original invariants (the components of $y$). In other words, the invariants of $P$ are linear combinations of the columns of $V_{\mathfrak{b}}$.  Thus, there exist integer matrices $C$ and $E$ such that
\begin{equation*}
     V_{\mathfrak{b}}C=[P \ \ E]
\end{equation*}
where $E$ `extends' $P$ to a complete set of invariants. Since rewriting a dynamical system as such is reversible, $C$ must be unimodular. We apply $W_{\mathfrak{b}}$ to obtain
\begin{equation*}
    W_{\mathfrak{b}}V_{\mathfrak{b}}C = C = [W_{\mathfrak{b}}P \ \ W_{\mathfrak{b}}E].
\end{equation*}
The first equality holds because $W=V^{-1}$ and so $W_{\mathfrak{b}}V_{\mathfrak{b}}$ is the identity matrix. Since $C$ is unimodular, we have extended $W_{\mathfrak{b}}P$ to a unimodular matrix.

Conversely, let $C=[W_{\mathfrak{b}}P \ \ \tilde{E}]$ be a unimodular extension of $W_{\mathfrak{b}}$ for some matrix $\tilde{E}$. Define $E \coloneqq V_{\mathfrak{b}}\tilde{E}$, then $\tilde{E}=W_{\mathfrak{b}} V_{\mathfrak{b}}\tilde{E}=W_{\mathfrak{b}}$ since $W_{\mathfrak{b}}V_{\mathfrak{b}}=I_{m-r}$ is the identity matrix. Thus $C=W_{\mathfrak{b}}[P \ \ E]$, which means that
\begin{equation*}
    V_{\mathfrak{b}}C=V_{\mathfrak{b}}W_{\mathfrak{b}}[P \ \ E] = (I_m - V_{\mathfrak{a}}W_{\mathfrak{a}})[P \ \ E] = [P \ \ (I_m - V_{\mathfrak{a}}W_{\mathfrak{a}})E].
\end{equation*}
In the first equality we use the relationship $V_{\mathfrak{a}}W_{\mathfrak{a}}+V_{\mathfrak{b}}W_{\mathfrak{b}}=I_m$ and in the second equality we  use $AP=0$ (by assumption) and Lemma~\ref{lemma:kerA_in_kerWb}. Since post-multiplying $V_{\mathfrak{b}}$ by the unimodular matrix $C$ corresponds simply to a series of column operations of $V_{\mathfrak{b}}$, we have obtained a new Hermite multiplier $V_{\mathfrak{b}}C$. By  Theorem~\ref{thm:hubert_labahn_7.1}, this gives a reduced dynamical system in terms of the invariants determined by $P$.
\end{proof}

\subsubsection{Smith normal form}   \label{sec:smith_normal_form}

Before we present the extension algorithm, we require a tool from linear algebra over the integers.

\begin{definition}\cite[Section 2]{morandi}
An integer matrix $S \in M_{a \times b}(\mathbb{Z})$ is in \emph{Smith normal form} if $S_{ij}=0$ for $i \neq j$ and the non-zero diagonal elements $a_1,\ldots,a_k$ (in order of appearance in $S$) satisfy $a_i$ divides $a_{i+1}$ for $1 \leqslant i \leqslant k-1$.
\end{definition}

\begin{theorem}\cite[Theorem 2.1]{morandi}
\emph{For any integer matrix $X \in M_{a \times b}(\mathbb{Z})$ there exist unimodular matrices $U \in M_a(\mathbb{Z})$ and $U' \in M_b(\mathbb{Z})$ such that $UXU'$ is in Smith normal form.}
\end{theorem}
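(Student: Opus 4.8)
The plan is to prove existence of the Smith normal form by a constructive reduction argument that mimics the Euclidean algorithm, performing unimodular row and column operations and inducting on the matrix dimensions. Recall that left-multiplication by a unimodular $U \in M_a(\mathbb{Z})$ realizes integer row operations (swapping two rows, negating a row, adding an integer multiple of one row to another), and right-multiplication by a unimodular $U' \in M_b(\mathbb{Z})$ realizes the analogous column operations; any composition of such operations is again unimodular. So it suffices to show that $X$ can be brought to Smith normal form using finitely many such operations.

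First I would dispose of the base case $X = 0$ trivially, and otherwise proceed by induction on $\min(a,b)$. For the inductive step, I would locate a nonzero entry of minimal absolute value and move it to position $(1,1)$ using row and column swaps; call it $\delta$. The core subroutine is to make $\delta$ divide every other entry in the first row and first column. If some entry $x_{1j}$ in the first row is not a multiple of $\delta$, the division algorithm gives $x_{1j} = q\delta + r$ with $0 \le r < |\delta|$; subtracting $q$ times the first column from the $j$-th column replaces $x_{1j}$ by $r$, and swapping this smaller entry into position $(1,1)$ strictly decreases the minimal absolute value. The same move handles an offending entry in the first column using row operations instead. Since $|\delta|$ is a nonnegative integer that strictly decreases at each such step, the subroutine terminates, after which $\delta$ divides the entire first row and column; I then clear those entries to obtain the block form $\bigl[\begin{smallmatrix} \delta & 0 \\ 0 & X' \end{smallmatrix}\bigr]$.

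The step I expect to be the main obstacle is enforcing the divisibility condition $a_i \mid a_{i+1}$, since ordinary diagonalization does not guarantee it. To secure $\delta \mid x'_{ij}$ for every entry of $X'$, I would argue that if some $x'_{ij}$ is not divisible by $\delta$, then adding the row containing $x'_{ij}$ to the first row reintroduces an entry in the first row not divisible by $\delta$; running the subroutine above again strictly decreases $|\delta|$. As this can happen only finitely often, the process stabilizes with $\delta$ dividing every entry of $X'$. I would then apply the inductive hypothesis to $X'$ to put it in Smith normal form with its own divisibility chain $a_2 \mid a_3 \mid \cdots$. Because the operations on $X'$ never alter $\delta$ and $\delta$ divides all entries of $X'$ (hence divides $a_2$), setting $a_1 \coloneqq \delta$ in the top-left corner preserves the full chain $a_1 \mid a_2 \mid \cdots$, completing the induction. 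The entire argument hinges on the well-ordering of the nonnegative integers to guarantee termination of each Euclidean-style reduction.
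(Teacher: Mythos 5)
Your argument is correct and complete: it is the standard constructive proof of the Smith normal form via Euclidean-style row and column reduction, with the two termination arguments (for clearing the first row/column and for enforcing that the pivot divides the remaining block) both justified by well-ordering, and the divisibility chain correctly propagated through the induction since every entry produced by subsequent operations on $X'$ is an integer combination of entries divisible by $\delta$. The paper itself gives no proof of this statement --- it is quoted directly from the cited reference \cite[Theorem 2.1]{morandi} --- and your proof is essentially the classical argument that such a reference supplies, so there is nothing to compare beyond noting that you have filled in a proof the paper deliberately omits.
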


We now prove a technical result for the purpose of the algorithm in Section~\ref{sec:extension_algorithm}:

\begin{lemma}
For matrices $W_{\mathfrak{b}} \in M_{(m-r) \times m}(\mathbb{Z})$ and $P \in M_{m \times s}(\mathbb{Z})$ as defined above, if the Smith normal form decomposition
\begin{equation*}
    U(W_{\mathfrak{b}}P)U' = S
\end{equation*}
has all diagonal entries equal to $1$, then we can extend $W_{\mathfrak{b}}P$ to a unimodular matrix $C = [W_{\mathfrak{b}}P \ \ \tilde{U}]$, where $\tilde{U}$ is given by the last $m-s$ columns of $U^{-1}$.
\end{lemma}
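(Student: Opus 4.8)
The plan is to read $W_{\mathfrak{b}}P$ directly off its Smith normal form and then exhibit the proposed extension $C$ as a product of two manifestly unimodular matrices, so that unimodularity of $C$ follows at once.

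First I would fix the shapes. Since $W_{\mathfrak{b}}P \in M_{(m-r)\times s}(\mathbb{Z})$, the decomposition has $U \in M_{m-r}(\mathbb{Z})$, $U' \in M_s(\mathbb{Z})$, and $S \in M_{(m-r)\times s}(\mathbb{Z})$. The hypothesis that every diagonal entry of $S$ equals $1$ forces
\[
S = \begin{bmatrix} I_s \\ 0 \end{bmatrix},
\]
an $(m-r)\times s$ matrix; in particular this presupposes $s \leqslant m-r$ and that $W_{\mathfrak{b}}P$ has full column rank $s$.

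Next I would rearrange the decomposition into $W_{\mathfrak{b}}P = U^{-1}S(U')^{-1}$ and observe that right-multiplying $U^{-1}$ by $S$ simply selects its first $s$ columns. Writing $U^{-1} = [\,B_1 \mid B_2\,]$, where $B_1$ comprises the first $s$ columns and $B_2 = \tilde{U}$ the remaining $(m-r)-s$ columns, this yields $W_{\mathfrak{b}}P = B_1 (U')^{-1}$. Note that $B_2$ then supplies exactly the number of columns needed to make $C$ square. Assembling and factoring the candidate extension gives
\[
C = [\,W_{\mathfrak{b}}P \mid \tilde{U}\,] = [\,B_1(U')^{-1} \mid B_2\,] = [\,B_1 \mid B_2\,]\begin{bmatrix} (U')^{-1} & 0 \\ 0 & I \end{bmatrix} = U^{-1}\begin{bmatrix} (U')^{-1} & 0 \\ 0 & I \end{bmatrix}.
\]
Both factors are unimodular: $U^{-1}$ because $U$ is, and the block-diagonal matrix because $(U')^{-1}$ and the identity block are. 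Hence $C$, being a product of unimodular matrices, is unimodular, which is precisely the required extension.

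I do not anticipate a serious obstacle; once $S$ is identified the argument is a one-line block manipulation. The only place demanding care is the dimension bookkeeping: one must confirm that the all-ones diagonal genuinely yields $S = \bigl[\begin{smallmatrix} I_s \\ 0\end{smallmatrix}\bigr]$ (so that $W_{\mathfrak{b}}P$ has full column rank and $s \leqslant m-r$), and that $\tilde{U}$ consists of exactly the trailing $(m-r)-s$ columns of $U^{-1}$, so that $C$ comes out square of size $(m-r)\times(m-r)$. With the block factorisation in hand, unimodularity is then immediate.
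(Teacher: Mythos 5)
Your proof is correct and follows essentially the same route as the paper: both identify $S=\bigl[\begin{smallmatrix}I_s\\0\end{smallmatrix}\bigr]$ and factor $C=U^{-1}\bigl[\begin{smallmatrix}(U')^{-1}&0\\0&I\end{smallmatrix}\bigr]$ as a product of unimodular matrices. Your dimension bookkeeping is in fact slightly more careful than the statement itself, which says ``last $m-s$ columns of $U^{-1}$'' where, as you note, it should be the last $(m-r)-s$ columns.
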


\begin{proof}
We first note that since $s$ is the number of invariants defined by $P$, we necessarily have that $s \leqslant m-r$. By assumption, $S$ is of the form
\begin{equation*}
S = \begin{bmatrix} I_s \\ 0 \end{bmatrix}
\end{equation*}
and thus has $m-r-s$ zero-rows. Extend $S$ to $I_{m-r}$ with the matrix
\begin{equation*}
    T = \begin{bmatrix} 0 \\ I_{m-r-s} \end{bmatrix}.
\end{equation*}
Noting that $U,U'$ are invertible, we compute
\begin{align*}
    C \coloneqq& U^{-1} I_{m-r} \begin{bmatrix} U' & 0 \\ 0 & I_{m-r-s} \end{bmatrix}^{-1}  
    = \begin{bmatrix} W_{\mathfrak{b}}PU' & U^{-1}T \end{bmatrix} \begin{bmatrix} U' & 0 \\ 0 & I_{m-r-s} \end{bmatrix}^{-1} \\
    =& \begin{bmatrix} W_{\mathfrak{b}}P & U^{-1}T \end{bmatrix}.
\end{align*}
By the definition of $T$, indeed $U^{-1}T$ is given by the last $m-s$ columns of $U^{-1}$. Since both $U$ and $U'$ are unimodular we see that indeed $C=[W_{\mathfrak{b}}P \ \ \tilde{U}]$ is unimodular, where $\tilde{U} \coloneqq U^{-1}T$.

\end{proof}

\subsubsection{Extension algorithm}   \label{sec:extension_algorithm}

With these tools at our disposal, we may now summarize an algorithm for writing the reduced system of \eqref{eqn:reduced_system} in terms of our $s$ chosen invariants defined by $P \in M_{m \times s}(\mathbb{Z})$:

\begin{enumerate}
    \item Check that $AP=0$. If this does not hold, then within $P$ we have chosen non-invariants of $A$. To remedy this, choose $\tilde{P}=V_{\mathbb{b}}W_{\mathbb{b}}P$ instead.\footnote{By Lemma~\ref{lemma:kerA_in_kerWb} we have $A\tilde{P}=0$ and $W_{\mathfrak{b}}\tilde{P}=W_{\mathfrak{b}}P$, so $P$ is altered without losing any information.}
    
    \item Compute the Smith normal form decomposition $U(W_{\mathfrak{b}}P)U' = S$ of $W_{\mathfrak{b}}P$. If the leading diagonal of $S$ has an entry not equal to $1$ then $S$ is not unimodular\footnote{The non-zero entries of $S$ are unique up to multiplication by $\pm 1$ and so we can choose $U$ and $U'$ to ensure that all entries of $S$ are non-negative.}, thus we have an invalid choice of invariants and cannot continue.
    
    \item Let $\tilde{U}$ be the last $m-s$ columns of $U^{-1}$ and $C = [W_{\mathfrak{b}}P \ \ \tilde{U}]$ be an extension of $W_{\mathfrak{b}}P$ to the unimodular matrix $C$. The desired Hermite multiplier $\tilde{V}$ is
    \begin{equation*}
        \tilde{V}=V \begin{bmatrix} I_r & 0 \\ 0 & C \end{bmatrix}.
    \end{equation*}
    
    \item (Optional) $\tilde{V}_{\mathfrak{b}}$ now has the required form with the first $s$ columns equal to $P$. We can however perform column operations on the last $m-r-s$ columns of $\tilde{V}$ to ensure these columns are in Hermite normal form such that they are reduced with respect to earlier columns.
\end{enumerate}

\subsection{Sensible change of variables never lead to additional symmetries}\label{sec:Change_of_variables} 

In this section we show that if we require that in any change of variables only quantities with the same units are added together, then the maximal scaling symmetry has the same rank before and after the change of variables.

First, we consider an example which at first makes it look like a change of variable can yield more scaling symmetries, and hence lead to greater model reduction.

\begin{example}\label{eg:ChangeVar}
    Consider the following ODE system
    \[\frac{dx}{dt}=(a+b)x+b,\]
   describing the behavior of the quantity \(x\) in time, depending on the constant parameters \(a\) and \(b\). When we choose the variable order \((t,x,a,b)\), we find the scaling matrix 
   \[A=\begin{pmatrix}1 & 0 & -1& -1\end{pmatrix}.\]
   That is, we find a one-dimensional scaling symmetry.

   On the other hand, if we do the invertible change of variables $c=a+b$,
   so we have the system
   \[\frac{dx}{dt}=cx+b,\]
   and naively apply the method with variable order \((t,x,b,c)\), then we find the scaling matrix
   \[B=\begin{pmatrix}1 & 0 &-1 &-1 \\ 0 & 1 & 1&0\end{pmatrix}.\]
   This gives a two-dimensional scaling symmetry --- a seemingly greater reduction than the original system. 
   This excessive reduction results from our failure to take into account where the new parameter \(c\) came from. Indeed, noting that \(c=a+b\) tells us that, if this change of variable was dimensionally consistent, then \(a\) and \(b\) must have the same units and so \(b\) and \(c\) must also have the same units. If we add this constraint (in the form of an additional column for the matrix \(K\), corresponding to the rational function \(b/c\)), then we find scaling matrix
   \[C=\begin{pmatrix} 1 & 0 &-1 &-1\end{pmatrix}.\]
   That is, we find a one-dimensional scaling symmetry as in the original system.
\end{example}

Suppose we have an ODE system where for simplicity we denote all the quantities involved by \(x_1,\ldots,x_n\). These may be the time variable, the dependent variables, parameters, etc. Suppose we have another ODE system involving the quantities \(y_1,\ldots,y_m\). We say that \(\phi\) is a \emph{(rational) change of variable from the first system in the quantities \(x_i\) to the second system in the quantities \(y_j\)} if there are rational functions \(\phi_1, \ldots,\phi_j\) in the \(x_i\)'s such that the ODE's of the first system can be rewritten in terms of the \(\phi_j\)'s, and the identification of \(\phi_j\) with \(y_j\) give us the equations in the second system. We say such a change of variable is \emph{invertible} if the \(x_i\)'s can be written as rational functions of the \(\phi_j\)'s. We say that a scaling symmetry ensures that the change of variable is \emph{dimensionally consistent}, if it induces a scaling of the \(\phi_j\)'s.

\begin{example}
    In Example \ref{eg:ChangeVar}, the quantities involved in the original system are \(x_1=t, x_2=x, x_3=a, x_4=b\) and in the second system the quantities involved are \(y_1=t, y_2=x, y_3=b, y_4=c\). The change of variables from the first system to the second which we describe via \(c=a+b\), is given by \(\phi_1=x_1, \phi_2=x_2, \phi_3=x_4, \phi_4=x_3+x_4\). The first system is then
    \[\frac{dx_2}{dx_1}=(x_3+x_4)x_2+x_3,\]
    and substituting the expressions \(\phi_j\) with \(y_j\) we indeed get the second system
    \[\frac{dy_2}{dy_1}=(y_4)y_2+y_3.\]
    As \(x_3=(x_3+x_4)-x_4=\phi_4-\phi_3\), we can see that this change of variables is invertible.  This change of variables is dimensionally consistent with any scaling symmetry whose scaling matrix have the same third and fourth column, that is, any scaling symmetry that correspond to \(a\) and \(b\) having the same units. 
\end{example}

We can now formulate the main result of this subsection:
\begin{theorem}\label{thm:ChangeOfVariables}
Suppose that two systems of ODE's are linked by an invertible change of variables. Then the rank of the maximal scaling symmetries of the system which also ensure that the change of variables is dimensionally consistent is the same no matter which system is considered.
\end{theorem}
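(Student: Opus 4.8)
The plan is to build an explicit rank-preserving correspondence between the dimensionally consistent scaling symmetries of the two systems, and then to exploit the symmetry of the hypothesis---both the change of variables $\phi$ and its inverse $\psi$ are available---to conclude. Write the first system in the quantities $x_1,\ldots,x_n$ and the second in $y_1,\ldots,y_m$, with $y_j = \phi_j(x)$ and, by invertibility, $x_i = \psi_i(y)$. Suppose $A \in M_{r\times n}(\ZZ)$ is a scaling symmetry of the first system that makes $\phi$ dimensionally consistent; by definition this means that the $\TT^r$-action induces a scaling of each $\phi_j$, i.e. $\phi_j(\lambda^A * x) = \lambda^{c_j}\phi_j(x)$ for integer weight vectors $c_j \in \ZZ^r$. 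Collecting the $c_j$ as columns defines a matrix $B \in M_{r\times m}(\ZZ)$, so that the same torus $\TT^r$ acts on the $y$-space by $y \mapsto \lambda^B * y$.

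First I would check that $B$ is a scaling symmetry of the second system which makes $\psi$ dimensionally consistent. For the first claim, the change of variables carries solutions of system one bijectively to solutions of system two; since $\lambda^A * x$ is again a solution of the first system (the defining property of a scaling symmetry, Definition~\ref{defn:scaling_symmetry}) and $\phi(\lambda^A * x) = \lambda^B * \phi(x)$ by construction, the action $y \mapsto \lambda^B * y$ sends solutions of system two to solutions of system two, so $B$ is a scaling symmetry. For the second claim, applying $\psi$ to $\lambda^B * y$ recovers $\lambda^A * x$, whence $\psi_i(\lambda^B * y) = \lambda^{A_i}\psi_i(y)$, exhibiting each $\psi_i$ as a character and so showing that $B$ makes the inverse change of variables dimensionally consistent.

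The heart of the argument---and the step I expect to be the main obstacle---is showing $\rank A = \rank B$. I would phrase this through the left null spaces, using $\rank A = r - \dim\{v \in \ZZ^r : v^{\top} A = 0\}$, and the observation that a direction $v$ lies in this null space exactly when the one-parameter subtorus $s \mapsto (s^{v_1},\ldots,s^{v_r})$ acts trivially on every $x_i$. Such a subtorus then acts trivially on every rational function of the $x_i$, in particular on each $\phi_j$, giving $v^{\top} B = 0$; hence $\{v : v^{\top} A = 0\} \subseteq \{v : v^{\top} B = 0\}$. Invertibility supplies the reverse inclusion, which is the real content: if $v$ acts trivially on every $y_j = \phi_j$, then it acts trivially on every rational function of the $y_j$, and since $x_i = \psi_i(y)$ this forces trivial action on each $x_i$, i.e. $v^{\top} A = 0$. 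The two null spaces therefore coincide and $\rank A = \rank B$.

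Finally I would assemble the conclusion. The construction shows that any dimensionally consistent scaling symmetry of the first system of rank $\rho$ yields one of the second system of the same rank $\rho$, so the maximal rank attainable on the second system is at least that of the first. Running the identical argument with the roles of the systems exchanged---legitimate precisely because the change of variables is invertible, so $\psi$ is itself a change of variables in the opposite direction---gives the reverse inequality, and hence equality. The only remaining points are formal: that a rational function which transforms under $\TT^r$ does so by an integer character (so that $B$ indeed has integer entries), and that the change of variables respects the differential structure so that the solution correspondence invoked in the first step is genuine; both follow directly from the standing definitions.
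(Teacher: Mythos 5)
Your proposal is correct and follows essentially the same route as the paper's proof: you construct the induced scaling action on the other set of variables, use invertibility of the change of variables to show the two actions have the same rank, and then conclude by exchanging the roles of the two systems. The only cosmetic difference is packaging --- the paper phrases the rank comparison as injectivity of a group homomorphism $\rho\colon \TT_\phi\to\TT_\psi$ between maximal subtori and works with the invariant rational functions $F_l$ rather than directly with solutions, whereas you compare left null spaces of the integer matrices $A$ and $B$; these are the same argument.
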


We obtain this result as a consequence of an algebraic lemma. Because the dynamical system does not really play a role here beyond providing the list of rational functions required to be invariant, we can rephrase the problem in terms of two fields of rational functions in the same number of indeterminates, say \(\kk(x):=\kk(x_1,\ldots,x_n)\) and \(\kk(y):=\kk(y_1,\ldots,y_n)\). The change of variables from \(\kk(x)\) to \(\kk(y)\) then yields a \(\kk\)-algebra homomorphism 
\(\phi\colon \kk(y)\to \kk(x)\), by setting \(\phi(y_j):=\phi_j\) and then we get the image of an arbitrary element by extending algebraically\footnote{We get the image of any element of \(\kk(y)\) by assuming that for any \(f,g\in \kk(y)\) and \(r\in\kk\),  \(\phi(r)=r\), \(\phi(f+g)=\phi(f)+\phi(g)\), \(\phi(fg)=\phi(f)\phi(g)\), and \(\phi(f/g)=\phi(f)/\phi(g)\).}. The inverse change of variables from \(\kk(y)\) to \(\kk(x)\) similarly yields a \(\kk\)-algebra homomorphism \(\psi\colon \kk(x)\to \kk(y)\) by setting \(\psi(y_j):=\psi_j\). 
We assume that the change of variables from \(\kk(x)\) to \(\kk(y)\) is invertible, that is,  each \(x_i\) can be written as a rational function of the \(\phi(y_j)\)'s and each \(y_j\) can be written as a rational function of the \(\psi(x_i)\)'s. This corresponds to assuming that the two \(\kk\)-algebra homomorphisms are mutually inverse, that is, \(\phi\circ \psi\) is the identity on \(\kk(x)\)  and \(\psi\circ\phi\) is the identity on \(\kk(y)\). The algebraic lemma we need to prove in order to prove Theorem \ref{thm:ChangeOfVariables} is then as follows.

\begin{lemma}\label{lem:AlgLemma}
 With the notation described above, we have:
    \begin{enumerate}
        \item The maximal scaling symmetry which allows for the change of variable from \(\kk(x)\) to \(\kk(y)\) to be dimensionally consistent has the same rank as the maximal scaling symmetry which allows for the change of variable from \(\kk(y)\) to \(\kk(x)\) to be dimensionally consistent.
    \item If \(F_1,\ldots,F_s\in \kk(x)\) and \(H_l=\psi(F_l)\), then the maximal scaling symmetry which fixes \(F_1,\ldots,F_s\) and allows for the change of variable from \(\kk(x)\) to \(\kk(y)\) to be dimensionally consistent has the same rank as the maximal scaling symmetry which fixes \(G_1,\ldots,G_s\) and  allows for the change of variable from \(\kk(y)\) to \(\kk(x)\) to be dimensionally consistent.
     \end{enumerate}
\end{lemma}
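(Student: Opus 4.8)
The plan is to prove both parts at once by transporting scaling actions through the field isomorphism, rather than by wrangling exponent matrices. First I would set up a dictionary between scaling matrices and torus actions by automorphisms: a matrix $A\in M_{r\times n}(\ZZ)$ determines an action $\sigma\colon\TT^r\to\mathrm{Aut}_\kk(\kk(x))$ by $\sigma_\lambda(x_i)=\lambda^{A_{\cdot,i}}x_i$, and conjugating by $\psi$ produces a torus action $\tau_\lambda:=\psi\circ\sigma_\lambda\circ\phi$ on $\kk(y)$. The crucial equivalence I would then establish is that $A$ makes the change of variables from $\kk(x)$ to $\kk(y)$ dimensionally consistent (each $\phi_j$ scales under $A$) \emph{if and only if} $\tau$ is again a scaling action, i.e.\ $\tau_\lambda(y_j)=\lambda^{b_j}y_j$ for integer vectors $b_j$. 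This is a short computation: $\tau_\lambda(y_j)=\psi(\sigma_\lambda(\phi_j))$, which equals $\lambda^{b_j}y_j$ exactly when $\sigma_\lambda(\phi_j)=\lambda^{b_j}\phi_j$, using $\psi\circ\phi=\mathrm{id}$. The matrix $B$ whose columns are the $b_j$ is the scaling action induced on $\kk(y)$.

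For Part~1, conjugation by $\psi$ is invertible (its inverse is conjugation by $\phi$), so $A\mapsto B$ is a bijection between the set $S_x$ of scaling actions making the change from $\kk(x)$ to $\kk(y)$ dimensionally consistent and the set $S_y$ making the change from $\kk(y)$ to $\kk(x)$ dimensionally consistent; that the image lands in $S_y$ is symmetric, since transporting $B$ back recovers $A$, forcing each $\psi_i$ to scale under $B$. It then remains to show this bijection preserves rank. Because $\psi$ is a field isomorphism, $\tau_\lambda=\mathrm{id}$ iff $\sigma_\lambda=\mathrm{id}$, so the two $\TT^r$-actions have the same ineffective kernel; hence their effective dimensions coincide, and the effective dimension of a scaling action is precisely the rank of its defining matrix. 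Thus $\rank A=\rank B$, and taking maxima over the rank-preserving bijection $S_x\leftrightarrow S_y$ yields equal maximal ranks.

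For Part~2 I would simply restrict the same bijection using $\tau_\lambda(\psi(F_l))=\psi(\sigma_\lambda(F_l))$: the action $\sigma$ fixes every $F_l$ iff $\tau$ fixes every $G_l=\psi(F_l)$. Therefore $A\mapsto B$ restricts to a rank-preserving bijection between $\{A\in S_x:\sigma\text{ fixes }F_1,\ldots,F_s\}$ and $\{B\in S_y:\tau\text{ fixes }G_1,\ldots,G_s\}$, and the maximal ranks again agree. Theorem~\ref{thm:ChangeOfVariables} then follows by taking $F_1,\ldots,F_s$ to be the rational functions that the two ODE systems require to be invariant, which correspond to one another under $\psi$.

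The hard part will be the dictionary step: establishing rigorously that ``dimensionally consistent'' is \emph{equivalent} to ``the conjugated action is diagonal,'' and checking that the induced weights $b_j$ are genuine integer vectors so that $B$ is a bona fide integer scaling matrix (not merely some abstract automorphism). A secondary point deserving care is the rank bookkeeping: I would restrict attention to full-row-rank (maximal) representatives so that ``effective dimension'' unambiguously equals $\rank A$, and state the kernel argument carefully so as not to conflate the parametrizing torus $\TT^r$ with its effective image. Once the conjugation is seen to preserve the kernel of the $\TT^r$-action, rank preservation is immediate.
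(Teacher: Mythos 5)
Your proposal is correct and follows essentially the same route as the paper: your conjugated action $\tau_\lambda=\psi\circ\sigma_\lambda\circ\phi$ is exactly the paper's homomorphism $\rho$ (defined there by $\lambda\bullet y_j=\lambda^{\beta_j}y_j$), your kernel comparison is the paper's proof that $\rho$ is injective, and your observation that $\tau_\lambda(\psi(F'))=\psi(\sigma_\lambda(F'))$ is the identity the paper establishes by its long rational-function expansion. The only cosmetic difference is that packaging the construction as conjugation by the field isomorphism lets you replace that explicit computation with the one-line identity $\phi\circ\psi=\mathrm{id}$, which is a slightly cleaner presentation of the same argument.
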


\begin{proof}[Proof of Theorem \ref{thm:ChangeOfVariables}]
    The theorem follows immediately from the second statement of Lemma~\ref{lem:AlgLemma}.
\end{proof}

\begin{proof}[Proof of Lemma \ref{lem:AlgLemma}]
We write \(\TT_x\) to denote the \(n\)-dimensional torus acting on \(\kk(x)\) by scaling the \(n\) indeterminates independently, and \(\TT_y\) for the \(n\)-dimensional torus acting on \(\kk(y)\). The maximal scaling symmetry of \(\kk(x)\) which allows the change of variables from \(\kk(x)\) to \(\kk(y)\) to be dimensionally consistent is the maximal subtorus of \(\TT_x\) such that \(\phi(y_j)\) is homogeneous for all \(j\), we denote it by \(\TT_\phi\). In other words, \(\TT_\phi\) is the set of all \(\lambda\in \TT_x\) such that the scaling of the \(x_i\)'s induces a scaling of the \(\phi(y_j)\)'s, i.e., for each \(j\) there is a \(\beta_j\in\ZZ^n\) such that \(\lambda \star\phi(y_j)=\lambda^{\beta_j}\phi(y_j)\).
Similarly, \(\TT_\psi\)  denotes the  maximal scaling symmetry of \(\kk(y)\) which allows the change of variables from \(\kk(y)\) to \(\kk(x)\) to be dimensionally consistent.

For each \(i\) and \(j\), \(\phi(y_j)=\phi_j\) and \(\psi(x_i)=\psi_i\)  are rational functions in the \(x_i\)'s and \(y_j\)'s, respectively, with coefficients in \(\kk\), which means that we can write
\begin{align*}
   \phi(y_j)=\frac{\sum{\phi_{\delta_j}x^{\delta_j}}}{\sum{\phi_{\tilde{\delta}_j}x^{\tilde{\delta}_j}}}, \text{ and }&
   \psi(x_i)=\frac{\sum{\psi_{\gamma_i}y^{\gamma_i}}}{\sum{\psi_{\tilde{\gamma}_i}y^{\tilde{\gamma}_i}}},
\end{align*}
where only finitely many \(\phi_{\delta_j},\phi_{\tilde{\delta}_j},\psi_{\gamma_i},\psi_{\tilde{\gamma}_i}\in \kk\) are non-zero and \({\delta_j},{\tilde{\delta}_j},{\gamma_i},{\tilde{\gamma}_i}\in \ZZ^n\).

We construct a homomorphism of groups \(\rho\colon \TT_\phi \to \TT_y\). Take \(\lambda\in\TT_\phi\) and pick \(1\leq j\leq n\). As mentioned earlier, there is \(\beta_j\in\ZZ^n\) such that \(\lambda \star \phi(y_j)=\lambda^{\beta_j}\phi(y_j)\). Setting \(\lambda \bullet y_j:=\lambda^{\beta_j}y_j\) for each \(j\) gives a well defined scaling of the indeterminates \(y_1,\ldots,y_j\), and so mapping \(\lambda\) to the corresponding element of \(\TT_y\) gives a well-defined group homomorphism. Note that as \(\psi\) and \(\phi\) are mutually inverse, we have
        \begin{align*}
            x_i &= \phi(\psi(x_i))
                   =\phi\left(\frac{\sum{\psi_{\gamma_i}y^{\gamma_i}}}{\sum{\psi_{\tilde{\gamma}_i}y^{\tilde{\gamma}_i}}}\right)
                   =\frac{\sum{\psi_{\gamma_i}\phi(y_1)^{\gamma_{i,1}}}\cdots\phi(y_n)^{\gamma_{i,n}}}{\sum{\psi_{\tilde{\gamma}_i}\phi(y_1)^{\tilde{\gamma}_{i,1}}}\cdots\phi(y_n)^{\tilde{\gamma}_{i,n}}}.       
        \end{align*}
        Take \(\lambda\in\TT_\phi\) and suppose that \(\rho(\lambda)\) is the identity element in \(\TT_y\), that is, suppose that \(\lambda\bullet y_j=y_j\) for all \(j\). Then by definition of the action of \(\lambda\) on \(y_j\), it follows that
        \(\lambda\star \phi(y_j)=\phi(y_j)\) for all \(j\). Thus, we have
        \begin{align*}
            \lambda \star x_i &=\lambda \star \frac{\sum{\psi_{\gamma_i}\phi(y_1)^{\gamma_{i,1}}}\cdots\phi(y_n)^{\gamma_{i,n}}}{\sum{\psi_{\tilde{\gamma}_i}\phi(y_1)^{\tilde{\gamma}_{i,1}}}\cdots\phi(y_n)^{\tilde{\gamma}_{i,n}}}
                            =\frac{\sum{\psi_{\gamma_i}(\lambda\star\phi(y_1))^{\gamma_{i,1}}}\cdots(\lambda\star\phi(y_n))^{\gamma_{i,n}}}{\sum{\psi_{\tilde{\gamma}_i}(\lambda\star\phi(y_1))^{\tilde{\gamma}_{i,1}}}\cdots(\lambda\star\phi(y_n))^{\tilde{\gamma}_{i,n}}}\\
                            &=\frac{\sum{\psi_{\gamma_i}\phi(y_1)^{\gamma_{i,1}}}\cdots\phi(y_n)^{\gamma_{i,n}}}{\sum{\psi_{\tilde{\gamma}_i}\phi(y_1)^{\tilde{\gamma}_{i,1}}}\cdots\phi(y_n)^{\tilde{\gamma}_{i,n}}}
                            = x_i,
        \end{align*}
         which then implies that \(\lambda\) is the identity in \(\TT_x\) (and so also in \(\TT_\phi\)).  Therefore, the group homomorphism \(\rho\) has a trivial kernel, which means that \(\TT_\phi\) is isomorphic to a subgroup of \(\TT_y\).

         To prove the first statement, we show that \(\rho(\lambda)\in \TT_\psi\) for any \(\lambda\in\TT_\phi\). Indeed, this then implies that \(\TT_\phi\) is isomorphic to a subgroup of \(\TT_\psi\), and in particular \(\TT_\phi\) has rank at most equal to the rank of \(\TT_\psi\). The same argument can be used to show that \(\TT_\psi\) has rank at most equal to the rank of \(\TT_\phi\), proving the statement.

         Take \(\lambda \in \TT_\phi\). As \(\TT_\phi\) is a subgroup of \(\TT_x\), for each \(i\) there is an \(\alpha_i\in\ZZ^n\) such that \(\lambda\star x_i=\lambda^{\alpha_i}x_i\). Then for any \(i\), we have
         \[ 
             \lambda \bullet \psi(x_i) = \lambda \bullet  \frac{\sum{\psi_{\gamma_i}y^{\gamma_i}}}{\sum{\psi_{\gamma_i}y^{\tilde{\gamma}_i}}}\\
                                    =\frac{\sum{\psi_{\gamma_i}(\lambda\bullet y_1)^{\gamma_{i,1}}\cdots (\lambda\bullet y_n)^{\gamma_{i,n}}}}{\sum{\psi_{\tilde{\gamma}_i}(\lambda \bullet y_1)^{\tilde{\gamma}_{i,1}}\cdots (\lambda \bullet y_n)^{\tilde{\gamma}_{i,n}}}}.\]
        Using the definition of \(\lambda \bullet y_j\), this is equal to
\[\frac{\sum{\psi_{\gamma_i}(\lambda^{\beta_1} y_1)^{\gamma_{i,1}}\cdots (\lambda^{\beta_1} y_n)^{\gamma_{i,n}}}}{\sum{\psi_{\tilde{\gamma}_i}(\lambda^{\beta_1} y_1)^{\tilde{\gamma}_{i,1}}\cdots (\lambda^{\beta_1} y_n)^{\tilde{\gamma}_{i,n}}}}\\
                                     =\frac{\sum{\psi_{\gamma_i}\lambda^{\beta_1\gamma_{i,1}}( y_1)^{\gamma_{i,1}}\cdots \lambda^{\beta_1\gamma_{i,n}} (y_n)^{\gamma_{i,n}}}}{\sum{\psi_{\tilde{\gamma}_i}\lambda^{\beta_1\tilde{\gamma}_{i,1}} (y_1)^{\tilde{\gamma}_{i,1}}\cdots \lambda^{\beta_1\tilde{\gamma}_{i,n}} (y_n)^{\tilde{\gamma}_{i,n}}}}.\]
Using that \(\psi\) and \(\phi\) are mutual inverse, this is then

                                     \[\frac{\sum{\psi_{\gamma_i}(\lambda^{\beta_1})^{\gamma_{i,1}}( \psi(\phi(y_1)))^{\gamma_{i,1}}\cdots (\lambda^{\beta_1})^{\gamma_{i,n}} (\psi(\phi(y_n)))^{\gamma_{i,n}}}}{\sum{\psi_{\tilde{\gamma}_i}(\lambda^{\beta_1})^{\tilde{\gamma}_{i,1}} (\psi(\phi(y_1)))^{\tilde{\gamma}_{i,1}}\cdots (\lambda^{\beta_1})^{\tilde{\gamma}_{i,n}} (\psi(\phi(y_n)))^{\tilde{\gamma}_{i,n}}}},\]
      and since \(\psi\) is a homomorphism this equals
      \begin{align*}\psi\left(\frac{\sum{\psi_{\gamma_i}(\lambda^{\beta_1})^{\gamma_{i,1}}( \phi(y_1))^{\gamma_{i,1}}\cdots (\lambda^{\beta_1})^{\gamma_{i,n}} (\phi(y_n))^{\gamma_{i,n}}}}{\sum{\psi_{\tilde{\gamma}_i}(\lambda^{\beta_1})^{\tilde{\gamma}_{i,1}} (\phi(y_1))^{\tilde{\gamma}_{i,1}}\cdots (\lambda^{\beta_1})^{\tilde{\gamma}_{i,n}} (\phi(y_n))^{\tilde{\gamma}_{i,n}}}}\right)\\
                                     =\psi\left(\frac{\sum{\psi_{\gamma_i}(\lambda^{\beta_1} \phi(y_1))^{\gamma_{i,1}}\cdots (\lambda^{\beta_1}\phi(y_n))^{\gamma_{i,n}}}}{\sum{\psi_{\tilde{\gamma}_i}(\lambda^{\beta_1}\phi(y_1))^{\tilde{\gamma}_{i,1}}\cdots (\lambda^{\beta_1}\phi(y_n))^{\tilde{\gamma}_{i,n}}}}\right).\end{align*}
        Next, using the definition of \(\lambda \star\phi(y_j)\) and the expression for \(x_i\), we get that the above equals
                            \begin{align*}&\psi\left(\frac{\sum{\psi_{\gamma_i}(\lambda\star \phi(y_1))^{\gamma_{i,1}}\cdots (\lambda\star\phi(y_n))^{\gamma_{i,n}}}}{\sum{\psi_{\tilde{\gamma}_i}(\lambda\star\phi(y_1))^{\tilde{\gamma}_{i,1}}\cdots (\lambda\star\phi(y_n))^{\tilde{\gamma}_{i,n}}}}\right)\\
                                     &~~~~~~~~~~~~~~~~~~~=\psi\left(\lambda\star\frac{\sum{\psi_{\gamma_i}( \phi(y_1))^{\gamma_{i,1}}\cdots (\phi(y_n))^{\gamma_{i,n}}}}{\sum{\psi_{\tilde{\gamma}_i}(\phi(y_1))^{\tilde{\gamma}_{i,1}}\cdots (\phi(y_n))^{\tilde{\gamma}_{i,n}}}}\right)
                                     =\psi(\lambda\star x_i)=\psi(\lambda_i^{\alpha_i}x_i).\end{align*}
        Finally, as \(\psi\) is a homomorphism, we have shown that
        \[\lambda \bullet \psi(x_i)=\psi(\lambda_i^{\alpha_i}x_i)=\lambda_i^{\alpha_i}\psi(x_i),\]
         which means that \(\rho(\lambda)\in\TT_\psi\) as desired. 

         By symmetry again, to show the second statement it suffices to show that if \(\lambda \in\TT_\phi\) fixes \(F_l\) for all \(l\), then \(\rho(\lambda)\) fixes \(H_l\) for all \(l\). Fix \(l\). Take  \(\lambda \in\TT_\phi\) and assume that it fixes \(F_l\). Note that the long calculation at the end of the proof of the first statement implies that \(\lambda\bullet \psi(F')=\psi(\lambda\star F')\) for any \(F'\in\kk(x)\). Hence, we have that
         \[
             \lambda\bullet H = \lambda\bullet(\psi(F))= \psi(\lambda\star F)=\psi(F)=H.\]
    
\end{proof}

\section{Applications} \label{sec:applications}

First, we illustrate how the n\"aive application of Theorem~\ref{thm:hubert_labahn_7.1} to Michaelis-Menten kinetics without considering initial conditions or familiar constants leads to a non-standard nondimensionalization.  We then study in full detail how we can recover the familiar nondimensionalizations from \cite[Chapter 6]{murray}, subsequently considering the quasi-steady-state approximation of Segel and Slemrod \cite[Section 6.2]{murray}.  We also present new nondimensionalizations of both two- and four-variable Michaelis-Menten.  The last two examples in this section are a cell cycle control network, and a linear compartment model for vaccine injection.

\subsection{Michaelis-Menten kinetics}
\label{sec:example_michaelis_menten}

Two chemical species $E$ and $S$ reversibly combine to form $ES$, which decomposes irreversibly into $E$ and $P$ \cite{murray}.
\[
E+S\leftrightarrow ES \rightarrow E+P
\]
Incorporating conservation relations, we translate the reactions into an ODE model:
\begin{align}  \label{eqn:michaelis_menten_kinetics}
\frac{ds}{dt} = -k_1(e_0 - c)s + k_{-1}c  \qquad
\frac{dc}{dt} = k_1(e_0 - c)s - k_{-1} c - k_2 c. 
\end{align}
Variables $s$ and $c$ are respectively the concentrations of substrate $S$ and complex $ES$.  The parameters $k_1, k_2, k_{-1}$ represent rate constants, and $e_0$ represents a conserved quantity.  This model has one independent and two dependent variables, and four parameters.  The goal is to maximally reduce the number of parameters in the model.

\subsubsection{Compute the maximal scaling action} 

\paragraph{Specify an order} 
We chose the order $(t, s, c, k_{-1}, k_2, k_1, e_0)$.
For each variable $z_i$, rewrite the system in the form $\frac{dz_i}{dt} = \frac{z_i}{t}F_{z_i}$:
\begin{align*} 
\frac{ds}{dt} &= \frac{s}{t} \left(\frac{-k_1e_0ts + k_1tcs + k_{-1} t c}{s}\right)  \\
\frac{dc}{dt} &= \frac{c}{t} \left(\frac{k_1(e_0 - c)ts - k_{-1}tc - k_2 tc}{c}\right) %\label{eq:motivational_example_K_2}
\end{align*}

\paragraph{Calculate the exponent matrix $M = [M_{F_{z_i}}]$} 
Form exponent matrices $M_{F_i}$ by writing the exponents for each monomial in the fully expanded expression of $F_i$ as column vectors.  Pick a column and subtract it from the others.
%Note that constant terms correspond to $0$ vectors.
%
\begin{equation*}
\begin{blockarray}{cccccc}
 & \scalemath{0.7}{-k_1e_0ts} & \scalemath{0.7}{k_1tcs} & \scalemath{0.7}{-k_{-1} t c}& \scalemath{0.7}{s} \\
\begin{block}{c[ccccc]}
\scalemath{0.7}{t}      & 1 & 1 &  1 & 0 \\
\scalemath{0.7}{s}      & 1 & 1 &  0 & 1 \\
\scalemath{0.7}{c}      & 0 & 1 &  1 & 0 \\
\scalemath{0.7}{k_{-1}} & 0 & 0 &  1 & 0 \\
\scalemath{0.7}{k_2}    & 0 & 0 &  0 & 0 \\
\scalemath{0.7}{k_1}    & 1 & 1 &  0 & 0 \\
\scalemath{0.7}{e_0}    & 1 & 0 &  0 & 0\\
\end{block}
\end{blockarray}
\mapsto
\begin{blockarray}{cccc}
 \\
\begin{block}{[cccc]}
1 & 1 & 1 \\
0 & 0 & -1 \\
0 & 1 & 1  \\
0 & 0 & 1 \\
0 & 0 & 0 \\
1 & 1 & 0 \\
1 & 0 & 0 \\
\end{block}
\end{blockarray}
= M_{F_{z_1}}
\end{equation*}
Finally, concatenate to form $M =
\left[ \begin{array}{c|c|c|c}
M_{F_{z_1}} & M_{F_{z_2}} & \ldots & M_{F_{z_n}}
\end{array} 
\right]$.

\paragraph{Calculate scaling matrix $A$} 
Perform row Hermite normal form decomposition $U\cdot M = H_M$.  Let $r$ be the number of zero-rows of $H_M$.  The bottom $r$ rows of $U$ are the scaling matrix $A$. 
\begin{equation*} \scalemath{0.84}{
\begin{array}{c@{}c@{}c@{}c@{}}
U & M & = & H_M \\
\left[ \begin{array}{ccccccc}
0 & 0 & 0 & 0 & 0 & 0 & 1 \\
0 & 0 & 1 & -1 & 0 & 0 & 0 \\
0 & 0 & 0 & 1 & 0 & 0 & 0 \\
0 & 0 & -1 & 1 & 0 & 1 & -1 \\
0 & 0 & 0 & 0 & 1 & 0 & 0 \\
\hline
1 & 0 & 0 & -1 & -1 & -1 & 0 \\
0 & 1 & 1 & 0 & 0 & -1 & 1 \\
\end{array} \right]
& 
\left[
\begin{array}{ccc|cccc}
1 & 1 & 1 & 1 & 1 & 1 & 1 \\
0 & 0 & -1 & 1 & 0 & 0 & 1 \\
0 & 1 & 1 & 0 & 0 & 0 & -1 \\
0 & 0 & 1 & 0 & 0 & 1 & 0 \\
0 & 0 & 0 & 0 & 1 & 0 & 0 \\
1 & 1 & 0 & 1 & 0 & 0 & 1 \\
1 & 0 & 0 & 0 & 0 & 0 & 1 \\
\end{array}
\right]
 &= 
&\left[ \begin{array}{ccccccc}
1 & 0 & 0 & 0 & 0 & 0 & 1 \\
0 & 1 & 0 & 0 & 0 & -1 & -1 \\
0 & 0 & 1 & 0 & 0 & 1 & 0 \\
0 & 0 & 0 & 1 & 0 & 1 & 1 \\
0 & 0 & 0 & 0 & 1 & 0 & 0 \\
\hline
0 & 0 & 0 & 0 & 0 & 0 & 0 \\
0 & 0 & 0 & 0 & 0 & 0 & 0 \\
\end{array} \right]
\end{array}
}
\end{equation*}
There is an $r=2$-dimensional scaling action, and we can reduce two variables from the system.  
The scaling matrix from the two bottom rows of $U$ is
\begin{equation*}
A = 
\begin{blockarray}{ccccccc}
\scalemath{0.7}{t} & \scalemath{0.7}{s} & \scalemath{0.7}{c} & \scalemath{0.7}{k_{-1}} & \scalemath{0.7}{k_2} & \scalemath{0.7}{k_1} & \scalemath{0.7}{e_0}  \\
\begin{block}{[ccccccc]}
1 & 0 & 0 & -1 & -1 & -1 & 0\\
0 & 1 & 1 & 0 & 0 & -1 & 1  \\  
\end{block}
\end{blockarray}.
\end{equation*}
The scaling actions are
\begin{equation*}
\begin{matrix}
    \lambda \cdot (t, s, c, k_{-1}, k_2, k_1, e_0) &\mapsto& \left( \lambda t, s, c, \lambda^{-1}k_{-1}, \lambda^{-1}k_2, \lambda^{-1}k_1, e_0 \right) \\
\mu \cdot (t, s, c, k_{-1}, k_2, k_1, e_0) &\mapsto& \left( t, \mu s, \mu c, k_{-1}, k_2, \mu^{-1} k_1, \mu e_0 \right)
\end{matrix}.
\end{equation*}
The first scaling action corresponds to the fundamental unit of time, and the second corresponds to concentration (eg, mol/l).

\subsubsection{Compute invariants}

Invariants are computed from the Hermite multiplier matrices.  Since there are infinitely many, and each gives a different nondimensionalization, there is an infinite number of permissible nondimensionalized models.  To control for this, the scaling matrix $A$ is put into {\em canonical} column Hermite normal form, $AV = H$:
\begin{align*}
\begin{array}{c@{}c@{}c@{}c@{}}
A \, \cdot\, 
&
\begin{blockarray}{cccccccc}
\begin{block}{[cc|ccccc]c}
0 & 0 & 1 & 0 & 0 & 0 & 0 & \scalemath{0.7}{t} \\
0 & 0 & 0 & 1 & 0 & 0 & 0 & \scalemath{0.7}{s} \\
0 & 0 & 0 & 0 & 1 & 0 & 0 & \scalemath{0.7}{c} \\
0 & 0 & 0 & 0 & 0 & 1 & 0 & \scalemath{0.7}{k_{-1}} \\
0 & 0 & 0 & 0 & 0 & 0 & 1 & \scalemath{0.7}{k_2} \\
-1 & 0 & 1 & 0 & 0 & -1 & -1 & \scalemath{0.7}{k_1}  \\
-1 & 1 & 1 & -1 & -1 & -1 & -1 & \scalemath{0.7}{e_0}  \\
\end{block}
 &  & \scalemath{0.7}{e_0 k_1 t} & \scalemath{0.7}{\frac{s}{e_0}} & \scalemath{0.7}{\frac{c}{e_0}} & \scalemath{0.7}{\frac{k_{-1}}{e_0 k_1}} & \scalemath{0.7}{\frac{k_2}{e_0 k_1}} & \\
\end{blockarray}
& = &
\begin{bmatrix}
1 & 0 & 0 & 0 & 0 & 0 & 0 \\
0 & 1 & 0 & 0 & 0 & 0 & 0 \\
\end{bmatrix}
\end{array}
\end{align*}
Name the invariants
\begin{equation*}
\tau = e_{0} k_{1} t,\quad u = \frac{s}{e_{0}},\quad v = \frac{c}{e_{0}},\quad c_0 = \frac{k_{-1}}{e_0 k_1},\quad c_1=\frac{k_{-1}}{e_{0} k_{1}}.
\end{equation*}

\subsubsection{Use substitutions to write the nondimensionalized system}

To re-write the system in terms of dimensionless quantities, substitute the invariants read from columns the inverse of the Hermite multiplier $V^{-1}$.
\begin{equation*}
W = V^{-1} = 
\begin{blockarray}{cccccccc}
\scalemath{0.7}{t} & \scalemath{0.7}{s} & \scalemath{0.7}{c} & \scalemath{0.7}{k_{-1}} & \scalemath{0.7}{k_2} & \scalemath{0.7}{k_1} & \scalemath{0.7}{e_0} \\
\begin{block}{[ccccccc]c}
1 & 0 & 0 & -1 & -1 & -1 & 0 & \\
0 & 1 & 1 & 0 & 0 & -1 & 1 & \\
\cline{1-7}% don't use \hline or  it won't compile, insanely enough
1 & 0 & 0 & 0 & 0 & 0 & 0 & \scalemath{0.7}{\tau} \\
0 & 1 & 0 & 0 & 0 & 0 & 0 & \scalemath{0.7}{u} \\
0 & 0 & 1 & 0 & 0 & 0 & 0 & \scalemath{0.7}{v} \\
0 & 0 & 0 & 1 & 0 & 0 & 0 & \scalemath{0.7}{c_0} \\
0 & 0 & 0 & 0 & 1 & 0 & 0 & \scalemath{0.7}{c_1} \\
\end{block}
\scalemath{0.7}{t \mapsto \tau} & \scalemath{0.7}{s \mapsto u} & \scalemath{0.7}{c \mapsto v} & \scalemath{0.7}{k_{-1} \mapsto c_0} & \scalemath{0.7}{k_2 \mapsto c_1} & \scalemath{0.7}{k_1 \mapsto 1} & \scalemath{0.7}{e_0 \mapsto 1} & 
\end{blockarray}
\end{equation*}
After making the substitutions,
the reduced nondimensionalized system is
\begin{align*}
\frac{du}{d\tau} &= c_{0} v + u v - u  & 
\frac{dv}{d\tau} &= - c_{0} v - c_{1} v - u v + u.
\end{align*}
We now have one independent variable, two dependent variables, and two parameters --- a reduction by two.

%-------------------------------------------------------------------------

\subsection{Michaelis-Menten with initial conditions and a constant} \label{sec:michaelis_menten_comparison_with_classical}

Again consider Michaelis-Menten kinetics in Section~\ref{sec:example_michaelis_menten} given by the ODE system 
\begin{equation} 
    \frac{ds}{dt} = -k_1e_0s + k_1cs + k_{-1}c \qquad \qquad
    \frac{dc}{dt} = k_1e_0s - k_1cs - k_{-1}c - k_2c,
\end{equation}
together with initial conditions $s(0)=s_0$ and \emph{Michaelis constant} $K_m = \frac{k_{-1} + k_2}{k_1}$.  To recover the usual dimensionless system as in \cite[(6.13)]{murray}, we choose variable order $(t,s,c,K_m,k_{-1},k_2,k_1,e_0,s_0)$.  The desired reduced system has the initial condition $u(0)=\frac{s(0)}{s_0}=1$, so to ensure our algorithm arrives at this same initial condition we would like $\frac{s}{s_0}$ to be a scaling invariant. Additionally, we need to impose that $\frac{k_{-1}+k_2}{k_1 K_m}$ is invariant under the maximal scaling action. Thus, we append several columns to our exponent matrix $M$,%
\footnote{In this example we use $M$ to denote the exponent matrix instead of $K$, since for the nondimensionalized Michaelis-Menten model in \cite[(6.12)]{murray} $K$ denotes the dimensionless parameter $\frac{K_m}{s_0}$.} 
\begin{equation}
    \begin{bmatrix} 0 & 1 & 0 & 0 & 0 & 0 & 0 & 0 & -1 \end{bmatrix}^\intercal, \quad \begin{bmatrix} 0 & 0 & 0 & -1 & 1 & 0 & -1 & 0 & 0 \\ 0 & 0 & 0 & -1 & 0 & 1 & -1 & 0 & 0 \end{bmatrix}^\intercal.
\end{equation}
Thus $M$ has $9$ rows (one for each variable) and $10$ columns (one for each monomial exponent vector). Compute the row Hermite normal form $UM=H$ and read off the two bottom rows of $U$ to find the maximal scaling matrix:
\begin{equation} \label{eqn:michaelis_maximal_scaling_matrix}
    A = \left[\begin{array}{ccccccccc}
        1 & 0 & 0 & 0 & -1 & -1 & -1 & 0 & 0 \\
        0 & 1 & 1 & 1 & 0 & 0 & -1 & 1 & 1 \\
    \end{array}\right]
    .
\end{equation}
Compute the column Hermite multiplier $V$ and scaling invariants:
\begin{equation}
V =
\begin{blockarray}{cccccccccc}
\begin{block}{[cc|ccccccc]c}
0 & 0 & 1 & 0 & 0 & 0 & 0 & 0 & 0 & \scalemath{0.7}{t} \\
0 & 0 & 0 & 1 & 0 & 0 & 0 & 0 & 0 & \scalemath{0.7}{s} \\
0 & 0 & 0 & 0 & 1 & 0 & 0 & 0 & 0 & \scalemath{0.7}{c} \\
0 & 0 & 0 & 0 & 0 & 1 & 0 & 0 & 0 & \scalemath{0.7}{K_m} \\
0 & 0 & 0 & 0 & 0 & 0 & 1 & 0 & 0 & \scalemath{0.7}{k_{-1}} \\
0 & 0 & 0 & 0 & 0 & 0 & 0 & 1 & 0 & \scalemath{0.7}{k_2} \\
-1 & 0 & 1 & 0 & 0 & 0 & -1 & -1 & 0 & \scalemath{0.7}{k_{1} } \\
0 & 0 & 0 & 0 & 0 & 0 & 0 & 0 & 1 & \scalemath{0.7}{e_0 } \\
-1 & 1 & 1 & -1 & -1 & -1 & -1 & -1 & -1 & \scalemath{0.7}{s_0 } \\
\end{block}
\scalemath{0.7}{\frac{1}{s_0 k_1}} & \scalemath{0.7}{s_0} & \scalemath{0.7}{s_0 k_{1} t} & \scalemath{0.7}{\frac{s}{s_{0}}} & \scalemath{0.7}{\frac{c}{s_{0}}} & \scalemath{0.7}{\frac{K_m}{s_{0}}} & \scalemath{0.7}{\frac{k_{-1}}{k_{1} s_{0}}} & \scalemath{0.7}{\frac{k_{2}}{k_{1} s_{0}}} & \scalemath{0.7}{\frac{e_{0}}{s_{0}}} \\
\end{blockarray}.
\end{equation}

This is very close to the classical Michaelis-Menten nondimensionalization.  Constants $\lambda = \frac{k_2}{k_1 s_0}$ and $K = \frac{k_{-1}+k_2}{k_1 s_0}$ from \cite[(6.12)]{murray} are recovered by $V_8$ and $V_8 + V_9$ respectively (where $V_j$ denotes the $j$-th column of $V$).

We want $\tau = k_1 e_0 t$, but have $\tau = s_0 k_1 t$ in column $3$.  We also want $v = c/e_0$ but have $c/s_0$.  So we need to modify the invariants given by columns $3$ and $5$. As noted in Section~\ref{sec:hermite_multiplier} we may perform column operations\footnotemark  on $V$ to obtain another column Hermite multiplier $V'$, thus modifying the invariants. In this problem, let $V'$ be identical to $V$ except $V'_3 = V_3 + V_9$ and $V'_5 = V_5 - V_9$. This gives
\footnotetext{To understand which column operations on $V$ are permissible here, observe the following two things. First, since $V$ is a column Hermite multiplier for $A$, we know that $AV=L$ is in column Hermite normal form. Second, since performing column operations on $V$ is equivalent to right-multiplying $V$ by a unimodular matrix $J$ (see Section~\ref{sec:hermite_multiplier}), we just have to make sure that $AVJ=LJ$ is still in column Hermite normal form. Now, by computing $AV=L$ using the $A$ and $V$ given above we see that $L_j = 0$ for $3 \leqslant j \leqslant 9$. Thus by restricting our operations to only the last seven columns of $V$ to obtain $V'$ ensures that $AV'=L$ is still in column Hermite normal form, since we have only permuted the zero columns of $L$.}
\begin{equation} \label{eqn:micheaelis_menten_hermite_multiplier}
V' =
\begin{blockarray}{cccccccccc}
\begin{block}{[cc|ccccccc]c}
0 & 0 & 1 & 0 & 0 & 0 & 0 & 0 & 0 & \scalemath{0.7}{t} \\
0 & 0 & 0 & 1 & 0 & 0 & 0 & 0 & 0 & \scalemath{0.7}{s} \\
0 & 0 & 0 & 0 & 1 & 0 & 0 & 0 & 0 & \scalemath{0.7}{c} \\
0 & 0 & 0 & 0 & 0 & 1 & 0 & 0 & 0 & \scalemath{0.7}{K_m} \\
0 & 0 & 0 & 0 & 0 & 0 & 1 & 0 & 0 & \scalemath{0.7}{k_{-1}} \\
0 & 0 & 0 & 0 & 0 & 0 & 0 & 1 & 0 & \scalemath{0.7}{k_2} \\
-1 & 0 & 1 & 0 & 0 & 0 & -1 & -1 & 0 & \scalemath{0.7}{k_{1} } \\
0 & 0 & 1 & 0 & -1 & 0 & 0 & 0 & 1 & \scalemath{0.7}{e_0 } \\
-1 & 1 & 0 & -1 & 0 & -1 & -1 & -1 & -1 & \scalemath{0.7}{s_0 } \\
\end{block}
\scalemath{0.7}{\frac{1}{s_0 k_1}} & \scalemath{0.7}{s_0} & \scalemath{0.7}{e_0 k_{1} t} & \scalemath{0.7}{\frac{s}{s_{0}}} & \scalemath{0.7}{\frac{c}{e_{0}}} & \scalemath{0.7}{\frac{K_m}{s_{0}}} & \scalemath{0.7}{\frac{k_{-1}}{k_{1} s_{0}}} & \scalemath{0.7}{\frac{k_{2}}{k_{1} s_{0}}} & \scalemath{0.7}{\frac{e_{0}}{s_{0}}} \\
\end{blockarray}
\end{equation}
We can recover the familiar nondimensionalization with a change of variables.  Apply Theorem~\ref{thm:hubert_labahn_7.1} and using \eqref{eqn:rescaled_parameter_names} name the invariants:
\begin{equation} \label{eqn:michaelis_menten_invariants}
    \tau = e_0k_1t, \quad (u,v)=\left(\frac{s}{s_0},\frac{c}{e_0}\right), \quad (K,K-\lambda,\lambda,\epsilon)=\left(\frac{K_m}{s_0},\frac{k_{-1}}{k_1 s_0},\frac{k_2}{k_1 s_0},\frac{e_0}{s_0}\right) .
\end{equation}
This recovers the nondimensionalization given in \cite[(6.12)]{murray}. Substituting
\begin{equation}
    (t,s,c,K_m,k_{-1},k_2,k_1,e_0,s_0) \longmapsto \left(\frac{\tau}{\epsilon},u,\epsilon v,K,K-\lambda,\lambda,1,\epsilon,1\right) \label{eqn:mm_substitutions}
\end{equation}
into \eqref{eqn:michaelis_menten_kinetics} to give the familiar dimensionless system from \cite[(6.13)]{murray}:
\begin{align*}
\frac{du}{d\tau} & = -u + (u+K-\lambda)v & u(0)&=1 \\
\epsilon \frac{dv}{d\tau} &= u - (u+K)v    &  v(0)&=0.
\end{align*}

\subsection{Multiple timescales: Segel-Slemrod analysis} \label{sec:application_segel_slemrod}

As explained in \cite[Section 6.2]{murray}  it need not always be the case that $\epsilon=\frac{e_0}{s_0} \ll 1$.  The Segel-Slemrod analysis remedies this by setting $\epsilon = \frac{e_0}{s_0 + K_m}$ and considering two timescales
\begin{equation}
    \tau = \frac{t}{t_c} = k_1(s_0+K_m)t, \qquad T = \frac{(1+\rho)t}{t_s} = \epsilon (1+\rho)k_2t
\end{equation}
\noindent where $\rho=\frac{k_{-1}}{k_2}$. The first timescale is applicable for $0 \leqslant \tau \ll 1$, and the second is valid outside a small neighborhood of the origin. We choose the variable order $(t,s,c,\epsilon,k_{-1},k_2,k_1,K_m,e_0,s_0)$ with our new $\epsilon=\frac{e_0}{s_0 + K_m}$. Together with \eqref{eqn:michaelis_menten_kinetics}, we impose the condition that we want $\frac{s}{s_0},\frac{k_{-1} + k_2}{k_1 K_m}$ and $\frac{e_0}{\epsilon(s_0 + K_m)}$ to be scaling invariants. The algorithm computes the column Hermite multiplier:

\begin{equation} \label{eqn:mm_column_hermite_multiplier}
V =
\begin{blockarray}{ccccccccccc}
\begin{block}{[cc|cccccccc]c}
0 & 0 & 1 & 0 & 0 & 0 & 0 & 0 & 0 & 0 & \scalemath{0.7}{t} \\
0 & 0 & 0 & 1 & 0 & 0 & 0 & 0 & 0 & 0 & \scalemath{0.7}{s} \\
0 & 0 & 0 & 0 & 1 & 0 & 0 & 0 & 0 & 0 & \scalemath{0.7}{c} \\
0 & 0 & 0 & 0 & 0 & 1 & 0 & 0 & 0 & 0 & \scalemath{0.7}{\epsilon} \\
0 & 0 & 0 & 0 & 0 & 0 & 1 & 0 & 0 & 0 & \scalemath{0.7}{k_{-1}} \\
0 & 0 & 0 & 0 & 0 & 0 & 0 & 1 & 0 & 0 & \scalemath{0.7}{k_2} \\
-1 & 0 & 1 & 0 & 0 & 0 & -1 & -1 & 0 & 0 & \scalemath{0.7}{k_1} \\
0 & 0 & 0 & 0 & 0 & 0 & 0 & 0 & 1 & 0 & \scalemath{0.7}{K_m} \\
0 & 0 & 0 & 0 & 0 & 0 & 0 & 0 & 0 & 1 & \scalemath{0.7}{e_0} \\
-1 & 1 & 1 & -1 & -1 & 0 & -1 & -1 & -1 & -1 & \scalemath{0.7}{s_0} \\
\end{block}
\scalemath{0.7}{\frac{1}{s_0 k_1}} & \scalemath{0.7}{s_0} & \scalemath{0.7}{s_0 k_{1} t} & \scalemath{0.7}{\frac{s}{s_{0}}} & \scalemath{0.7}{\frac{c}{s_{0}}} & \scalemath{0.7}{\epsilon} & \scalemath{0.7}{\frac{k_{-1}}{k_{1} s_{0}}} & \scalemath{0.7}{\frac{k_{2}}{k_{1} s_{0}}} & \scalemath{0.7}{\frac{K_m}{s_{0}}} & \scalemath{0.7}{\frac{e_{0}}{s_{0}}} \\
\end{blockarray}
\end{equation}
To find the nondimensionalization of Segel-Slemrod as given in \cite[(6.20)]{murray}, perform the column operations
\begin{equation*}
    V_3' = V_3 + V_{10} - V_6, \quad V_5' = V_5 - V_6, \quad V_9' = -V_9, \quad V_7' = V_7 - V_8
\end{equation*}
with all other columns of $V' \in M_{10}(\mathbb{Z})$ equal to those of $V$.
\begin{equation*}
V' =
\begin{blockarray}{ccccccccccc}
\begin{block}{[cc|cccccccc]c}
0 & 0 & 1 & 0 & 0 & 0 & 0 & 0 & 0 & 0 &  \scalemath{0.7}{t} \\
0 & 0 & 0 & 1 & 0 & 0 & 0 & 0 & 0 & 0 &  \scalemath{0.7}{s} \\
0 & 0 & 0 & 0 & 1 & 0 & 0 & 0 & 0 & 0 &  \scalemath{0.7}{c} \\
0 & 0 & -1 & 0 & -1 & 1 & 0 & 0 & 0 & 0 &  \scalemath{0.7}{\epsilon} \\
0 & 0 & 0 & 0 & 0 & 0 & 1 & 0 & 0 & 0 &  \scalemath{0.7}{k_{-1}} \\
0 & 0 & 0 & 0 & 0 & 0 & -1 & 1 & 0 & 0 &  \scalemath{0.7}{k_2} \\
-1 & 0 & 1 & 0 & 0 & 0 & 0 & -1 & 0 & 0 & \scalemath{0.7}{k_1} \\
0 & 0 & 0 & 0 & 0 & 0 & 0 & 0 & -1 & 0 & \scalemath{0.7}{K_m} \\
0 & 0 & 1 & 0 & 0 & 0 & 0 & 0 & 0 & 1 & \scalemath{0.7}{e_0} \\
-1 & 1 & 0 & -1 & -1 & 0 & 0 & -1 & 1 & -1  & \scalemath{0.7}{s_0} \\
\end{block}
\scalemath{0.7}{\frac{1}{s_0 k_1}} & \scalemath{0.7}{s_0} & \scalemath{0.7}{\frac{e_0 k_{1} t}{\epsilon}} & \scalemath{0.7}{\frac{s}{s_{0}}} & \scalemath{0.7}{\frac{c}{s_{0} \epsilon}} & \scalemath{0.7}{\epsilon} & \scalemath{0.7}{\frac{k_{-1}}{k_2}} & \scalemath{0.7}{\frac{k_{2}}{k_{1} s_{0}}} & \scalemath{0.7}{\frac{s_0}{K_m}} & \scalemath{0.7}{\frac{e_{0}}{s_{0}}} \\
\end{blockarray}
\end{equation*}
The columns of $V'$ give the invariants
\begin{equation} \label{eqn:segel_slemrod_invariants}
    \tau = \frac{e_0k_1t}{\epsilon}, \, 
    (u,v) = \left(\frac{s}{s_0},\frac{c}{s_0 \epsilon}\right), \, 
    (\epsilon,\rho,\theta_1,\sigma,\theta_2) = \left(\epsilon,\frac{k_{-1}}{k_2},\frac{k_2}{k_1 s_0},\frac{s_0}{K_m},\frac{e_0}{s_0}\right).
\end{equation}

\noindent
Together with the two additional%
\footnote{By substituting \eqref{eqn:segel_slemrod_invariants} into equations \eqref{eqn:michaelis_menten_kinetics}, we see (by writing our invariants in terms of the original constants if necessary) that $\theta_1$ and $\theta_2$ can be cancelled in the reduced system, so there is nothing different here from \cite[(6.20)]{murray}.} 
invariants $\theta_1$ and $\theta_2$, these are precisely the dimensionless quantities given in \cite[(6.20)]{murray}. Computing $W=(V')^{-1}$ to find the substitutions
\begin{equation*}
    (t,s,c,\epsilon,k_{-1},k_2,k_1,K_m,e_0,s_0) \longmapsto \left(\frac{\epsilon}{\theta_2}\tau,u,\epsilon v,\epsilon,\rho \theta_1,\theta_1,1,\frac{1}{\sigma},\theta_2,1 \right).
\end{equation*}
Using Theorem~\ref{thm:hubert_labahn_7.1}, we have computed a reduced system
\begin{align*}
    \frac{\theta_2}{\epsilon} \frac{du}{d\tau} &= -\theta_2 u + \epsilon uv + \epsilon \rho \theta_1 v 
    &u(0)&=1
    \\
    \theta_2 \frac{dv}{d\tau} &= \theta_2 u - \epsilon uv - \epsilon \rho \theta_1 v - \theta_1 v
    & v(0)&=0.
\end{align*}

To obtain the familiar reduced system in \cite[(6.21)]{murray} divide by $\theta_2$ in both equations and observe the following relations, writing everything in terms of the original constants in \eqref{eqn:michaelis_menten_kinetics}:
\begin{align*}
\begin{split}
    &\frac{\epsilon}{\theta_2} = \frac{s_0}{s_0 + K_m} = \frac{\sigma}{1+\sigma}, \qquad \frac{\epsilon \rho \theta_1}{\theta_2} = \frac{k_{-1}}{k_1s_0 + k_1 + k_2} = \frac{\rho}{(1+\sigma)(1+\rho)}, \\
    &\frac{\epsilon \theta_1}{\theta_2}(\rho+1) = \frac{1}{s_0 + K_m} \cdot \frac{k_{-1}+k_2}{k_1} = \frac{1}{1+\sigma}.
    \end{split}
\end{align*}
Substituting gives
\begin{align*}
    \frac{du}{d\tau} &= \epsilon \left( -u + \frac{\sigma}{1 + \sigma}uv + \frac{\rho}{(1+\sigma)(1 + \rho)}v \right)   &u(0)&=1  \\ 
    \frac{dv}{d\tau} &= u - \frac{\sigma}{1+\sigma}uv - \frac{v}{1+\sigma}  & v(0)&=0.
\end{align*}

Let us now see how we can obtain the dimensionless model of \cite[(6.23)]{murray} corresponding to the slow timescale $T=\epsilon (1+\rho)k_2t$. %  as given in \cite[(6.22)]{murray}
Observing that $(1+\rho)k_2 = k_1K_m$, we can obtain our new timescale $T=\epsilon (1+\rho)k_2t = \epsilon k_1K_mt$ from the Hermite multiplier $V''$, where $V_3'' = V_3' + 2V_6' + V_9'- V_{10}'$, and $V_i''=V_i'$ otherwise. Compute the substitutions from $W' = (V'')^{-1}$, 
\begin{equation*}
    (t,s,c,\epsilon,k_{-1},k_2,k_1,K_m,e_0,s_0) \longmapsto \left(\frac{\sigma}{\epsilon}T,u,\epsilon v,\epsilon,\rho \theta_1,\theta_1,1,\frac{1}{\sigma},\theta_2, 1 \right)
\end{equation*}
and obtain the familiar reduced system for the slow timescale:
\begin{align*}
\frac{du}{dT} & = -(1+\sigma)u + \sigma uv + \frac{\rho}{1+\rho}v 
&u(0) &= 1
\\
\epsilon \frac{dv}{dT} & = (1+\sigma)u - \sigma uv - v 
&v(0) &= 0. 
\end{align*}

\subsection{New nondimensionalizations} \label{sec:application_new_nondimen}

Let us now see the two reduced systems which our algorithm outputs for Segel-Slemrod analysis \emph{without} performing any operations on our Hermite multiplier $V$. In other words, given the constraints $K_m=\frac{k_{-1}+k_2}{k_1}$ and $\epsilon=\frac{e_0}{s_0 + K_m}$, we can obtain an alternative nondimensionalization by simply using the invariants given in \eqref{eqn:segel_slemrod_invariants}.

Setting $\eta = \frac{k_{-1}}{k_1 s_0}$ and using $\sigma = \frac{s_0}{K_m}$ from our Hermite multiplier $V$ in \eqref{eqn:micheaelis_menten_hermite_multiplier}, we obtain new reduced systems with $u=\frac{s}{s_0}$ and $v=\frac{c}{\epsilon s_0}$ whose fast and slow timescales, respectively $\tau = k_1s_0t$ and $T = \epsilon k_1 s_0 t$, are respectively given by
\begin{align*}
&\frac{du}{d\tau} = \frac{\epsilon}{\sigma} \left(\eta \sigma v + \sigma u v - \sigma u - u\right)  &  \qquad &\frac{du}{dT} = \frac{1}{\sigma} \left(\eta \sigma v + \sigma u v - \sigma u - u\right) \\
&\frac{dv}{d\tau} = - \frac{1}{\sigma} \left(\sigma u v - \sigma u - u + v\right)                   & & \frac{dv}{dT} = - \frac{1}{\epsilon \sigma} \left(\sigma u v - \sigma u - u + v\right) \\
&u\left(0\right) = 1, \quad v(0)=0                                                                  &  &u\left(0\right) = 1, \quad v(0)=0.
\end{align*}

As noted in Section~\ref{sec:michaelis_menten_comparison_with_classical}, it is often assumed that $\epsilon = \frac{e_0}{s_0} \ll 1$. When this is no longer the case, the Segel-Slemrod analysis in Section~\ref{sec:application_segel_slemrod} instead proposes that we should choose $\epsilon = \frac{e_0}{s_0 + K_m}$. How can we motivate this choice of $\epsilon$ within this paper's computational framework?

For a scaling matrix $A$, let us ask for which monomials $x$ is $\epsilon = \frac{e_0}{s_0 + x}$ an invariant of the scaling action defined by $A$. From the possible answers, we may choose the largest such $x$ to produce the smallest possible $\epsilon$. For $\epsilon$ to be $\mathbb{T}_A$ invariant, we need $A$ to act on $x$ as it does on $s_0$, which means that we require $\frac{x}{s_0}$ to be $\mathbb{T}_A$-invariant.\footnotemark

\footnotetext{Since $\epsilon = \frac{e_0}{s_0 + x} = \frac{\frac{e_0}{s_0}}{1 + \frac{s_0}{x}}$, and because \eqref{eqn:mm_column_hermite_multiplier} gives that $\frac{e_0}{s_0}$ is a scaling invariant, $\epsilon$ is invariant precisely when $\frac{x}{s_0}$ is invariant.}

In light of Remark~\ref{remark:hubert_labahn6.5_rational_invariants_by_substitution}, the last seven columns of $V$ in  \eqref{eqn:mm_column_hermite_multiplier} generate all $\mathbb{T}_A$-invariants,  so
\begin{equation} \label{eqn:x_over_s0}
    \frac{x}{s_0} = \left( k_{1} s_{0} t \right)^{n_1}
\left( \frac{s}{s_{0}} \right)^{n_2}
\left( \frac{c}{e_{0}} \right)^{n_3}
\left( \frac{K_m}{s_0} \right)^{n_4}
\left( \frac{k_{-1}}{k_1 s_0} \right)^{n_5}
\left( \frac{k_2}{k_1 s_0} \right)^{n_6}
\left( \frac{e_0}{s_0} \right)^{n_7}
\end{equation}
\noindent
for some integers $n_1,\ldots,n_7 \in \mathbb{Z}$. Choosing $n_4=1$ and $n_i=0$ otherwise, we recover the choice of $\epsilon$ used in Segel-Slemrod analysis. But what about finding new $\epsilon$ with which to nondimensionalize?

Remembering that we want $x$ to be a function only of the constant parameters, \eqref{eqn:x_over_s0} suggests that the simplest candidates for $x$ are $e_0,\frac{k_{-1}}{k_1}$ and $\frac{k_2}{k_1}$. The first case results in $\epsilon = \frac{e_0}{s_0 + e_0}$, and this is only negligible if $e_0$ \(\ll\)\ $s_0$. The latter two cases result in a choice of $\epsilon$ which is at least as big as $\frac{e_0}{s_0 + K_m}$. We could additionally try other choices of $x$ which may be useful for certain models.

%--------------------------------------------------%

\subsection{Extending invariants in four-variable Michaelis-Menten}  \label{sec:michaelis_menten_four_variables}

We return to the Michaelis-Menten model, now considered in its four-variable form \cite[(6.3)]{murray}:
\begin{align*}
    \frac{de}{dt} &= -k_1es+k_{-1}c+k_2c &
    \frac{ds}{dt} &= -k_1es + k_{-1}c \\
    \frac{dc}{dt} &= k_1es - k_{-1}c - k_2c &
    \frac{dp}{dt} &= k_2c
\end{align*}

Choosing variable order $(t,c,e,p,s,k_1,k_2,k_{-1})$, our scaling matrix is
\begin{equation*}
    A = \begin{bmatrix} 1 & 0 & 0 & 0 & 0 & -1 & -1 & -1 \\ 0 & 1 & 1 & 1 & 1 & -1 & 0 & 0 \end{bmatrix}.
\end{equation*}
The first row corresponds to a scaling of time, and the second to a scaling of concentrations. The Hermite multiplier and resulting invariants are:
\begin{equation*}
    V = \begin{blockarray}{cccccccc}
    \begin{block}{[cccccccc]}
    0 & 0 & 1 & 0 & 0 & 0 & 0 & 0 \\
    0 & 0 & 0 & 1 & 0 & 0 & 0 & 0 \\
    0 & 0 & 0 & 0 & 1 & 0 & 0 & 0 \\
    0 & 0 & 0 & 0 & 0 & 1 & 0 & 0 \\
    0 & 0 & 0 & 0 & 0 & 0 & 1 & 0 \\
    0 & -1 & 0 & 1 & 1 & 1 & 1 & 0 \\
    -1 & 1 & 1 & -1 & -1 & -1 & -1 & -1 \\
    \end{block}
    & & \scalemath{0.7}{k_{-1}t} & \scalemath{0.7}{\frac{k_1c}{k_{-1}}} & \scalemath{0.7}{\frac{k_1e}{k_{-1}}} & \scalemath{0.7}{\frac{k_1p}{k_{-1}}} & \scalemath{0.7}{\frac{k_1s}{k_{-1}}} & \scalemath{0.7}{\frac{k_2}{k_{-1}}}
    \end{blockarray}
\end{equation*}
Suppose we prefer to normalize by $k_2$ rather than $k_{-1}$. Simply place $k_2$ after $k_{-1}$ in the variable order, and  re-compute $A$ and $V$. This yields these new invariants:
\begin{equation*}
    k_2t, \quad \frac{k_1c}{k_2}, \quad \frac{k_1e}{k_2}, \quad \frac{k_1p}{k_2}, \quad \frac{k_1s}{k_2}, \quad \frac{k_{-1}}{k_2}.
\end{equation*}
But suppose instead we want a system with chosen invariants $\frac{k_1}{k_2}c$ and $\frac{k_1}{k_{-1}}p$. Then we may use the algorithm presented in Section~\ref{sec:extension_algorithm}. We keep our original variable order and consider the matrix which represents the invariants $\frac{k_1}{k_2}c$ and $\frac{k_1}{k_{-1}}p$: 
\begin{equation*}
P = 
\begin{bmatrix} 0 & 1 & 0 & 0 & 0 & 1 & -1 & 0 \\ 0 & 0 & 0 & 1 & 0 & 1 & 0 & -1 \end{bmatrix}^\intercal.
\end{equation*}
One can readily confirm that $AP=0$, so these invariants are dimensionally compatible.
Next, compute the Smith normal form decomposition of $W_{\mathfrak{b}}P$ and verify that $U W_{\mathfrak{b}}P U' = S$ has the required form:
\begin{equation*}
\left[\begin{matrix}0 & 1 & 0 & 0 & 0 & 0\\0 & 0 & 0 & 1 & 0 & 0\\0 & 1 & 0 & 0 & 0 & 1\\0 & 0 & 0 & 0 & 1 & 0\\0 & 0 & 1 & 0 & 0 & 0\\1 & 0 & 0 & 0 & 0 & 0\end{matrix}\right] % U
\left[\begin{matrix}1 & 0 & 0 & 0 & 0 & 0 & 0 & 0\\0 & 1 & 0 & 0 & 0 & 0 & 0 & 0\\0 & 0 & 1 & 0 & 0 & 0 & 0 & 0\\0 & 0 & 0 & 1 & 0 & 0 & 0 & 0\\0 & 0 & 0 & 0 & 1 & 0 & 0 & 0\\0 & 0 & 0 & 0 & 0 & 0 & 1 & 0\end{matrix}\right] % Wb
\cdot P \cdot
\left[\begin{matrix}1 & 0\\0 & 1\end{matrix}\right]
=
\left[\begin{matrix}1 & 0\\0 & 1\\0 & 0\\0 & 0\\0 & 0\\0 & 0\end{matrix}\right]
=
S
\end{equation*}
Compute the unimodular matrix $C$ and desired invariants given by $\tilde{V}_{\mathfrak{b}}$:
\begin{equation*}
    U^{-1} = \left[\begin{matrix}0 & 0 & 0 & 0 & 0 & 1\\1 & 0 & 0 & 0 & 0 & 0\\0 & 0 & 0 & 0 & 1 & 0\\0 & 1 & 0 & 0 & 0 & 0\\0 & 0 & 0 & 1 & 0 & 0\\-1 & 0 & 1 & 0 & 0 & 0\end{matrix}\right], 
    \,
    W_{\mathfrak{b}}P = \left[\begin{matrix}0 & 0\\1 & 0\\0 & 0\\0 & 1\\0 & 0\\-1 & 0\end{matrix}\right] 
    \,
    \Rightarrow 
    \,
    C = \left[\begin{matrix}0 & 0 & 0 & 0 & 0 & 1\\1 & 0 & 0 & 0 & 0 & 0\\0 & 0 & 0 & 0 & 1 & 0\\0 & 1 & 0 & 0 & 0 & 0\\0 & 0 & 0 & 1 & 0 & 0\\-1 & 0 & 1 & 0 & 0 & 0\end{matrix}\right]
\end{equation*}
\begin{equation*}
    \tilde{V}_{\mathfrak{b}} = V_{\mathfrak{b}}C = \left[\begin{matrix}0 & 0 & 0 & 0 & 0 & 1\\1 & 0 & 0 & 0 & 0 & 0\\0 & 0 & 0 & 0 & 1 & 0\\0 & 1 & 0 & 0 & 0 & 0\\0 & 0 & 0 & 1 & 0 & 0\\1 & 1 & 0 & 1 & 1 & 0\\-1 & 0 & 1 & 0 & 0 & 0\\0 & -1 & -1 & -1 & -1 & 1\end{matrix}\right]
\rightsquigarrow
        \left[\begin{matrix}1 & 0 & 0 & 0 & 0 & 0\\0 & 1 & 0 & 0 & 0 & 0\\0 & 0 & 1 & 0 & 0 & 0\\0 & 0 & 0 & 1 & 0 & 0\\0 & 0 & 0 & 0 & 1 & 0\\0 & 1 & 1 & 1 & 1 & 0\\0 & -1 & 0 & 0 & 0 & 1\\1 & 0 & -1 & -1 & -1 & -1\end{matrix}\right],
\end{equation*}
after performing admissible column operations on $V_{\mathfrak{b}}$.  The invariants are
\begin{align*}
\begin{split}
\begin{bmatrix}
\tilde{t} & \tilde{c} & \tilde{e} & \tilde{p} & \tilde{s} & \alpha
\end{bmatrix}
 &=
\begin{bmatrix}
k_{-1} t & \frac{c k_{1}}{k_{2}} & \frac{e k_{1}}{k_{-1}} &  \frac{p k_{1}}{k_{-1}} & \frac{s k_{1}}{k_{-1}} & \frac{k_{2}}{k_{-1}}
\end{bmatrix}.
\end{split}
\end{align*}
Substituting yields a system with just one constant parameter $\alpha$: 
\begin{align}
\frac{d\tilde{c}}{d\tilde{t}} &= - \alpha^{2} \tilde{c} - \alpha \tilde{c} + \tilde{e} \tilde{s} &
\frac{d\tilde{e}}{d\tilde{t}} &= \alpha^{2} \tilde{c} + \alpha \tilde{c} - \tilde{e} \tilde{s} \nonumber \\
\frac{d\tilde{p}}{d\tilde{t}} &= \alpha^{2} \tilde{c} &
\frac{d\tilde{s}}{d\tilde{t}} &= \alpha \tilde{c} - \tilde{e} \tilde{s}.
\label{eqn:mm4var_reduced_one_parameter}
\end{align}
\noindent
System \eqref{eqn:mm4var_reduced_one_parameter} is reduced as far as possible, computed automatically by software.

\subsection{Cell cycle control network}  \label{sec:cell_cycle_control}

Consider this cell cycle control network from \cite[Figure 2]{sible-tyson}\footnote{In \cite{sible-tyson}, the variable $\textrm{[APC*]}$ has a typo and should just be $\textrm{[APC]}$. }:
\begin{align*}
\frac{d}{dt}\textrm{[Cyclin]} &= k_1 - k_2 \textrm{[Cyclin]} - k_3 \textrm{[Cyclin]} \textrm{[Cdk]} \\
\frac{d}{dt}\textrm{[MPF]} &= k_3 \textrm{[Cyclin]} \textrm{[Cdk]} - k_2 \textrm{[MPF]} - k_{\textrm{wee}} \textrm{[MPF]} + k_{25} \textrm{[preMPF]} \\
\frac{d}{dt}\textrm{[preMPF]} &= -k_2 \textrm{[preMPF]} + k_{\textrm{wee}} \textrm{[MPF]} - k_{25} \textrm{[preMPF]} \\
\frac{d}{dt}\textrm{[Cdc25P]} &= \frac{k_a \textrm{[MPF]}\left( \textrm{[total Cdc25]} - \textrm{[Cdc25P]} \right)}{K_a + \textrm{[total Cdc25]} - \textrm{[Cdc25P]}} - \frac{k_b \textrm{[PPase]} \textrm{[Cdc25P]}}{K_b + \textrm{[Cdc25P]}} \\
\frac{d}{dt}\textrm{[Wee1P]} &= \frac{k_e \textrm{[MPF]}\left( \textrm{[total Wee1]} - \textrm{[Wee1P]} \right)}{K_e + \textrm{[total Wee1]} - \textrm{[Wee1P]}} - \frac{k_f \textrm{[PPase]} \textrm{[Wee1P]}}{K_f + \textrm{[Wee1P]}} \\
\frac{d}{dt}\textrm{[IEP]} &= \frac{k_g \textrm{[MPF]}\left( \textrm{[total IE]} - \textrm{[IEP]} \right)}{K_g + \textrm{[total IE]} - \textrm{[IEP]}} - \frac{k_h \textrm{[PPase]} \textrm{[IEP]}}{K_h + \textrm{[IEP]}} \\
\frac{d}{dt}\textrm{[APC]} &= \frac{k_c \textrm{[MPF]}\left( \textrm{[total APC]} - \textrm{[APC]} \right)}{K_c + \textrm{[total APC]} - \textrm{[APC]}} - \frac{k_d \textrm{[PPase]} \textrm{[APC]}}{K_d + \textrm{[APC]}} \\
\end{align*}
with the constraints
\begin{align*}
\textrm{[Cdk]} &= \textrm{[total Cdk]} - \textrm{[MPF]} - \textrm{[preMPF]} \\
k_{25} &= V_{25}' \left( \textrm{[total Cdc25]} - \textrm{[Cdc25P]} \right) + V_{25}'' \textrm{[Cdc25P]} \\
k_{\textrm{wee}} &= V_{\textrm{wee}}' \textrm{[Wee1P]} + V_{\textrm{wee}}'' \left( \textrm{[total Wee1]} - \textrm{[Wee1P]} \right) \\
k_2 &= V_2' \left( \textrm{[total APC]} - \textrm{[APC]} \right) + V_2'' \textrm{[APC]}.
\end{align*}

Because the constraints involve non-constant variables, we substitute them into the equations before reduction.  We chose a variable order ending with total quantities so the algorithm will more likely scale by them,
\begin{gather*}
(
t,
% the dependent variables, in order of equations to help things stay easier to read
[\textrm{Cyclin}],
[\textrm{MPF}],
[\textrm{preMPF}],
[\textrm{Cdc25P}],
[\textrm{Wee1P}],
[\textrm{IEP}],
[\textrm{APC}], \\
% start the parameters
[\textrm{PPase}], 
K_a,
K_b,
K_c,
K_d,
K_e,
K_f,
K_g,
K_h, 
k_1,
k_3,
k_a,
k_b,
k_c,
k_d,
k_e,
k_f,
k_g,
k_h, \\
V_{25}',
V_{25}'',
V_{\textrm{wee}}',
V_{\textrm{wee}}'',
V_2',
V_2'', \\ % grouping to keep similar together, visually
[\textrm{total Cdc25}], % the totals are last so that the software will likely normalize with respect to them
[\textrm{total IE}],
[\textrm{total APC}],
[\textrm{total Cdk}],
[\textrm{total Wee1}]
).
\end{gather*}

The software \verb|desr| yields a $7 \times 38$ scaling matrix, and produces a reduced system with $7$ fewer symbols. The rational invariants are 

\begin{gather*}
V_2'' t \textrm{[total APC]},
\frac{\textrm{[Cyclin]}}{\textrm{[total Cdk]}},
\frac{\textrm{[MPF]}}{\textrm{[total Cdk]}},
\frac{\textrm{[preMPF]}}{\textrm{[total Cdk]}},
\frac{\textrm{[Cdc25P]}}{\textrm{[total Cdc25]}},
\frac{\textrm{[Wee]}1P}{\textrm{[total Wee1]}},\\
\frac{\textrm{[IEP]}}{\textrm{[total IE]}}, 
\frac{\textrm{[APC]}}{\textrm{[total APC]}},
\frac{\textrm{[PPase]} k_h}{(V_2'' \textrm{[total APC]} \textrm{[total IE]})},
\frac{K_a}{\textrm{[total Cdc25]}},
\frac{K_b}{\textrm{[total Cdc25]}},\\
\frac{K_c}{\textrm{[total APC]}},
\frac{K_d}{\textrm{[total APC]}},
\frac{K_e}{\textrm{[total Wee1]}},
\frac{K_f}{\textrm{[total Wee1]}},
\frac{K_g}{\textrm{[total IE]}},
\frac{K_h}{\textrm{[total IE]}},\\
\frac{k_1}{V_2'' \textrm{[total APC]} \textrm{[total Cdk]}},
\frac{k_3 \textrm{[total Cdk]}}{(V_2'' \textrm{[total APC]})},
\frac{k_a \textrm{[total Cdk]}}{V_2'' \textrm{[total APC]} \textrm{[total Cdc25]}},\\
\frac{k_b \textrm{[total IE]}}{(k_h \textrm{[total Cdc25]})},
\frac{k_c \textrm{[total Cdk]}}{V_2'' \textrm{[total APC]} 2},
\frac{k_d \textrm{[total IE]}}{(k_h \textrm{[total APC]})},
\frac{k_e \textrm{[total Cdk]}}{V_2'' \textrm{[total APC]} \textrm{[total Wee1]}},\\
\frac{k_f \textrm{[total IE]}}{(k_h \textrm{[total Wee1]})},
\frac{k_g \textrm{[total Cdk]}}{V_2'' \textrm{[total APC]} \textrm{[total IE]}},
\frac{V_{25}' \textrm{[total Cdc25]}}{V_2'' \textrm{[total APC]}},
\frac{V_{25}'' \textrm{[total Cdc25]}}{V_2'' \textrm{[total APC]}},\\
\frac{V_{\textrm{wee}}' \textrm{[total Wee1]}}{V_2'' \textrm{[total APC]}},
\frac{V_{\textrm{wee}}'' \textrm{[total Wee1]}}{V_2'' \textrm{[total APC]}},
V_2'/V_2''.
\end{gather*}
After using \verb|sympy.collect| on all variables in order, the reduced system is
\begin{align*}
\frac{d\nu_{1}}{d\tau} &= \kappa_{10} + \nu_{1} \left(\kappa_{11} \left(\nu_{2} + \nu_{3} - 1\right) - \kappa_{23} + \nu_{7} \left(\kappa_{23} - 1\right)\right) \\
\frac{d\nu_{2}}{d\tau} &= \kappa_{11} \nu_{1} \left(- \nu_{2} - \nu_{3} + 1\right) + \nu_{2} \left(- \kappa_{22} - \kappa_{23} + \nu_{5} \left(- \kappa_{21} + \kappa_{22}\right) + \nu_{7} \left(\kappa_{23} - 1\right)\right) + \\ & \qquad \nu_{3} \left(\kappa_{19} + \nu_{4} \left(- \kappa_{19} + \kappa_{20}\right)\right) \\
\frac{d\nu_{3}}{d\tau} &= \nu_{2} \left(\kappa_{22} + \nu_{5} \left(\kappa_{21} - \kappa_{22}\right)\right) + \nu_{3} \left(- \kappa_{19} - \kappa_{23} + \nu_{4} \left(\kappa_{19} - \kappa_{20}\right) + \nu_{7} \left(\kappa_{23} - 1\right)\right) \\
\frac{d\nu_{4}}{d\tau} &= - \frac{\kappa_{1} \kappa_{13} \nu_{4}}{\kappa_{3} + \nu_{4}} + \kappa_{12} \nu_{2} \left(- \frac{\nu_{4}}{\kappa_{2} - \nu_{4} + 1} + \frac{1}{\kappa_{2} - \nu_{4} + 1}\right) \\
\frac{d\nu_{5}}{d\tau} &= - \frac{\kappa_{1} \kappa_{17} \nu_{5}}{\kappa_{7} + \nu_{5}} + \kappa_{16} \nu_{2} \left(- \frac{\nu_{5}}{\kappa_{6} - \nu_{5} + 1} + \frac{1}{\kappa_{6} - \nu_{5} + 1}\right) \\
\frac{d\nu_{6}}{d\tau} &= - \frac{\kappa_{1} \nu_{6}}{\kappa_{9} + \nu_{6}} + \kappa_{18} \nu_{2} \left(- \frac{\nu_{6}}{\kappa_{8} - \nu_{6} + 1} + \frac{1}{\kappa_{8} - \nu_{6} + 1}\right) \\
\frac{d\nu_{7}}{d\tau} &= - \frac{\kappa_{1} \kappa_{15} \nu_{7}}{\kappa_{5} + \nu_{7}} + \kappa_{14} \nu_{2} \left(- \frac{\nu_{7}}{\kappa_{4} - \nu_{7} + 1} + \frac{1}{\kappa_{4} - \nu_{7} + 1}\right).
\end{align*}
where each of $V_2'', [\textrm{total Cdc25}], [\textrm{total IE}], [\textrm{total APC}], [\textrm{total Cdk}], [\textrm{total Wee1}]$ have been algorithmically normalized to $1$.

\subsection{A vaccine injection model} \label{sec:compartment model}

Consider the vaccine injection model from \cite{WeissburgEtal}, as presented in \cite[Example 13.6]{DiStefano}, consisting of a linear two-compartment model with leaks in each compartment, input in compartment 1, and output in compartment 2. The corresponding ODE system is
\begin{align*}
    \frac{dx_1}{dt} =-(a_{01}+a_{21})x_1+u_1 && \frac{dx_2}{dt} =a_{21}-a_{02}x_2 && y_2=x_2.
\end{align*}
In this system, $y_2 = x_2$ appears because measurement happens via $y_2$. The software \verb|desr| yields a 2-dimensional scaling symmetry.

Now if we make the change of variables \(b_{11}=a_{01}+a_{21}\), with everything else remaining the same, we get the system:
\begin{align*}
     \frac{dx_1}{dt} =-b_{11}x_1+u_1 & & \frac{dx_2}{dt} =a_{21}-a_{02}x_2 & & y_2=x_2.
\end{align*}
 For this new system, the software \verb|desr| yields a 3-dimensional scaling symmetry. 
 
 In light of Theorem \ref{thm:ChangeOfVariables}, what has happened here is that when we applied the software to the new system blindly, we did not take into account that the new constant \(b_{11}\) needs to have the same units as \(a_{21}\) if we want to ensure our change of variables is dimensionally consistent. If we add this as a constraint for the new system, then the software \verb|desr| now yields a 2-dimensional scaling symmetry as in the original system.

\section{Conclusion}
\label{sec:conclusion}

This paper presents a practical algebraic framework for nondimensionalizing systems of ordinary differential equations. Building on and extending the Hubert–Labahn scaling-symmetry algorithm, we show how to compute maximal scaling symmetry reduction (Theorem~\ref{thm:hubert_labahn_7.1}) while enforcing dimensional consistency (Theorem~\ref{thm:ChangeOfVariables}). We further incorporate considerations such as initial conditions, conserved quantities and known parameter invariants. Implemented in the Python package \texttt{desr}, this framework provides mathematical scientists with a scalable, principled tool for automated nondimensionalization. The algorithm and implementation does have limitations: it finds only scaling symmetries, not slow or fast time scales arising from sums of variables or other expressions from non-scaling actions. 

Through detailed applications in mathematical biology, we demonstrate how to recover classical nondimensionalizations and discover new ones beyond the reach of the Buckingham-$\pi$ theorem.  Looking ahead, there are possible extensions and directions, such as spatial model analysis \cite{parres2025contextual,plank2025random}. Concretely, these methods open the door to integrating nondimensionalization into symbolic-numeric workflows for ODE model reduction \cite{feliu2022quasi}, parameter inference \cite{liu2024approximate,meshkat2025structural}, and multiscale analysis \cite{astuto2023multiscale}, and suggest future links with algebraic geometry, symmetry-based machine learning \cite{desai2022symmetry}, and automated discovery pipelines in mathematical models \cite{liu2022hierarchical}.

\section*{Acknowledgments}

The authors thank Siddharth Unnithan Kumar for preliminary work and discussions that led to this manuscript.

\bibliographystyle{siamplain}
\bibliography{references}

\end{document}